\documentclass[twocolumn, amsthm]{autart}    % Enable this line and disable the 
\usepackage{graphicx}          % Include this line if your 
\usepackage{xcolor}
\usepackage{amsmath} % add option [fleqn] if want to left-align equations
\usepackage{amssymb}
\usepackage{amsfonts}
\usepackage{graphicx,url} 
\usepackage{bm}
\usepackage{float}
\usepackage{makecell}

\usepackage{enumitem}
\usepackage{subcaption}
\usepackage{wrapfig}
\usepackage{algorithm,algorithmicx} %linesnumbered is an option [ruled], algorithm is a warpper for captions and cross-references
\usepackage{algpseudocode} 
\usepackage{multirow}
\usepackage{diagbox}

\usepackage{cite}
\usepackage{soul}  % for using strike through
\usepackage{xcolor} % before hyperref if you set link colors
\usepackage[bookmarks=false,pdfencoding=auto,psdextra,unicode]{hyperref}
\hypersetup{
    colorlinks=true,
    pdfborderstyle={/S/U/W 1},
    linkcolor=blue,
    filecolor=magenta,      
    urlcolor= blue,
    linkbordercolor=red,% hyperlink border will be red
}
\usepackage[justification=centering]{caption}
\usepackage{subcaption}
\newcommand\rurl[1]{%
  \href{http://#1}{\nolinkurl{#1}}%
}

\usepackage{comment}

\let\abs\relax
\newcommand{\abs}[1]{\left\lvert#1\right\rvert}
\newcommand{\norm}[1]{\left\lVert#1\right\rVert}

\newcommand{\infnorm}[1]{\left\lVert#1\right\rVert_\infty}
\newcommand{\linfnorm}[1]{\left\lVert#1\right\rVert_{\mathcal{L}_\infty}}

\newcommand{\linfnormtruc}[2]{\left\lVert#1\right\rVert_{\mathcal{L}_\infty^{[0,#2]}}}
\newcommand{\lonenorm}[1]{\left\lVert#1\right\rVert_{\mathcal{L}_1}}

\def\lone{{\mathcal{L}_1}}

\def\lonew{${\mathcal{L}_1}$ }

\def\laplace#1{\mathfrak{L}\left[#1\right]}

\def \loneAC {$\lone$AC}

\def\nt {\textup{n}}

\def\tilx{\tilde{x}}

\def \hsigma{\hat{\sigma}}
\def\xin{x_\textup{in}}

\def\xr{x_\textup{r}}

\def \rt {\textup{r}}
\def \rhoin {\rho_\textup{in}}
\def \Zn {\mathbb{Z}_1^n}
\def \Zm {\mathbb{Z}_1^m}

\def \xn {x_\textup{n}}

\def \un {u_\textup{n}}

\def\mbR{\mathbb{R}}

\def\mbZ{\mathbb{Z}}

\def\mbZ{\mathbb{Z}}

\def\hsigma{\hat{\sigma}}

\def\mcC{\mathcal{C}}

\def\mcX{\mathcal{X}}
\def\mcH{\mathcal{H}}
\def\mcU{\mathcal{U}}
\def\mcG{\mathcal{G}}

\def\trieq{\triangleq}

\newtheorem{theorem}{Theorem}
\newtheorem{lemma}{Lemma}

\newtheorem{assumption}{Assumption}

\newtheorem{remark}{Remark} % 

\usepackage{accents}

\makeatletter
\def\cl@part {\@elt {chapter}}
\makeatother

\usepackage{cleveref}

\crefname{equation}{}{} %skip "eq" or "eqs". 
\crefname{lemma}{Lemma}{Lemmas}
\crefname{theorem}{Theorem}{Theorems}
\crefname{table}{Table}{Tables}
\crefname{figure}{Fig.}{Figs.}
\crefname{remark}{Remark}{Remarks}
\crefname{assumption}{Assumption}{Assumptions}
\crefname{section}{Section}{Sections}
\crefname{definition}{Definition}{Definitions}
\crefname{algorithm}{Algorithm}{Algorithms}
\crefname{proposition}{Proposition}{Propositions}

\makeatletter
\renewcommand*\env@matrix[1][\arraystretch]{%
  \edef\arraystretch{#1}%
  \hskip -\arraycolsep
  \let\@ifnextchar\new@ifnextchar
  \array{*\c@MaxMatrixCols c}}
\makeatother

\def\mbZ{\mathbb{Z}}

\def\mcC{{\mathcal{C}}}
\def\mcY{{\mathcal{Y}}}
\def\mcZ{{\mathcal{Z}}}

\def \mcU{{\mathcal{U}}}

\def \loneAC {$\lone$AC}

\def \xcheckin  {{\check x}_\textup{in}}
\newcommand{\Tau}{\mathrm{T}}

\def \xcheckn {{\check x}_\textup{n}}

\def \at {\textup{a}}

\usepackage{booktabs}

\begin{document}
\setlength{\abovedisplayskip}{3pt}
\setlength{\belowdisplayskip}{3pt}
\begin{frontmatter}
%\runtitle{Insert a suggested running title}  % Running title for regular 
                                              % papers but only if the title  
                                              % is over 5 words. Running title 
                                              % is not shown in output.

\title{Robust Adaptive MPC in the Presence of Nonlinear Time-Varying Uncertainties: An Uncertainty Compensation Approach\thanksref{footnoteinfo}} % Title, preferably not more 
                                                % than 10 words.

\thanks[footnoteinfo]{This work is supported by the Air Force Office of Scientific Research Grant (AFOSR) Grant AF FA9550-25-1-0274, the National Aeronautics and Space Administration (NASA) under Grant 80NSSC22M0070, and by the National Science Foundation (NSF) under Grants CMMI 2135925, CPS 2311085 and IIS 2331878. The material is this paper was partially presented at the 2024 American Control Conference (ACC), 10-12 July 2024, Toronto, ON, Canada. Corresponding author Pan Zhao.}

\author[1]{Ran Tao}\ead{rant3@illinois.edu},    % Add the 
\author[2]{Pan Zhao}\ead{pan.zhao@ua.edu},               % e-mail address 
\author[3]{Ilya Kolmanovsky}\ead{ilya@umich.edu},  % (ead) as shown
\author[1]{Naira Hovakimyan}\ead{nhovakim@illinois.edu}  % (ead) as shown

\address[1]{University of
Illinois Urbana-Champaign, Urbana, IL 61801, USA}  % Please supply                                              
\address[2]{University of Alabama, Tuscaloosa, AL 35487, USA}             % full addresses
\address[3]{University of Michigan, Ann Arbor, MI 48109, USA}        % here.
% \address[4]{University of
% Illinois at Urbana-Champaign, Urbana, IL 61801, USA}  
          
\begin{keyword}                           % Five to ten keywords,  
Model predictive control, Constrained control, Adaptive control, Robust control, Uncertain  systems.              % chosen from the IFAC 
\end{keyword}                             % keyword list or with the 
                                          % help of the Automatica 
                                          % keyword wizard

\begin{abstract}                          % Abstract of not more than 200 words.
This paper introduces an uncertainty compensation-based robust adaptive model predictive control (MPC) framework for linear systems with nonlinear time-varying uncertainties. The framework integrates an $\lone$ adaptive controller to compensate for the matched uncertainty and a robust feedback controller, designed using linear matrix inequalities, to mitigate the effect of unmatched uncertainty on target output channels. Uniform bounds on the errors between the system's states and control inputs and those of a nominal (i.e., uncertainty-free) system are derived. These error bounds are then used to tighten the actual system's state and input constraints, enabling the design of an MPC for the nominal system under these tightened constraints. Referred to as uncertainty compensation-based MPC (UC-MPC), this approach ensures constraint satisfaction while delivering enhanced performance compared to existing methods. Simulation results for a flight control example and a spacecraft landing on an asteroid demonstrate the effectiveness of the proposed framework.
\end{abstract}

\end{frontmatter}

\section{Introduction}\label{sec:introduction}
Real-world systems must often satisfy strict state and input constraints due to actuator limits, safety requirements, efficiency needs, and the necessity for collision and obstacle avoidance. Model Predictive Control (MPC) has become a leading control approach to handling constraints. MPC optimizes the behavior of a system over a finite horizon that is predicted by a model, while integrating constraints directly into the optimization process \cite{camacho2013mpc-book,rawlings2020mpc-book}. 
Though successfully deployed in many real world systems already, in more demanding applications, MPC must further account for model uncertainties, such as unknown parameters, unmodeled dynamics, and external disturbances, to ensure safety and enhance performance.
Although MPC, if properly formulated, possesses inherent robustness properties to small uncertainties \cite{yu2014inherent}, larger uncertainties need to be explicitly handled in the design. In particular, a variety of {\it robust MPC} have been proposed, see, e.g, {\cite[Section~3]{mayne2014model}}. Two major classes of robust MPC approaches are min-max MPC \cite{scokaert1998min,kerrigan2001robust-mpc} and tube-based MPC \cite{langson2004robust-mpc-tube,rakovic2005robust-mpc,mayne2006robust-mpc,mayne2011tube-mpc-nonlinear,lopez2019dynamic-tube-mpc, limon2008design}. Min-max MPC relies on min-max optimization of the control performance under the worst-case uncertainty to ensure robust constraint enforcement.
Tube-based MPC, on the other hand, relies on an ancillary feedback law $u = \bar{u} + K(x - \bar{x})$ to mitigate uncertainties and computes a ``tube", i.e., a disturbance invariant set, around the uncertainty-free nominal system trajectory, within which the system trajectory remains in the presence of uncertainties. The choice of $K$ crucially determines the size of this ``tube" and, consequently, also the conservativeness and feasibility of the resulting tightened constraints used in the nominal MPC problem. 
Early tube MPC formulations~\cite{rakovic2005robust-mpc,mayne2006robust-mpc} typically leveraged a fixed pre-stabilizing gain (e.g., an LQR gain) that renders the closed-loop dynamics stable, and assumed the uncertainty and the corresponding disturbance invariant set $Z$ are small enough compared with the original state and control input constraints. 
Later works, including \cite{limon2008design}, introduced systematic approaches to synthesize $K$.
In particular,~\cite{limon2008design} formulated this synthesis as a convex optimization problem using Linear Matrix Inequalities (LMI), where the gain $K$ and an invariant ellipsoid $Z$ are jointly optimized to explicitly minimize the size of the invariant set while guaranteeing that the tightened state and input constraint sets $\mathbb{X} \ominus Z$ and $\mathbb{U} \ominus KZ$ remain non-empty.
This LMI-based design ensures both robust constraint enforcement and feasibility of the nominal MPC constraints. 
A notable issue with the robust MPC approaches above is that they accommodate the worst-case uncertainties, and can often result in conservative control performance. This limitation can be mitigated by {\it adaptive MPC}, which builds upon robust MPC methods and reduces conservativeness through online identification of a set that captures the unknown parameters, ensuring robust constraint satisfaction \cite{adetola2011robust,sasfi2023robust,lorenzen2017adaptive}. Such methods usually require an a priori known \textit{parametric} structure for the uncertainties and, moreover, need the parameters to be \textit{time-invariant}. An adaptive MPC framework that accounts for nonlinear time-varying uncertainties was proposed in \cite{wang2017adaptive}, which incorporates online estimation and prediction of uncertainty sets to achieve improved performance compared with traditional robust MPC methods. However, because it remains rooted in min–max formulations, the closed-loop performance has room for further improvement.

A different approach to constrained control in the presence of uncertainty is \emph{uncertainty-compensation-based MPC}, which integrates uncertainty estimation and cancellation with MPC. In this paradigm, the control law includes two components: an auxiliary adaptive controller to compensate for the effect of uncertainties so that the actual system behaves close to an uncertainty-free nominal model, and an MPC with suitably tightened bounds to achieve desired performance and ensure closed-loop constraint enforcement. 
For example, \cite{sinha2022adaptive} proposed a learning-based robust predictive control algorithm that employs online function approximation to compensate for matched uncertainties and uses robust MPC with tightened constraints to address the residual uncertainties. However, the formulation in \cite{sinha2022adaptive} relies on a known parametric structure of the unknown dynamics, which limits generality. More critically, that approach remains conservative before a good model is learned to approximate the uncertainty, since the robust MPC design considers a worst-case scenario.
In another line of work, \cite{pereida2021robust} integrated an \loneAC, as described in \cite{hovakimyan2010L1-book}, with robust MPC for systems with matched uncertainties only. In \cite{pereida2021robust}, the \loneAC~cancels the matched uncertainties and provides uniform bounds on the state and input deviations between the compensated system and the nominal system. The authors of \cite{pereida2021robust} treated the state deviation as a bounded model uncertainty and leveraged robust MPC to address this issue. However, the uncertainty modeled in robust MPC represents dynamic uncertainty, which is fundamentally different from the state deviation bound provided by \loneAC. This conceptual inconsistency raises questions about the validity/appropriateness of the formulation in \cite{pereida2021robust} and highlights the need for a consistent and theoretically grounded framework that rigorously integrates \loneAC~with MPC.

A more systematic example of uncertainty-compensation-based constrained control is given in \cite{zhao2023integrated}, where an integrated reference governor and adaptive control architecture were proposed for linear systems with nonlinear time-varying matched uncertainties. The authors of \cite{zhao2023integrated} leveraged an \loneAC~to compensate for the matched uncertainties and established uniform bounds on the state and input deviations between the actual and nominal systems. These bounds are then used to tighten the state and input constraints of the actual system for reference command modification using the nominal system, ensuring closed-loop constraint satisfaction.
As compared to MPC, the reference governor is a simpler but more restrictive solution that adjusts the reference command of a nominal stabilizing controller to a safe reference command while assuming that the reference command is constant over the prediction horizon. The MPC, on the other hand, manipulates the full control input sequence, without requiring  a nominal stabilizing controller or assuming a  constant control input over the prediction horizon, by solving a dynamic optimization problem subject to constraints.
Motivated by the idea of \cite{zhao2023integrated}, a preliminary version of this work in \cite{tao2024robust} extended the
formulation in \cite{zhao2023integrated} to the MPC setting with both matched and unmatched uncertainties. An \loneAC~was again employed to cancel the matched component and to derive uniform error bounds under both types of uncertainty; these bounds were subsequently incorporated into the nominal MPC via constraint tightening.
However, \cite{tao2024robust} suffers from two major limitations. 
First, in \cite{tao2024robust}, we only analyzed the effect of unmatched uncertainties {\it without actively mitigating them}, which can lead to conservative uniform bounds. Since constraint tightening depends on these bounds, the lack of active mitigation results in conservative error estimates, which shrinks the feasible region of the nominal MPC, leading to degraded performance and potentially infeasible constraint sets. 
Second, in \cite{tao2024robust}, we did not provide guarantees of recursive feasibility or closed-loop stability, which are central to any MPC formulation. 

\textbf{Contributions}: This paper introduces a robust adaptive MPC framework based on uncertainty compensation, referred to as UC-MPC, for linear systems affected by nonlinear time-varying uncertainties that include both matched and unmatched components. 
The proposed framework integrates an \loneAC~to estimate and compensate for the matched uncertainties and a robust feedback controller, designed using LMI, to mitigate the influence of unmatched uncertainties on target output channels. To the best of our knowledge, this is the first framework that simultaneously mitigates both matched and unmatched nonlinear time-varying uncertainties, without imposing structural assumptions, within an MPC formulation. 
The resulting composite control law also reveals a structural connection to tube MPC~\cite{mayne2005robust}.
In addition, we introduce a scaling mechanism for unmatched uncertainties, enabling tighter deviation bounds and reducing constraint tightening. Together with the LMI-based feedback controller, the scaling mechanism enables the use of \loneAC~to derive meaningful uniform bounds on the deviations between the closed-loop system and the nominal uncertainty-free system, which are then embedded into a constraint-tightened MPC design for the nominal system. Moreover, we establish comprehensive theoretical guarantees for UC-MPC, including conditions for constraint satisfaction, performance, and most importantly, the recursive feasibility of the MPC problem and stability of the nominal closed-loop system. Finally, we validate UC-MPC through two case studies, on aircraft flight control and spacecraft soft landing under uncertain asteroid gravity fields, which demonstrate its ability to ensure both safety and efficiency in challenging uncertain environments.

{\it Notations}: Throughout the paper, $\mathbb{R}$, $\mathbb{R}_+$ and $\mbZ_+$ denote the set of real, non-negative real, and non-negative integer numbers, respectively. We use $\mathbb{R}^n$ and  $\mathbb{R}^{m\times n}$ to represent the $n$-dimensional real vector space and the set of real $m$ by $n$ matrices, respectively. The integer sets $\{i, i+1, \cdots\}$ and $\{1, 2,\cdots,n\}$ are denoted by $\mbZ_i$ and $\mbZ_1^n$, respectively. 
$I_n$ denotes a size $n$ identity matrix, and $0$ is a zero matrix of a compatible dimension.
$\norm{\cdot}$ and $\norm{\cdot}_\infty$ represent the $2$-norm and $\infty$-norm of a vector or a matrix, respectively. 
The $\mathcal{L}_\infty$- and truncated $\mathcal{L}_\infty$-norm of a function $h:\mathbb{R}_+ \rightarrow\mathbb{R}^n$ are defined as $\norm{h}_{\mathcal{L}_\infty}\triangleq \sup_{t\geq 0}\infnorm{h(t)}$ and $\linfnormtruc{h}{T}\triangleq \sup_{0\leq t\leq T}\infnorm{h(t)}$, respectively. The $\mathcal{L}_1$ norm of a transfer function $G(s)$ is defined as $\|G(s)\|_{\mathcal{L}_1} \trieq \|g\|_{\mathcal{L}_1} = \int_0^\infty \|g(t)\|dt$ where $g(t)$ is the impulse response of $G(s)$.
The Laplace transform of a function $x(t)$ is denoted by $x(s)\triangleq\mathfrak{L}[x(t)]$.
For any vector $x$, $x_i$ denotes the $i$th element of $x$. Given a positive scalar $\rho$, $\Omega(\rho)\trieq \{z\in \mbR^n: \infnorm{z}\leq \rho \}$ represents a high dimensional ball set of radius $\rho$ which centers at the origin with a compatible dimension $n$. For a high-dimensional set $\mcX$, the interior of $\mcX$ is denoted by $\textup{int}(\mcX)$ and the projection of $\mcX$ onto the $i$th coordinate is represented by $\mcX_i$. For given sets $\mcX,\mcY\subset \mbR^n$, $\mcX \oplus \mcY \trieq \{ x+y: x\in \mcX, y\in \mcY \}$ is the Minkowski set sum and $\mcX\ominus \mcY \trieq \{z: z+y\in \mcX, \forall y\in \mcY \}$ is the Pontryagin set difference. For any matrix $A$, $A^\top$ represent its transpose. 
\vspace{-3mm}

\section{Problem Statement}\vspace{-2mm}
Consider the following linear system with uncertainties 
\begin{equation}\label{eq:dynamics-uncertain-original}
\left\{ \begin{aligned}
  \dot x(t) &= Ax(t) + B(u(t) +  f(t,x(t)))+B_u w(t), \hfill \\
  y(t) &= Cx(t), \ x(0) = x_0,\\ 
\end{aligned}\right.    
\end{equation}
where $x(t)\in\mbR^n$, $u(t)\in \mbR^m$ and $y(t)\in\mbR^p$ represent the state, input, and output vectors, respectively, $x_0\in\mbR^n$ is the initial state, and matrices $A$, $B$, $B_u$, and $C$ are known with compatible dimensions. The matrix $B$ has full column rank, and $B_u$ is a matrix such that $\textup{rank}[B\  B_u]= n$  and $B_u^\top B =0 $. 
%The existence of $B_u$ makes control input $u$ not able to cancel any part of the uncertainty $w$. Moreover,
The function $f(t,x(t))\in \mbR^m$ represents the matched uncertainty dependent on both time and states, and $w(t)\in \mbR^{n-m}$ denotes the unmatched disturbance. Here, we assume full state availability for feedback use.

\begin{remark}[Matched and unmatched uncertainties]
\label{remark:B}
Consider an uncertain system represented by
\[
\dot{x}(t) = A x(t) + B u(t) + f_{\text{total}}(t,x), 
\]
where  $f_{\text{total}}(t,x)$ denotes the total uncertainties and disturbances. 
Since $B$ has full column rank, any vector $f_{\text{total}}(t,x)$ can be decomposed uniquely into two orthogonal components: one lying in the column space of $B$ (matched) and the other in its orthogonal complement (unmatched). Formally,
\[
f_{\text{total}}(t,x) = B f_m(t,x) + B_u f_{um}(t,x),
\]
where $B_u$ spans the null space of $B^\top$ (so that $B_u^\top B = 0$ and $\textup{rank}[B\  B_u]= n$). The matched component can be obtained by projection
\[
f_m(t,x) = B^\dagger f_{\text{total}}(t,x),
\]
where $B^\dagger=(B^\top B)^{-1}B^\top$ is the pseudo-inverse of $B$,
and the unmatched component is given by  $B_u^\dagger f_{\text{total}}(t,x)$. %the residual $f_{\text{total}} - B f_m$. 
This decomposition is not an additional modeling restriction, but rather a consequence of linear algebra given the nominal pair $(A,B)$. %It shows that the system class considered in this work is general: 
The total uncertainty $f_{\text{total}}(t,x)$ can be arbitrary in structure (e.g., nonlinear, time-varying, state-dependent), and the decomposition always exists. The matched component represents the portion of uncertainty that can be directly canceled by control inputs, while the unmatched component cannot be canceled. Our framework explicitly accounts for both components, thereby encompassing a broad class of linear systems with nonlinear, time-varying matched and unmatched uncertainties.% studied in robust adaptive MPC.
\end{remark}

\begin{assumption}\label{assump:lipschitz-bnd-fi}
Let $\mcZ \subset \mathbb{R}^n$ be a compact set. We assume:
\vspace{-4mm}
\begin{enumerate}[label=(\roman*)]
\item  
For each component $f_j(t,x)$ of $f(t,x)$, $j \in \{1,\dots,m\}$, there exist positive constants $L_{f_j,\mcZ}$, $l_{f_j,\mcZ}$, and $b_{f_j,\mcZ}$, such that for all $x,z \in \mcZ$ and all $t,\tau \geq 0$, we have
\begin{subequations}\label{eq:lipschitz-cond-and-bnd-fi}
\begin{align}
\hspace{-8mm}\abs{f_j(t,x) \!-\! f_j(\tau,z)}  & \!\le\! L_{f_j,\mcZ}\infnorm{x \!-\! z} \!+\! l_{f_j,\mcZ}\abs{t-\tau} , \label{eq:lipschitz-cond-fi}\\
\abs {f_j(t,x)}  &\le b_{f_j,\mcZ}, \label{eq:bnd-fi} 
\end{align}
\end{subequations}
which means that each component of $f$ is uniformly Lipschitz continuous in $(t,x)$ and uniformly bounded on $[0,\infty)\times \mcZ$.

\item 
For each component $w_j(t)$ of $w(t)$, $j \in \{1,\dots,n-m\}$, there exists a constant $b_{w_j} > 0$ such that
\begin{equation}\label{eq:bnd-wi}
    \|w_j(t)\|_\infty \le b_{w_j}, 
    \qquad \forall\, t \geq 0.
\end{equation}

\end{enumerate}
\end{assumption}

\begin{remark}
It is possible to extend the results to allow $w(t)$ to be state-dependent, $w(t,x)$, provided the boundedness assumption \eqref{eq:bnd-wi} holds. To simplify the exposition, we limit the derivations in this paper to time-dependence only. 
\end{remark}\vspace{-2mm}
Based on \cref{assump:lipschitz-bnd-fi}, it follows that for any $x,z \in \mcZ$ and $t,\tau\geq 0$, we have
\begin{subequations}\label{eq:lipschitz-cond-bnd-f}
\begin{align}
\infnorm{f(t,x) \!-\! f(\tau,z)}  & \!\le\!{L_{f,\mcZ}}\infnorm{x\! -\! z} \!+\! l_{f,\mcZ}\abs{t-\tau}, \label{eq:lipschitz-cond-f}\\
  \infnorm {f(t,x)}  &\le {b_{f,\mcZ}},  \label{eq:bnd-f} \\
  \infnorm {w(t)}  &\le b_{w}, \label{eq:bnd-w} 
\end{align}
\end{subequations}
where 
\begin{align}
\label{eq:Lf-lf-bf-defn}
        L_{f,\mcZ} &= \max_{j\in \Zm} L_{f_j,\mcZ},\; l_{f,\mcZ} = \max_{j\in \Zm} l_{f_j,\mcZ},\nonumber \\ 
        b_{f,\mcZ} &= \max_{j\in \Zm} b_{f_j,\mcZ}, \;
        b_w = \max_{j\in \mbZ_1^{n-m}} b_{w_j}.
\end{align}

\begin{remark} 
Unlike traditional adaptive control approaches, we specifically make assumptions on $f_j(t,x)$ and $w_j(t)$ instead of on $f(t,x)$ and $w(t)$ as in \cref{eq:Lf-lf-bf-defn} in order to derive an  {\it individual} bound on each state and on each  component of the adaptive control  input (see \cref{sec:l1acbounds} for details).
\end{remark}\vspace{-2mm}
Given the system \cref{eq:dynamics-uncertain-original}, \textbf{the goal of this paper} is to design a robust adaptive MPC scheme such that 
\vspace{-3mm}
\begin{enumerate}[topsep = 0pt,label = (\alph*)]
    \item[(O1)] state and control constraints are always satisfied,
\begin{equation}\label{eq:constraints}
    \begin{gathered}
  x(t) \in \mathcal{X},\quad u(t) \in \mathcal{U}, \quad \forall t\geq 0,\hfill 
\end{gathered} 
\end{equation}
where $\mcX\subset \mbR^n$ and $\mcU \subset\mbR^m$ are pre-specified convex and compact sets including the origin,

\item[(O2)] transient performance bounds for the states and outputs can be established to quantify the control performance, and
\item[(O3)] the effect of uncertainties $f(t,x)$ and $w(t)$ on the control performance is minimized.
\end{enumerate}
\vspace{-2mm}

\section{Overview of the UC-MPC Framework}\label{sec:overview}
\vspace{-2mm}
\begin{figure}[!b]
\begin{center}
     \includegraphics[width = 1\columnwidth]{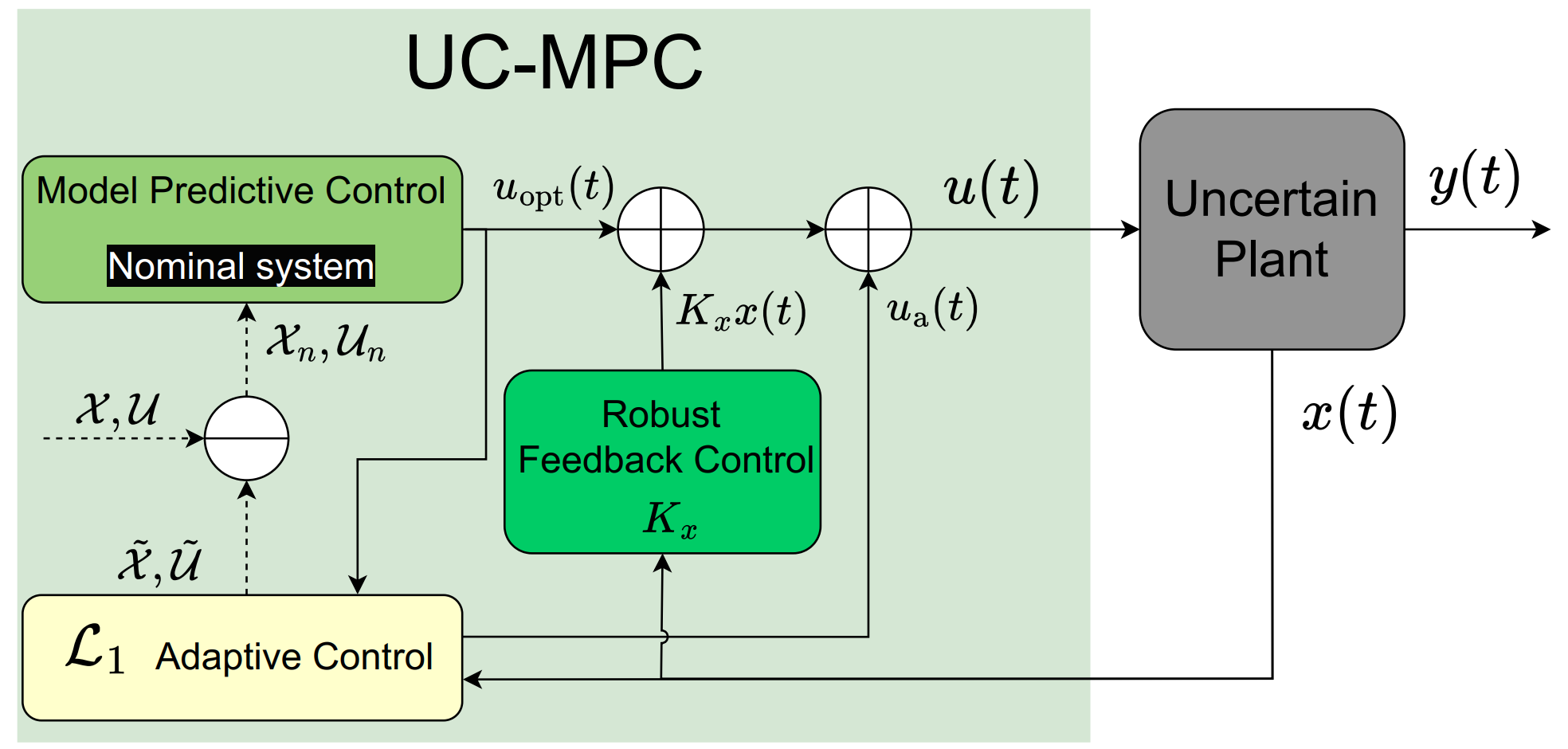}
     \vspace{-4mm}
     \caption{Diagram of the proposed UC-MPC framework. UC-MPC leverages an \loneAC~to estimate and compensate for the matched uncertainty and a robust feedback controller to mitigate the effect of the unmatched uncertainty.}
     \label{fig:framework}
     \vspace{-3mm}
 \end{center}   
 \end{figure}
We present the proposed UC-MPC framework in \cref{fig:framework}. It consists of a robust feedback control law $K_x x(t)$, an \loneAC~to generate $u_a(t)$, and an MPC design for the nominal system with tightened constraints $\mcX_{\nt}$ and $\mcU_{\nt}$ to generate $u_{\text{opt}}(t)$.
In the presence of the uncertainties $f(t,x)$ and $w(t)$, we leverage the feedback control $K_xx(t)$ to mitigate the effect of $w(t)$ on target output channels and an \loneAC~to generate $u_a(t)$ to compensate for the matched uncertainty $f(t,x)$, and to force the actual system to behave close to a nominal (i.e., uncertainty-free) system \cite{hovakimyan2010L1-book}. The designs of LMI-based $K_x$ and \loneAC~are discussed in \cref{sec:kx_design} and \cref{sec:l1acbounds}, respectively. 
Thus, \textbf{the overall control law for the system in \cref{eq:dynamics-uncertain-original} is summarized as}:
\begin{equation}\label{eq:total-control-law}
    u(t)=K_xx(t)+u_{\text{opt}}(t)+u_a(t),
\end{equation}
and we can reformulate the system \cref{eq:dynamics-uncertain-original} as
% \begin{equation}\label{eq:dynamics-uncertain}
% \left\{
% \begin{aligned}
%  \!\dot x(t) \!&=\! {A_m}x(t)\! + \!Bu_{\text{opt}}(t) \!+\! B(u_{\at}(t) \!+\! f(t,\!x(t)\!)\!)\!+\!B_uw(t) \\ 
%  \! y(t) & = Cx(t), \ x(0) =x_0, 
% \end{aligned}\right. 
% \end{equation}
\begin{align}
\!\dot x(t) \!&=\! {A_m}x(t)\! + \!Bu_{\text{opt}}(t) \!+\! B(u_{\at}(t) \!+\! f(t,\!x(t)\!)\!)\!+\!B_uw(t), \nonumber \\
y(t) &= Cx(t), \quad x(0) = x_0,
\label{eq:dynamics-uncertain}
 \end{align}
where ${A_m} \trieq A + B{K_x}$ is a Hurwitz matrix. 
In \cref{sec:l1acbounds}, we will demonstrate that \loneAC~provides uniform bounds on the errors between the states and control inputs of the real plant \cref{eq:dynamics-uncertain} and those of a nominal system below: 
\begin{equation}\label{eq:nominal system}
\left\{
\begin{aligned}
  \dot x_{\nt}(t) &= {A_m}\xn(t) + Bu_{\text{opt}}(t), \ \xn(0) =x_0, \hfill \\
     u_{\nt}(t)& =K_x \xn(t) + u_{\text{opt}}(t),
\end{aligned}\right.
\end{equation}
where $x_{\nt}$ and $u_{\nt}$ are the vectors of nominal states and inputs. To ensure robust constraint satisfaction of system \eqref{eq:dynamics-uncertain} with constraints in \eqref{eq:constraints}, we design an MPC for the nominal system with tightened constraints derived using the uniform bounds from \loneAC. Specifically, the uniform bounds enable us to achieve 
\begin{equation}\label{eq:x-xn-in-X-u-un-in-U}
    \begin{aligned}
    x(t)-\xn(t) \in \tilde {\mcX},  % \trieq \{z\in\mbR^n: \infnorm{z}\leq \alpha_x\}, \\
  \quad  u(t)- \un(t) \in  \tilde {\mcU}, \quad \forall t\geq 0,%\trieq \{z\in\mbR^m: \infnorm{z}\leq \alpha_u\},
    \end{aligned}
\end{equation}
where \( u(t) \) is defined in \cref{eq:total-control-law} and \( \un(t) \) is given in \cref{eq:nominal system}. The sets \( \tilde {\mcX} \) and \( \tilde {\mcU} \) are pre-computed hyperrectangular bounds derived from the bounds on the uncertainties \( f(t,x) \) and \( w(t) \), as well as the design parameters of \loneAC. These bounds allow us to define tightened constraints for the nominal system in \cref{eq:nominal system} as:
\begin{equation}\label{eq:Xn-Un-defn}
    \mcX_{\nt} \trieq \mcX \ominus \tilde {\mcX}, \quad \mcU _{\nt} \trieq \mcU \ominus \tilde {\mcU}.
\end{equation}
While constraint tightening ensures robust constraint satisfaction for the real system, it introduces trade-offs. Tightened constraints reduce the feasible region for the nominal MPC, leading to increased conservativeness. This conservativeness can limit the controller’s ability to fully utilize system resources, potentially resulting in degraded performance. To mitigate this, it is essential to derive meaningful performance bounds from \loneAC~that allow for effective yet minimally conservative constraint tightening.
In this paper, we propose three techniques to achieve meaningful performance bounds. Firstly, we design the feedback gain $K_x$ in \cref{sec:kx_design} to minimize the effect of the unmatched uncertainty on safety-critical or performance-critical state channels. Secondly, we introduce individual performance bounds in \cref{sec:individual bounds} for each state and input channel. Finally, we employ a scaling technique for unmatched uncertainties in \cref{sec:scaling of w} to further refine the performance bounds, ensuring less tight constraints are applied. These techniques collectively enable a more balanced and efficient trade-off between robustness and performance.
\vspace{-3mm}
\subsection{Equivalent Nominal System and Connection with Tube MPC}
\label{sec:tube mpc}
\vspace{-3mm}
The nominal dynamics in \eqref{eq:nominal system} can be equivalently expressed as  
\begin{equation}\label{eq:nominal system new}
\left\{
\begin{aligned}
  \dot x_{\nt}(t) &= A x_{\nt}(t) + B \bar u_{\text{opt}}(t), \quad x_{\nt}(0) = x_0, \\
  u_{\nt}(t) &= \bar u_{\text{opt}}(t),
\end{aligned}\right.
\end{equation}
where the transformed control input is defined as $\bar u_{\text{opt}}(t) = u_{\text{opt}}(t) + K_x x_{\nt}(t)$.  
By substituting $\bar u_{\text{opt}}(t)$ into the overall control law in \eqref{eq:total-control-law}, we obtain  
\begin{align}
\label{eq:total control law new}
    u(t) = \bar u_{\text{opt}}(t) + K_x(x(t)-x_{\nt}(t)) + u_a(t),
\end{align}
which has a similar structure to a \textit{tube-based MPC control law}~\cite{mayne2005robust}, except for the adaptive component $u_a$.  

{\bf Why use two nominal systems?} We first introduce the nominal system in its original form \eqref{eq:nominal system} because the performance analysis of \loneAC---specifically the derivation of its uniform performance bounds ---relies on comparing this nominal system with the actual closed-loop dynamics, where their difference arises solely from system uncertainties. These performance bounds will be discussed in detail in \cref{sec:l1acbounds}.
The transformation in \eqref{eq:nominal system new} reveals a {\it close connection between tube MPC and our proposed UC-MPC}. In both frameworks, the feedback term $K_x(x - x_{\nt})$ serves to keep the actual trajectory within a bounded neighborhood of the nominal one, while the adaptive component $u_a$ in UC-MPC actively compensates for matched uncertainties to further improve the performance. Owing to this adaptive compensation, the resulting deviation tube in UC-MPC is expected to be significantly tighter than that of the standard tube MPC.
This reformulation also allows the MPC input $u_{\text{opt}}$ in \eqref{eq:nominal system} to be computed directly from \eqref{eq:nominal system new}, which defines a {\it standard uncertainty-free MPC problem} as in \cite{rawlings2020mpc-book,chen1998quasi,grune2017nonlinear}. Consequently, the stability and recursive feasibility of the nominal MPC can be established using existing MPC analysis tools, which will be discussed in detail in \cref{sec:stability_and_recursive}. 
To summarize, the nominal system in \eqref{eq:nominal system} and control law in \eqref{eq:total-control-law} serve as the foundation for all performance bounds analysis in \cref{sec:l1acbounds}, while the equivalent transformed form in \eqref{eq:nominal system new} and control law in \eqref{eq:total control law new} are introduced for formulating the nominal MPC problem and highlighting the connection between UC-MPC and tube MPC~\cite{mayne2005robust}, as well as for the stability and recursive feasibility analysis of the nominal MPC problem within the UC-MPC framework.

\section{Unmatched Uncertainty Attenuation}
\label{sec:kx_design}
\vspace{-2mm}
An effective approach to reduce the constraint tightening on selected critical output channels is to design the feedback gain $K_x$ to minimize the influence of $w(t)$ on those channels. This can be achieved by minimizing the peak-to-peak gain (PPG) of the unmatched uncertainty on those channels, thereby limiting its adverse effects and allowing for less tightened constraints without compromising safety or performance. It is important to note that such an approach may reduce the tightening on critical channels at the cost of increasing the tightening on other channels.

Consider the following system:
\begin{subequations}\label{eq:kx_design-total}
    \begin{align}
        \dot x(t) &= Ax(t) + Bu(t) +B_uw(t), \; x(0) =x_0, \label{eq:kx_design} \\
          z(t) & = C_zx(t)+D_zu(t)   \label{eq:z}
    \end{align}
\end{subequations}where
$x(t)\in\mbR^n$ and $u(t)\in \mbR^m$ represent the state and input vectors, respectively, $w(t)\in\mbR^p$ represents the unmatched uncertainty, $z(t)$ denotes output channels whose bounds we aim to minimize, and matrices $A,B$, and $B_u$ are from \eqref{eq:dynamics-uncertain-original}. Note that when the matched uncertainty is perfectly eliminated, the original system \eqref{eq:dynamics-uncertain-original} is identical to the system in \eqref{eq:kx_design}. 
With the feedback control law $u(t) = K_xx(t)$, we achieve the following closed-loop system:
\begin{equation}\label{eq:kx_closed}
H\left\{
\begin{aligned}
 \dot x(t) &= (A+BK_x)x(t) +B_uw(t), \\ 
 z(t) & = (C_z+D_zK_x)x(t),\; x(0) =x_0, 
\end{aligned}\right. 
\end{equation}
where we use $H$ to represent the transfer function of this system with input $w(t)$ and output $z(t)$. Now, our goal is to minimize PPG of $H$, i.e., $\|H\|_{\text{peak}}$, by designing $K_x$, where $\|H\|_{\text{peak}} = \sup \{\|z(\tau)\|:x(0) =0, \tau\ge 0, \|w(t)\| \le 1 \text{ for } t\ge 0\}$.
The authors of \cite{scherer1997multiobjective} proposed to verify the upper bounds for $\|H\|_{\text{peak}}$ as follows.
\begin{lemma}(\cite[Section III.F]{scherer1997multiobjective}) \label{lemma:1}
The system in \eqref{eq:kx_closed} is internally stable and has a PPG bound of $\beta > 0$, if there exist $\lambda > 0$, $\mu > 0$, and a symmetric matrix $P$, s.t. the following two inequalities hold:
\begin{align}
    \begin{bmatrix}
        (A\!+\!BK_x)^\top P \!+\!P(A\!+\!BK_x)\!+\!\lambda P & PB_u \\ B^\top_uP & -\mu I
    \end{bmatrix} 
    < 0, \label{eq:theorem1_1}\\
    \begin{bmatrix}
        \lambda P & 0 & (C_z+D_zK_x)^\top \\ 0 & (\beta-\mu)I & 0 \\ C_z+D_zK_x & 0 & \beta I
    \end{bmatrix} 
    > 0. \label{eq:theorem1_2}
\end{align}
\end{lemma}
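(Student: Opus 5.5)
The plan is the classical Lyapunov/S-procedure argument for peak-to-peak (invariant-ellipsoid) bounds. Write $A_{cl}\trieq A+BK_x$, $C_{cl}\trieq C_z+D_zK_x$ and $V(x)=x^\top P x$.

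\textbf{Step 1: $P>0$ and internal stability.} The (1,1) block of \eqref{eq:theorem1_2} gives $\lambda P>0$. Since $\lambda>0$, this yields $P>0$, so $V$ is a valid Lyapunov function. The (1,1) block of \eqref{eq:theorem1_1} gives $A_{cl}^\top P+PA_{cl}<-\lambda P<0$. Hence $A_{cl}$ is Hurwitz, which is internal stability.

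\textbf{Step 2: dissipation inequality.} Multiply \eqref{eq:theorem1_1} on the left by $[x^\top\ w^\top]$ and on the right by its transpose. Along trajectories of \eqref{eq:kx_closed} this gives
\begin{equation*}
\dot V+\lambda V-\mu\, w^\top w<0
\end{equation*}
for all $(x,w)\neq 0$. Take $x(0)=0$ and $\|w(t)\|\le 1$, so that $\dot V\le -\lambda V+\mu$. The comparison lemma then gives $V(x(t))\le \frac{\mu}{\lambda}(1-e^{-\lambda t})\le \frac{\mu}{\lambda}$ for all $t\ge 0$. Equivalently, $\lambda V(x(t))\le \mu$, so the state stays in the invariant ellipsoid $\{x:\lambda x^\top P x\le\mu\}$.

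\textbf{Step 3: output bound via Schur complement.} Since $\beta I>0$, a Schur complement of \eqref{eq:theorem1_2} with respect to its (3,3) block gives
\begin{equation*}
\begin{bmatrix}\lambda P-\tfrac1\beta C_{cl}^\top C_{cl} & 0\\ 0 & (\beta-\mu)I\end{bmatrix}>0.
\end{equation*}
Multiplying by $[x^\top\ w^\top]$ on both sides yields
\begin{equation*}
\tfrac1\beta\|z\|^2<\lambda x^\top P x+(\beta-\mu)\|w\|^2 .
\end{equation*}
This also shows $\beta>\mu$. Now insert $\lambda x^\top Px\le\mu$ from Step 2 and $\|w\|\le1$. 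This gives $\|z(t)\|^2<\beta(\mu+\beta-\mu)=\beta^2$, i.e., $\|z(t)\|<\beta$ for all $t$. Taking the supremum over admissible $w$ and $\tau$ gives $\|H\|_{\text{peak}}\le\beta$.

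\textbf{Main obstacle.} The delicate step is Step 2 at time instants where $w=0$ and $x=0$ simultaneously, since the strict inequality degenerates there. I expect this to be harmless: the non-strict bound $\dot V\le-\lambda V+\mu\|w\|^2$ holds everywhere and suffices for the comparison argument.

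A second point to settle is which vector norm the peak bound refers to. The argument above uses the Euclidean norm, matching $w^\top w$ and $z^\top z$, which is consistent with the definition of $\|H\|_{\text{peak}}$ via $\|\cdot\|$.

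Finally, $w$ need only be measurable and bounded. Then $V$ is absolutely continuous, and the inequality $\dot V\le-\lambda V+\mu$ holds almost everywhere, which is enough for the comparison lemma.
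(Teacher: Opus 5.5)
Your proof is correct and is essentially the standard invariant-ellipsoid argument of the cited reference \cite{scherer1997multiobjective}, which the paper invokes without reproducing it. There, as in your Steps 2 and 3, the first LMI yields $\dot V+\lambda V-\mu\|w\|^2\le 0$, hence $\lambda V\le\mu$ from $x(0)=0$, and the Schur complement of the second LMI turns this into $\|z\|<\beta$. Since there is no feedthrough from $w$ to $z$ in \eqref{eq:kx_closed}, you could even drop the $w$-terms in Step 3 and read off $\|z\|^2\le\beta\lambda V\le\beta\mu<\beta^2$ directly.
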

\vspace{-3mm}
From \cref{lemma:1}, we propose the following approach to design the feedback gain $K_x$ based on an LMI framework. It relies on the following modification of \cref{lemma:1}:
\begin{lemma} \label{lemma:ppg}
 The system in \eqref{eq:kx_design} is internally stable and the closed-loop system \eqref{eq:kx_closed} has a PPG bound of $\beta > 0$ with the controller $u = \mathcal{Y}\mathcal{V}^{-1}x$, if there exit a matrix $\mathcal{Y}\in \mbR^{m\times n}$, a positive definite matrix $\mathcal{V} \in \mbR^{n\times n}$, $\lambda > 0$, and $\mu > 0$, such that
    \begin{align}
        \begin{bmatrix}
           \left\langle A\mathcal{V}+B\mathcal{Y}
        \right\rangle + \lambda \mathcal{V} & B_u \\ B^\top_u & -\mu I
        \end{bmatrix} < 0, \label{eq:lemma1_1}\\
        \begin{bmatrix}
            \lambda \mathcal{V} & 0 & (C_z\mathcal{V}+D_z\mathcal{Y})^\top \\ 0 & (\beta-\mu) I & 0 \\ C_z\mathcal{V}+D_z\mathcal{Y} & 0 & \beta I  
        \end{bmatrix} > 0, \label{eq:lemma1_2}
    \end{align}
    where $\langle V\rangle$ is a shorthand notation for $V^\top+V$.
\end{lemma}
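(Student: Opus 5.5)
The plan is to reduce \cref{lemma:ppg} to \cref{lemma:1} with the standard congruence transformation and the change of variables $\mathcal{V}=P^{-1}$, $\mathcal{Y}=K_x\mathcal{V}$. Given feasible $(\mathcal{Y},\mathcal{V},\lambda,\mu)$ for \eqref{eq:lemma1_1}--\eqref{eq:lemma1_2}, define $K_x\trieq \mathcal{Y}\mathcal{V}^{-1}$ and $P\trieq \mathcal{V}^{-1}$. Since $\mathcal{V}$ is positive definite, $P$ is symmetric and positive definite, so it is an admissible choice in \cref{lemma:1}. We then show that \eqref{eq:theorem1_1} and \eqref{eq:theorem1_2} hold with the same $\lambda,\mu,\beta$. \cref{lemma:1} then gives internal stability of \eqref{eq:kx_closed} and the PPG bound $\beta$ for the closed loop under $u=\mathcal{Y}\mathcal{V}^{-1}x$.

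For \eqref{eq:theorem1_1}, note that $A\mathcal{V}+B\mathcal{Y}=(A+BK_x)\mathcal{V}$. Hence the $(1,1)$ block of \eqref{eq:lemma1_1} equals $\mathcal{V}(A+BK_x)^\top+(A+BK_x)\mathcal{V}+\lambda\mathcal{V}$. Apply the congruence $T=\mathrm{diag}(P,I)$, which is nonsingular and preserves negative definiteness. The product $T^\top(\cdot)T$ turns the $(1,1)$ block into
\[
P\mathcal{V}(A+BK_x)^\top P+P(A+BK_x)\mathcal{V}P+\lambda P\mathcal{V}P=(A+BK_x)^\top P+P(A+BK_x)+\lambda P,
\]
the off-diagonal block into $PB_u$, and leaves $-\mu I$ unchanged. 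This is exactly \eqref{eq:theorem1_1}.

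For \eqref{eq:theorem1_2}, use the congruence $\mathrm{diag}(P,I,I)$ on \eqref{eq:lemma1_2}. The $(1,1)$ block becomes $P(\lambda\mathcal{V})P=\lambda P$. The $(3,1)$ block becomes $(C_z\mathcal{V}+D_z\mathcal{Y})P=(C_z+D_zK_x)\mathcal{V}P=C_z+D_zK_x$, and its transpose appears in the $(1,3)$ position. The remaining blocks are unchanged, so the result is exactly \eqref{eq:theorem1_2}, and positive definiteness is preserved.

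The only delicate point is positivity of $P$. In \cref{lemma:1} this is not assumed but implied: the $(1,1)$ block of \eqref{eq:theorem1_2} is $\lambda P>0$. Here it holds directly because $\mathcal{V}>0$, and internal stability of $A+BK_x$ follows from \cref{lemma:1}, or directly from the Lyapunov inequality $(A+BK_x)^\top P+P(A+BK_x)<-\lambda P<0$ obtained from the $(1,1)$ block of \eqref{eq:theorem1_1}. I expect no real obstacle; the main care is tracking transposes in the congruence.
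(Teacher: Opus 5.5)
Your proof is correct and follows the paper's argument: the paper also uses the congruences $\mathrm{diag}(\mathcal{V}^{-1},I)$ and $\mathrm{diag}(\mathcal{V}^{-1},I,I)$ with $P=\mathcal{V}^{-1}$ and $K_x=\mathcal{Y}\mathcal{V}^{-1}$ to turn \eqref{eq:lemma1_1}--\eqref{eq:lemma1_2} into \eqref{eq:theorem1_1}--\eqref{eq:theorem1_2}, and then invokes \cref{lemma:1}. Your explicit block-by-block check, the remark that $P>0$, and the direct Lyapunov argument for $A+BK_x$ being Hurwitz spell out details the paper leaves implicit.
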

\vspace{-3mm}
\begin{proof}
Multiplying \eqref{eq:lemma1_1} by 
$\begin{bmatrix}
        \mathcal{V}^{-1} & 0\\ 0 & I
    \end{bmatrix}$ and its transpose from left and right gives \eqref{eq:theorem1_1}, where $P = \mathcal{V}^{-1}$ and $K = \mathcal{Y}\mathcal{V}^{-1}$. Similarly, multiplying \eqref{eq:lemma1_2} by 
    $\textup{diag}(\mathcal{V}^{-1},I,I)$
    and its transpose from left and right gives \eqref{eq:theorem1_2}. 
\end{proof}\vspace{-3mm}
Thus, we can leverage \cref{lemma:1} to achieve the internal stability and PPG bound $\beta$. Based on Lemma \ref{lemma:ppg}, we can derive matrices $\mathcal{Y}$ and $\mathcal{V}$ to minimize $\beta$ for a given $\lambda$ by solving the following LMI:
\begin{equation}
    \label{eq:ppg_op}
    \min_{\beta, \mu >0,\mathcal{Y},\mathcal{V}>0} \beta \;
    \text{s.t. \cref{eq:lemma1_1} and \cref{eq:lemma1_2} hold}. 
\end{equation}
By setting $K_x = \mathcal{Y}\mathcal{V}^{-1}$, we achieve the minimized PPG bound $\beta$ for system $H$ and less tightened constraint for channels in $z(t)$. Here, we have to pick a $\lambda$ first since \cref{eq:theorem1_1} and \cref{eq:theorem1_2} are only linear if we fix $\lambda$ \cite{scherer1997multiobjective}. For the selection of $\lambda$, a linear search between 0 to some predefined maximum value can be implemented for an improved performance.
% \vspace{-3mm}

\begin{remark}[Design of $K_x$]
\cref{lemma:ppg} provides one method to synthesize a state-feedback controller $u=K_x x $ to minimize the effect of bounded disturbances. There are other methods available in the literature, e.g., \cite{abedor1996linear} and \cite{khlebnikov2011optimization}, both of which leverage inescapable ellipsoids to over-approximate the reachable sets of the disturbed system \cref{eq:kx_design-total} and provide necessary and sufficient conditions for ensuring that system trajectories remain in prescribed ellipsoids. These methods can be seamlessly integrated into our proposed UC-MPC framework to replace the approach stated in \cref{lemma:ppg}, and can potentially lead to less conservative performance. 
\end{remark}
\vspace{-3mm}
\section{\loneAC~with Uniform Performance Bounds}\label{sec:l1acbounds}\vspace{-2mm}
In this section, we derive the uniform performance bounds on the errors between the states and control inputs of the real plant \cref{eq:dynamics-uncertain} with \loneAC~and those of the nominal system \cref{eq:nominal system}. A preliminary result is presented in \cite{zhao2023integrated}, in which only matched uncertainty is considered. Here, we extend the result to systems with both matched uncertainty and unmatched uncertainties, and provide two techniques to achieve meaningful performance bounds and less conservative results. In this section, we first present three key components of \loneAC~and the requirements of \loneAC~design in order to achieve vector-$\infty$ norm performance bounds. Building on these elements, we present a system transformation to achieve individual element bounds in \cref{sec:individual bounds}, and introduce a scaling method in \cref{sec:scaling of w} to obtain tighter results and reduce conservativeness in constraint tightening.

There are three key elements in an \loneAC:  a low-pass filter, a state predictor, and an estimation law.
The low-pass filter $\mcC(s)$ decouples the estimation loop from the control loop, enabling fast adaptation without sacrificing the robustness \cite{hovakimyan2010L1-book}. In practice, we can design the {\bf low-pass filter} $\mcC(s)$ as a first-order transfer function matrix
\begin{equation}\label{eq:filter-defn}
\mcC(s) = \textup{diag}(\mcC_1(s), \dots, \mcC_m(s)), \ \mcC_j(s) \trieq \frac{k_f^j}{(s+ k_f^j)},
\end{equation}
where  $k^j_f$ ($j\in\Zm$) represents the bandwidth for the $j$th input channel. 
Given any compact set $\mcX_0$, the stability of the closed-loop system can be guaranteed when there exists a positive constant $\rho_r$ and a (small) positive constant $\gamma_1$ such that
\begin{subequations}\label{eq:l1-stability-condition-w-extra-Lf}
\begin{align}
\lonenorm{\mcG_{xm} (s )} b_{f,\mcX_r}  & <  \rho_r   -  \lonenorm{\mcH_{xm}(s)} \linfnorm{u_{\text{opt}}} \nonumber \\&- \lonenorm{\mcH_{xu}(s)}b_w - \rho_{\textup{in}},\label{eq:l1-stability-condition} \\ 
\lonenorm{\mcG_{xm}(s)}L_{f,\mcX_a}&<1,\label{eq:l1-stability-condition-Lf}
\end{align}
\end{subequations}
where 
\begin{subequations}\label{eq:l1-norms-defns}
\begin{align} 
 \rho  & \trieq \rho_r + \gamma_1, \label{eq:rho-defn} \\
 \mcX_r & \trieq \Omega(\rho_r), \   \mcX_a  \trieq  \Omega(\rho). \label{eq:Xr-Xa-defn}\\
 \mcH_{xm}(s) & \trieq\! (sI_n \!-\! A_m)^{-1}\! B,\label{eq:Hxm_defn}\\
 \mcH_{xu}(s) &\trieq\! (sI_n \!-\! A_m)^{-1} \!{B}_u, \label{eq:Hxu-defn}\\
 \mcG_{xm}(s) & \trieq\! \mcH_{xm}(s)(I_m-\mcC(s)),
   \label{eq:Gm-defn} \\
 \rho_{\textup{in}} &\trieq \lonenorm{s(sI_n-A_m)^{-1}}\max_{x_0\in \mcX_0}\infnorm{x_0}. \label{eq:rho_in}
\end{align}
\end{subequations}

\begin{remark}
   Since $A_m$ is Hurwitz and $\mcC(s)$ is a strictly proper transfer function, all the \lonew norms of transfer functions in \cref{eq:l1-norms-defns} are bounded.
\end{remark}
\vspace{-2mm}

For the closed-loop system in \cref{eq:dynamics-uncertain}, we consider the {\bf state predictor} as 
\begin{multline}\label{eq:state-predictor}
\dot {\hat x} (t)\!=\! A_m x(t) \! +\! B u_{\text{opt}}(t) \!+\! B u_{\at} (t)\! +\! \hsigma(t) \!+\!  A_e \tilx(t),% + B^\perp \hsigma_2(t) \\
\end{multline}
where $\hat x(0) = x_0$, $\tilx(t) = \hat x(t) - x(t)$ is the prediction error, ${A_m} \trieq A + B{K_x}$ is Hurwitz, $A_e$ is a custom Hurwitz matrix, and $\hsigma (t)$ is the estimate of the lumped uncertainty, $Bf(t,x)+{B}_u{w}(t)$. We update the estimate $\hsigma(t)$ based on the following piecewise-constant {\bf estimation law} (similar to that in \cite[Section~3.3]{hovakimyan2010L1-book}):
\begin{equation}\label{eq:adaptive_law}\left\{
\begin{aligned}
&  \hsigma(t) &  &  \hspace{-4mm}=  
\hsigma(iT)
    , \quad t\in [iT, (i+1)T), \\
& \hsigma(iT) &
& \hspace{-4mm}= - \Phi^{-1}(T)e^{A_eT}\tilde{x}(iT),
\end{aligned}\right.
\end{equation} where $T$ is the estimation sampling time,  $i=0,1,2,\dots$, and $\Phi(T)\trieq A_e^{-1}\left(e^{A_eT}\!-I_n\right)$. The {\bf adaptive control law}  is defined as
\begin{equation}\label{eq:l1-control-law}
   u_{\at}(s) = -\mcC(s)\laplace{B^\dagger \hsigma(t)},
\end{equation}
where $u_a(s)$ is the Laplace transform of $u_a(t)$, $B^\dagger=(B^\top B)^{-1}B^\top$ is the pseudo-inverse of $B$.
As we explained in \cref{remark:B}, $B^\dagger \hsigma(t)$ represents the estimate of the matched uncertainty $f(t,x(t))$, and thus the adaptive control law \cref{eq:l1-control-law} is aimed at canceling the matched uncertainty only within the bandwidth of the filter $\mcC(s)$.
We define the following constants for future reference:\vspace{-2mm}
\begin{subequations}\label{eq:alpha_012-gamma0-T-defn}
\begin{align}
    \bar \alpha_0(T)  \trieq& \max_{t\in [0,T]} \int_0^t\infnorm{e^{A_e(t-\tau)}B}d\tau,\label{eq:alpha_0-defn}\\
    \bar \alpha_1(T)  \trieq& \max_{t\in [0,T]}\int_0^t\infnorm{e^{A_e(t-\tau)}{B}_u}d\tau,\label{eq:alpha_1-defn} \\
    \bar \alpha_2(T) \trieq& \max_{t\in [0,T]} \infnorm{e^{A_e t}},
   \label{eq:alpha_2-defn} \\
      \bar \alpha_3(T) \trieq& \max_{t\in [0,T]}\int_0^t \infnorm{e^{A_e (t-\tau)}\Phi^{-1}(T)e^{A_eT}}d\tau,\label{eq:alpha_3-defn}
         \\
        \gamma_0(T) \trieq& (b_{f,\mcX_a} \bar\alpha_0(T)\!+\!\bar \alpha_1(T)b_w)\!(\bar \alpha_2(T) \!+\! \bar \alpha_3(T)\!+\!1\!)\!\label{eq:gamma_0-T-defn} 
        \\ \hspace{-4mm}    \rho_{ur} \trieq&  \lonenorm{\mcC(s)} b_{f,\mcX_r}, \label{eq:rho_ur-defn} \\
\hspace{-2mm}  \gamma_2  \trieq &  \lonenorm{\mcC(s)}\! L_{f,\mcX_a} \gamma_1 \nonumber \\ &\!+\! \lonenorm{\mcC(s)B^\dagger(sI_n\!-\!A_e)}\!\gamma_0(T), \label{eq:gamma2-defn} \\
% \hspace{-4mm} \rho \trieq & \rho_r+\gamma_1 \label{eq:rho-defn}, \\ 
 \hspace{-4mm}  \rho_{u_{\at}} \trieq&  \rho_{ur} +\gamma_2, \label{eq:rho-u-defn}
\end{align}
\end{subequations}
where $\gamma_1$ is introduced in \cref{eq:rho-defn}. Based on the Taylor series expansion of $e^{A_eT}$, we have that 
\\
$\lim_{T\rightarrow 0} \int_0^T \infnorm{\Phi^{-1}(T)}d\tau$ is bounded, which further implies that $ \lim_{T\rightarrow 0} \bar \alpha_3(T)$ is bounded. 
Since $\lim_{T\rightarrow 0}\bar \alpha_0(T) =0 $, $\lim_{T\rightarrow 0}\bar \alpha_1(T) =0$, $\lim_{T\rightarrow 0} \bar \alpha_2(T)=0$, $b_{f,\mcX_a}$ is bounded for a compact $\mcX_a$, $b_w$ is bounded, and $\lim_{T\rightarrow 0} \bar \alpha_3(T)$ is bounded, we have
\begin{equation}\label{eq:gammaT-to-0}
    \lim_{T\rightarrow 0} \gamma_0(T) = 0.
\end{equation}
Considering \cref{eq:gammaT-to-0,eq:l1-stability-condition-Lf}, it is always feasible to find a small enough $T>0$ such that
\begin{equation}\label{eq:T-constraint}
\frac{\lonenorm{\mcH_{xm}(s)\mcC(s)B^\dagger(sI_n-A_e)}}{1-\lonenorm{\mcG_{xm}(s)}L_{f,\mcX_a}}\gamma_0(T) < \gamma_1,
\end{equation}
with $\mcX_a$ in \cref{eq:Xr-Xa-defn} and $B^\dagger$ being pseudo-inverse of $B$. 

\vspace{-3mm}
Following the typical analysis steps of \loneAC \cite{hovakimyan2010L1-book}, we introduce the following reference system for performance analysis: 
\begin{align}
  \dot{x}_\rt(t) &\! =\! {A_m}x_{\textup{r}}(t) \!\!+\!\! {B}u_{\text{opt}}(t) \!\!+\! \!B({u_{\textup{r}}}(t)\! \!+\!\! f(t,\!x_{\textup{r}}(t)))\!\!+\!\!{B}_u{w}(t), \hfill \nonumber   \\
  {u_{\textup{r}}}(s) & \! =\! -\mcC(s)\laplace{f(t,x_{\textup{r}}(t))}, \quad\xr(0)\! =\!x_0.\label{eq:ref-system} 
  %\\   y_{\textup{r}}(t) & \! =\! Cx_{\textup{r}}(t),, \hfill 
\end{align}
\begin{lemma}\label{them:x-xref-bnd}
{Given the uncertain system \cref{eq:dynamics-uncertain} subject to  \cref{assump:lipschitz-bnd-fi}, the reference system \cref{eq:ref-system} subject to the conditions \cref{eq:l1-stability-condition-Lf,eq:l1-stability-condition} with a constant $\gamma_1>0$, and an \loneAC~defined via \cref{eq:state-predictor,eq:adaptive_law,eq:l1-control-law}, subject to the sample time constraint  \cref{eq:T-constraint}, we have }
\begin{subequations}
\begin{align}
    \linfnorm{x} & \leq \rho, \label{eq:x-bnd}\\
    \linfnorm{u_{\at}} & \leq \rho_{u_{\at}}, \label{eq:ua-bnd}\\
   % \linfnorm{\tilde{x}} &\leq \bar{\gamma}_0, \label{eq:sigmatilde-xtilde-bnd}\\
    \linfnorm{x_{\textup{r}}-x} &\leq \gamma_1, \label{eq:xref-x-bnd}\\
    \linfnorm{u_{\textup{r}}-u_{\at}} &\leq \gamma_2, \label{eq:uref-u-bnd}
    % \\    \linfnorm{y_{\textup{r}}-y} &\leq\infnorm{C} \gamma_1,\label{eq:yref-y-bnd}
\end{align}
\end{subequations}
where $\rho$, $\rho_{u_{\at}}$, and $\gamma_2$ are defined in \cref{eq:rho-defn}, \cref{eq:rho-u-defn}, \cref{eq:gamma2-defn}, respectively. 
\end{lemma}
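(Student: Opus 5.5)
The proof follows the standard two-stage \loneAC{} argument of \cite{hovakimyan2010L1-book,zhao2023integrated}. First, the reference system \cref{eq:ref-system} is shown to be bounded. Second, a contradiction argument on the first time the actual state leaves a $\gamma_1$-neighborhood of the reference state gives the error bounds \cref{eq:xref-x-bnd,eq:uref-u-bnd}. The bounds \cref{eq:x-bnd,eq:ua-bnd} then follow from the triangle inequality.

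\emph{Step 1 (reference system).} We show $\linfnorm{\xr}<\rho_r$ and $\linfnorm{u_\rt}\le\rho_{ur}$. In the Laplace domain,
\[
\xr(s)=\mcG_{xm}(s)\laplace{f(t,\xr)}+\mcH_{xm}(s)u_{\text{opt}}(s)+\mcH_{xu}(s)w(s)+(sI_n-A_m)^{-1}x_0 .
\]
Suppose $\infnorm{\xr(t)}$ first reaches $\rho_r$ at some $\tau$. On $[0,\tau]$ we have $\xr\in\mcX_r$, so \cref{eq:bnd-f} applies with $b_{f,\mcX_r}$ and \cref{eq:bnd-w} applies with $b_w$. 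The initial-condition term is bounded by $\rho_{\textup{in}}$ as in \cref{eq:rho_in}. Condition \cref{eq:l1-stability-condition} then gives $\linfnormtruc{\xr}{\tau}<\rho_r$, a contradiction. Given this, $u_\rt=-\mcC(s)\laplace{f}$ yields $\linfnorm{u_\rt}\le\lonenorm{\mcC(s)}b_{f,\mcX_r}=\rho_{ur}$.

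\emph{Step 2 (prediction error).} Let $\tilde\sigma(t)=\hsigma(t)-\eta(t)$, where $\eta(t)=Bf(t,x(t))+B_uw(t)$. The prediction error obeys $\dot{\tilx}=A_e\tilx+\tilde\sigma$. The piecewise-constant law \cref{eq:adaptive_law} is designed so that $\tilx((i+1)T)$ depends only on $\eta$ over $[iT,(i+1)T)$. Assume $\linfnormtruc{x}{\tau}\le\rho$, so $x\in\mcX_a$ on $[0,\tau]$. Then the standard sampled-data estimates, using $\bar\alpha_0$ through $\bar\alpha_3$, give $\infnorm{\tilx(t)}\le\gamma_0(T)$ on $[0,\tau]$; the case $i=0$ uses $\tilx(0)=0$. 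I expect this bookkeeping to be the main technical burden. In particular, the unmatched part $B_uw$ adds the $\bar\alpha_1 b_w$ contribution to $\gamma_0(T)$, and the argument must handle the fact that $\hsigma$ is not constrained to the range of $B$.

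\emph{Step 3 (error dynamics, contradiction).} Suppose \cref{eq:xref-x-bnd} fails, and let $\tau$ be the first time with $\infnorm{\xr(\tau)-x(\tau)}=\gamma_1$. On $[0,\tau]$, Step 1 gives $\infnorm{x}\le\rho_r+\gamma_1=\rho$, so the Lipschitz bound on $\mcX_a$ and Step 2 both apply. Subtracting \cref{eq:dynamics-uncertain} from \cref{eq:ref-system} cancels the $u_{\text{opt}}$ and $w$ terms exactly. Using
\[
B^\dagger\hsigma=f+B^\dagger\tilde\sigma,\qquad \tilde\sigma(s)=(sI_n-A_e)\tilx(s),
\]
we obtain
\[
\xr-x=\mcG_{xm}(s)\laplace{f(t,\xr)-f(t,x)}+\mcH_{xm}(s)\mcC(s)B^\dagger(sI_n-A_e)\tilx(s).
\]
Hence
\[
\linfnormtruc{\xr-x}{\tau}\le\frac{\lonenorm{\mcH_{xm}\mcC B^\dagger(sI_n-A_e)}\gamma_0(T)}{1-\lonenorm{\mcG_{xm}}L_{f,\mcX_a}}<\gamma_1
\]
by \cref{eq:l1-stability-condition-Lf} and \cref{eq:T-constraint}, a contradiction. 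This proves \cref{eq:xref-x-bnd}, and \cref{eq:x-bnd} follows as $\rho_r+\gamma_1=\rho$.

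\emph{Step 4 (control bounds).} From $u_\rt-u_\at=-\mcC(s)\laplace{f(t,\xr)-f(t,x)}-\mcC(s)B^\dagger(sI_n-A_e)\tilx(s)$, we get
\[
\linfnorm{u_\rt-u_\at}\le\lonenorm{\mcC}L_{f,\mcX_a}\gamma_1+\lonenorm{\mcC B^\dagger(sI_n-A_e)}\gamma_0(T)=\gamma_2,
\]
which is \cref{eq:uref-u-bnd}. Then \cref{eq:ua-bnd} follows from $\rho_{u_\at}=\rho_{ur}+\gamma_2$.
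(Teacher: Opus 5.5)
Your proposal is correct and follows the paper's proof essentially step by step. You bound the reference system, bound $\tilde{x}$ by $\gamma_0(T)$ via the sampled-data estimates, run a first-exit contradiction on the error dynamics using \cref{eq:l1-stability-condition-Lf} and \cref{eq:T-constraint}, and finish with the triangle inequality. The one structural difference is that you run the contradiction on $x_{\textup{r}}-x$ alone and derive the $u_{\textup{r}}-u_{\textup{a}}$ bound afterwards. This is legitimate because the prediction-error dynamics do not involve $u_{\textup{a}}$. The paper instead carries both bounds in a joint contradiction, to meet the $u_{\textup{a}}$ hypothesis of \cref{lem:xtilde-bnd}, which its proof never actually uses.

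One small slip: in your Step 4 expression for $u_{\textup{r}}-u_{\textup{a}}$, the $\tilde{x}$ term should carry a plus sign, consistent with your Step 3. This does not affect the norm bound.
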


\begin{lemma} \label{lem:ref-id-bnd}
Given the reference system \cref{eq:ref-system} and the nominal system \cref{eq:nominal system}, subject to \cref{assump:lipschitz-bnd-fi}, and the condition  \cref{eq:l1-stability-condition}, we have
\begin{align}
{\left\| {{x_{\textup{r}}} - \xn} \right\|_{{\mathcal L}{_\infty }}} & \leq  \lonenorm{\mcG_{xm}}b_{f,\mcX_r} +\lonenorm{\mcH_{xu}}b_w \label{eq:xref-xid-bnd} .
\end{align}
\end{lemma}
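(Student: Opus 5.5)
The plan is to subtract the nominal dynamics \cref{eq:nominal system} from the reference dynamics \cref{eq:ref-system}. Because both systems start at $x_0$ and are driven by the same $u_{\text{opt}}$, this leaves an error system forced only by the uncertainties. Writing $e(t)\trieq \xr(t)-\xn(t)$, we get
\begin{equation*}
\dot e(t) = A_m e(t) + B\big(u_\rt(t)+f(t,\xr(t))\big) + B_u w(t), \quad e(0)=0.
\end{equation*}
In the Laplace domain, with $u_\rt(s)=-\mcC(s)\laplace{f(t,\xr(t))}$, this reads
\begin{equation*}
e(s) = \mcH_{xm}(s)\big(I_m-\mcC(s)\big)\laplace{f(t,\xr(t))} + \mcH_{xu}(s) w(s) = \mcG_{xm}(s)\laplace{f(t,\xr(t))} + \mcH_{xu}(s)w(s),
\end{equation*}
using the definitions \cref{eq:Hxm_defn,eq:Hxu-defn,eq:Gm-defn}. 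There is no initial-condition term, since $e(0)=0$.

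The estimate then follows from the standard $\lone$-norm inequality $\linfnormtruc{G h}{\tau}\le \lonenorm{G}\linfnormtruc{h}{\tau}$. It gives
\begin{equation*}
\linfnormtruc{e}{\tau} \le \lonenorm{\mcG_{xm}}\linfnormtruc{f(\cdot,\xr(\cdot))}{\tau} + \lonenorm{\mcH_{xu}} b_w .
\end{equation*}
The $w$ term uses \cref{eq:bnd-w} directly. It therefore remains to show $\linfnorm{f(\cdot,\xr(\cdot))}\le b_{f,\mcX_r}$, i.e.\ that $\xr(t)\in\mcX_r=\Omega(\rho_r)$ for all $t\ge0$. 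Once this holds, \cref{eq:bnd-f} applies on $\mcX_r$ and the bound \cref{eq:xref-xid-bnd} follows for all $t$.

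The main obstacle is establishing the invariance $\linfnorm{\xr}<\rho_r$, which I would prove by the usual contradiction argument of $\lone$ adaptive control. We have $\infnorm{\xr(0)}=\infnorm{x_0}\le\rho_{\textup{in}}$: since the impulse response of $s(sI_n-A_m)^{-1}$ contains a unit impulse, $\lonenorm{s(sI_n-A_m)^{-1}}\ge1$. Moreover \cref{eq:l1-stability-condition} forces $\rho_r>\rho_{\textup{in}}$. Hence, if the invariance failed, continuity would give a first time $\tau$ with $\infnorm{\xr(\tau)}=\rho_r$ and $\xr(t)\in\mcX_r$ on $[0,\tau]$.

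On $[0,\tau]$, write $\xr(s)=\xin(s)+\mcH_{xm}(s)u_{\text{opt}}(s)+\mcG_{xm}(s)\laplace{f}+\mcH_{xu}(s)w(s)$, where $\xin(s)=(sI_n-A_m)^{-1}x_0$. Bounding $\xin$ by $\rho_{\textup{in}}$ as in \cref{eq:rho_in} (the route taken in the $\lone$ literature), each term is bounded on the truncated interval. This yields
\begin{equation*}
\linfnormtruc{\xr}{\tau}\le \rho_{\textup{in}}+\lonenorm{\mcH_{xm}}\linfnorm{u_{\text{opt}}}+\lonenorm{\mcG_{xm}}b_{f,\mcX_r}+\lonenorm{\mcH_{xu}}b_w<\rho_r,
\end{equation*}
where the strict inequality is exactly condition \cref{eq:l1-stability-condition}. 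This contradicts $\infnorm{\xr(\tau)}=\rho_r$, so $\xr$ stays in $\mcX_r$ for all $t\ge0$, and substituting into the error bound above completes the argument.
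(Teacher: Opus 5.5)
Your proof is correct and follows the paper's route. You use the same Laplace-domain error expression $\xr-\xn=\mcG_{xm}\laplace{f(t,\xr)}+\mcH_{xu}w$ together with the invariance $\linfnorm{\xr}<\rho_r$, which the paper proves separately in its appendix lemma by the same first-exit-time contradiction that you inline, and your explicit check that $\infnorm{x_0}\le\rho_{\textup{in}}<\rho_r$ fills in a step the paper only asserts.
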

\vspace{-4mm}
The proofs of \cref{them:x-xref-bnd} and \cref{lem:ref-id-bnd} can be obtained by extending the proofs of \cite[Theorem~1 and Lemma~5]{zhao2023integrated} and are included in the appendix. The reference system \eqref{eq:ref-system} requires knowledge of the true uncertainty and is therefore not implementable in practice. It is introduced purely for analysis purposes. It serves as an analytical bridge between the actual adaptive system and the nominal system, enabling explicit performance bounds to be derived.
\vspace{-3mm}
\subsection{Individual Performance Bounds}\label{sec:individual bounds}\vspace{-2mm}
The previous results provide uniform error bounds as represented by the vector-$\infty$ norm, which always leads to the {\it same bound} for all channels in the states, $x_i-x_{\nt,i}(t)$ ($i\in\mbZ_1^n$), or all channels in the adaptive inputs, $u_{\at,j}$ ($j\in\mbZ_1^m$). The use of vector-$\infty$ norms may lead to conservative bounds, making it impossible to satisfy the constraints \cref{eq:constraints} or leading to significantly tightened constraints for the MPC design. Thus, an individual bound for each $x_i(t)-x_{\textup r,i}(t)$ is preferred.
To derive the individual bounds for each $i\in\Zn$, we follow \cite{zhao2023integrated} to introduce the following coordinate transformation for the reference system \cref{eq:ref-system} and the nominal system \cref{eq:nominal system}:
\begin{equation}\label{eq:coordinate-trans}
\left \{
    \begin{aligned}
        \check x_{\rt} & = \Tau_x^ix_{\rt},\quad \check x_{\nt} = \Tau_x^i \xn, \\
         \check A_m^i &= \Tau_x^iA_m (\Tau_x^i)^{-1},\\
         \check B^i  & = \Tau_x^iB,   \quad \check B^i _u = \Tau_x^i{B}_u,
     %    \\ \check  C^i& \leftarrow  C(\Tau_x^i)^{-1},\\
    \end{aligned}
    \right.
\end{equation}
where $\Tau_x^i\!>\!0$ is a diagonal matrix that satisfies 
\begin{align}
    \Tau_x^i[i]&=1, \ 0<\Tau_x^i[k]\leq 1, \ \forall k\neq i, \label{eq:Tx-i-cts}
\end{align}
and $\Tau_x^i[k]$ is the $k$th diagonal element. 
With the transformation \cref{eq:coordinate-trans}, the reference system \cref{eq:ref-system} is transformed into 
\begin{align}
  \dot {\check x}_\rt(t) & \!=\! \check A_m^i\check x_{\rt}(t)\!\! +\!\!  \check  B^iu_{\text{opt}}(t) \!\!+\!\!  \check B^i (u_{\rt}(t) \!\!+\!\! \check f(t, \check x_{\rt}(t)\!)\!)\!+\!\check {B}_u^i{w}(t),  \nonumber\\ 
    {u_{\textup{r}}}(s) & \! =\! -\mcC(s)\laplace{\check f(t,\check x_{\textup{r}}(t))},  \ \check x(0) \!=\! \Tau_x^ix_0,\label{eq:ref-system-transformed}
   % \\ y(t) &\! = \!\check C \check x_{\rt}(t),\ \check x(0) \!=\! \Tau_x^ix_0,
\end{align}
where 
\begin{equation}\label{eq:f-checkf-relation}
    \check f(t,\check x_{\rt}(t)) =  f(t, x_{\rt}(t))) =  f(t, (\Tau_x^i)^{-1}\check x_{\rt}(t))).
\end{equation}Given a set $\mcZ$, define 
\begin{equation}\label{eq:check-Z-defn}
    \check \mcZ\trieq \{\check z\in \mbR^n: (\Tau_x^i)^{-1}\check z \in \mcZ\}.
\end{equation}
For the transformed reference system \cref{eq:ref-system-transformed}, we have
\begin{subequations}\label{eq:Hxm-Hxv-Gxm-check-defn}
\begin{align}
  \mcH_{\check xm}^i(s) &\trieq (sI_n \!-\! \check A_m^i)^{-1}\! \check B^i  = \Tau_x^i \mcH_{xm}(s), \label{eq:Hxm-check-defn} \\
  \mcH_{\check xu}^i(s) & \trieq (sI_n \!-\! \check A_m^i)^{-1} \! \check B^i _u = \Tau_x^i \mcH_{xu}(s), \label{eq:Hxv-check-defn} \\
  \mcG_{\check xm}^i(s) & \trieq \mcH_{\check xm}^i(s)(I_m-\mcC(s)) =  \Tau_x^i \mcG_{xm}(s), \label{eq:Gxm-check-defn}
\end{align}
\end{subequations}
where $\mcH_{xm},~\mcH_{xu},~\mcG_{xm}$ are defined in \cref{eq:Hxm_defn,eq:Hxu-defn,eq:Gm-defn}. 
By applying the transformation \cref{eq:coordinate-trans} to the nominal system \cref{eq:nominal system}, we obtain 
\begin{equation}\label{eq:nominal-cl-system-transformed}
\hspace{-2mm}
\left\{
\begin{aligned}
  \dot {\check x}_{\textup n}(t) & = \check A_m^i\xcheckn(t) + \check B^iu_{\text{opt}}(t),\ \xcheckn(0) \!=\! \Tau_x^ix_0,  \\ 
  \check y_{\text{n}}(t) & = \check C \xcheckn(t).
\end{aligned}\right. 
\end{equation}
Letting $\xcheckin(t)$ be the state of the system 
    $\dot{\check x}_{\textup{in}}(t) = \check A_m^i \xcheckin(t)$ with $\xcheckin(0) = \check x_{\nt}(0) = \Tau_x^i x_0$, 
we have $\xcheckin(s) \trieq (sI_n-\check A_m^i)^{-1} \xcheckin(0) = \Tau_x^i(sI_n- A_m)^{-1} x_0$. Define
\begin{equation}\label{eq:check-rhoin-defn}
   \check \rho_{\textup{in}}^i \trieq \lonenorm{s\Tau_x^i(sI_n- A_m)^{-1}}\max_{x_0\in \mcX_0}\infnorm{x_0}.
\end{equation}
Similar to \cref{eq:l1-stability-condition}, for the transformed reference system \cref{eq:ref-system-transformed}, given any positive constant $\gamma_1$, the low-pass filter design now needs to satisfy:
\begin{align}
\lonenorm{\mcG_{\check xm}^i(s)}b_{\check f,\check \mcX_r}  &< \check \rho_r^i - \lonenorm{\mcH_{\check x v}^i(s)}\linfnorm{u_{\text{opt}}} \nonumber\\&- \lonenorm{\mcH_{\check x u}^i(s)}b_w - \check \rho_{\textup{in}}^i ,\label{eq:l1-stability-condition-transformed} %
\end{align}
where 
$\mcX_r$ is defined in \cref{eq:Xr-Xa-defn} and $\check \mcX_r$ is defined according to \cref{eq:check-Z-defn}, and $\check \rho_r^i$ is a positive constant to be determined. 

\begin{lemma}\label{lem:refine_bnd_xi_w_Txi}
Consider the reference system \cref{eq:ref-system} 
subject to \cref{assump:lipschitz-bnd-fi}, the nominal system \cref{eq:nominal system}, the transformed reference system \cref{eq:ref-system-transformed} and transformed nominal system \cref{eq:nominal-cl-system-transformed}, obtained by applying \cref{eq:coordinate-trans} with any $\Tau_x^i$ satisfying \cref{eq:Tx-i-cts}. 
%,  to \cref{eq:ref-system} and \cref{eq:nominal-cl-system}, respectively. 
Suppose that \cref{eq:l1-stability-condition} holds with some constants $\rho_r$ and $\linfnorm{u_{\text{opt}}}$.
Then, %with $\check \rho_{\textup{in}}^i$ defined in \cref{eq:check-xn-bnd-w-rhon-defn},
there exists a constant $\check \rho_r^i\leq \rho_r$ such that \cref{eq:l1-stability-condition-transformed}
%for the transformed system \cref{eq:ref-system-transformed} 
holds with the same $\linfnorm{u_{\text{opt}}}$. Furthermore, $\forall t\geq 0$, 
\begin{align}
% {\abs{x_{\textup{in},i}(t)}} & \leq  \check \rho_{\textup{in}}^i, \  \forall t\geq0, \label{eq:xin-i-bnd-from-trans} \\
% \abs{x_{\textup{n},i}(t)} & \leq  \check \rho_{\nt}^i, \  \forall t\geq0, \label{eq:xn-i-bnd-from-trans} \\
\abs{x_{\rt,i}(t)} & \!\leq\!  \check \rho_r^i, \label{eq:xr-i-bnd-from-trans} \\
    % \left\| {\check x_{\textup{r}} - \check x_{\nt}} \right\|_{{\mathcal L}{_\infty }} & \leq  \lonenorm{\mcG_{ \check xm}(s)}b_{f,\Omega(\rho_r)},\  \forall t\geq 0, \label{eq:xref-xid-check-bnd}   \\
    %  \xr(t) & \in \mcX_r, \ \forall t\geq 0, \label{eq:xr-bnd-refined}\\
    \abs{x_{\textup{r,i}}(t)\!-\!x_{\nt,i}(t)} & \!\leq\! \lonenorm{\mcG_{ \check xm}(s)}\!b_{f,\mcX_r} \!+\! \lonenorm{\mcH_{ \check xu}} \!b_w, \label{eq:xri-xni-bnd-from-trans}
\end{align}
where we re-define \begin{equation}\label{eq:Xr-defn}
   \mcX_r\trieq \left\{z\in\mbR^n: \abs{z_i}\leq \check \rho_r^i, i\in\Zn\right\}. 
\end{equation}
\end{lemma}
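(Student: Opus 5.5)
The plan is to transfer everything to the transformed coordinates $\check x = \Tau_x^i x$ and work channel by channel. Three facts drive the argument. First, $\Tau_x^i$ is diagonal with entries in $(0,1]$ and $\Tau_x^i[i]=1$. Hence the $i$th component is preserved, $\check x_{\rt,i}=x_{\rt,i}$ and $\check x_{\nt,i}=x_{\nt,i}$. In addition, $\infnorm{\check z}\le\infnorm{z}$. Second, by \cref{eq:Hxm-Hxv-Gxm-check-defn} every transformed transfer function is $\Tau_x^i$ times the original. Since left-multiplying by $\Tau_x^i$ scales rows by factors at most one, the \lonew norms (row-sum induced $\infty$-norms of impulse responses) cannot increase. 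This gives $\lonenorm{\mcG^i_{\check xm}}\le\lonenorm{\mcG_{xm}}$, similarly for $\mcH^i_{\check xm},\mcH^i_{\check xu}$, and $\check\rho_{\textup{in}}^i\le\rho_{\textup{in}}$. Third, by \cref{eq:f-checkf-relation}, $\check f$ on $\check\mcX_r$ takes the same values as $f$ on $\mcX_r$, so $b_{\check f,\check\mcX_r}=b_{f,\mcX_r}$.

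\textbf{Existence of $\check\rho_r^i$.} I would take
\[
\check\rho_r^i \trieq \lonenorm{\mcG^i_{\check xm}}b_{f,\mcX_r}+\lonenorm{\mcH^i_{\check xm}}\linfnorm{u_{\text{opt}}}+\lonenorm{\mcH^i_{\check xu}}b_w+\check\rho_{\textup{in}}^i+\epsilon ,
\]
with $\epsilon>0$ small. By the norm monotonicity above and the strict inequality \cref{eq:l1-stability-condition}, this quantity is below $\rho_r$ for small enough $\epsilon$, so $\check\rho_r^i\le\rho_r$. Moreover, \cref{eq:l1-stability-condition-transformed} holds by construction. The $\mcH_{\check x v}^i$ appearing in \cref{eq:l1-stability-condition-transformed} is read as $\mcH_{\check xm}^i$, the input channel of $u_{\text{opt}}$.

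\textbf{Bound \cref{eq:xr-i-bnd-from-trans}.} I would rerun the argument of \cref{lem:ref-id-bnd} and the reference-system boundedness in \cref{them:x-xref-bnd}, now on \cref{eq:ref-system-transformed}. In the Laplace domain,
\[
\check x_\rt=\check x_{\textup{in}}+\mcH^i_{\check xm}u_{\text{opt}}+\mcG^i_{\check xm}\laplace{\check f}+\mcH^i_{\check xu}w .
\]
Bounding $\check x_{\textup{in}}$ by $\check\rho_{\textup{in}}^i$ uses the identity $(sI-\check A_m^i)^{-1}=s(sI-\check A_m^i)^{-1}\cdot\frac1s$, exactly as for $\rho_{\textup{in}}$.

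The main obstacle is circularity: the bound on $\check f$ needs $x_\rt\in\mcX_r$, which is what is being proved. I would handle this with the standard contradiction argument. Suppose $\infnorm{x_\rt(t)}$ first reaches $\rho_r$ at some $\tau$. Then on $[0,\tau]$ we have $x_\rt\in\Omega(\rho_r)$, hence $\abs{\check f}\le b_{f,\Omega(\rho_r)}$. Truncated $\mathcal L_\infty$ bounds then give $\infnorm{\check x_\rt}\le\check\rho_r^i-\epsilon$ on $[0,\tau]$. This forces $\abs{x_{\rt,i}}<\check\rho_r^i\le\rho_r$ for this particular $i$, and simultaneously, via the original untransformed condition \cref{eq:l1-stability-condition}, $\infnorm{x_\rt}<\rho_r$, a contradiction. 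Thus $x_\rt$ stays in $\Omega(\rho_r)$ for all time, and $\abs{x_{\rt,i}(t)}=\abs{\check x_{\rt,i}(t)}\le\infnorm{\check x_\rt(t)}\le\check\rho_r^i$.

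Doing this for every $i$ shows $x_\rt$ lies in the refined box \cref{eq:Xr-defn}. The constant $b_{f,\mcX_r}$ can therefore be evaluated over that smaller set, consistent with the redefinition; it is no larger than the bound over $\Omega(\rho_r)$, so all inequalities used above remain valid.

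\textbf{Bound \cref{eq:xri-xni-bnd-from-trans}.} Subtracting the transformed nominal system \cref{eq:nominal-cl-system-transformed} cancels $\check x_{\textup{in}}$ and the $u_{\text{opt}}$ term, leaving
\[
\check x_\rt-\check x_\nt=\mcG^i_{\check xm}\laplace{\check f}+\mcH^i_{\check xu}w .
\]
Taking the $i$th component, the $\infty$-norm, and the bounds $\abs{\check f}\le b_{f,\mcX_r}$ and $\infnorm{w}\le b_w$ gives \cref{eq:xri-xni-bnd-from-trans}.
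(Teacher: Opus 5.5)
Your proposal is correct and follows essentially the same route as the paper. Both arguments rest on the same steps: the $\lone$ norms and $\check\rho_{\textup{in}}^i$ cannot increase under left multiplication by $\Tau_x^i$ (since $\infnorm{\Tau_x^i}=1$); $b_{\check f,\check\mcX_r}=b_{f,\mcX_r}$; $\xr(t)\in\Omega(\rho_r)$ is secured for all $t$ from the untransformed reference-system bound (\cref{lem:ref-xr-ur-bnd}); and the $i$th components of the transformed bounds are read off, unchanged because $\Tau_x^i[i]=1$. The only difference is that the paper settles existence with the trivial choice $\check\rho_r^i=\rho_r$ and proves the bound for every $\check\rho_r^i$ satisfying \cref{eq:l1-stability-condition-transformed}, which is the form used in \cref{them:xi-uai-bnd}, whereas you construct a tighter explicit $\check\rho_r^i$; your truncated-norm estimate gives $\infnorm{\check x_{\rt}(t)}<\check\rho_r^i$ for any such constant just as well.
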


\begin{lemma}\label{them:xi-uai-bnd}
Consider the uncertain system \cref{eq:dynamics-uncertain} subject to \cref{assump:lipschitz-bnd-fi}, the nominal system \cref{eq:nominal system}, and the \loneAC~defined via \cref{eq:state-predictor,eq:adaptive_law,eq:l1-control-law}, subject to the conditions \cref{eq:l1-stability-condition-Lf,eq:l1-stability-condition} with constants $\rho_r$ and $\gamma_1>0$ and the sample time constraint \cref{eq:T-constraint}. Suppose that for each $i\in \Zn$, %\cref{eq:check-xn-bnd-w-rhon-defn} holds with a constant $\check \rho_{\nt}^i$ for the transformed nominal system \cref{eq:nominal-cl-system-transformed} and 
\cref{eq:l1-stability-condition-transformed} holds with a constant $\check \rho_r^i$ for the transformed reference system \cref{eq:ref-system-transformed} obtained by applying \cref{eq:coordinate-trans}. Then, $\forall t\ge0$, we have 
\begin{subequations}\label{eq:xui-xuni-bnd-from-trans-w-tilX-tilU-defn}
\begin{align}
%\abs{x_{\textup{n},i}(t)} & \leq  \check \rho_{\nt}^i, \  \forall t\geq0, \label{eq:xn-i-bnd-from-trans-2} \\ 
%\abs{x_{i}(t)} \leq \rho^i, \ 
    % \abs{x_i(t)-x_{\nt,i}(t)} \leq \tilde \rho^i,\ & \forall t\geq 0,\ \forall i\in\Zn \label{eq:xi-xni-bnd-from-trans-w-tilX-defn}\\
\hspace{-2mm}     {x(t)-x_{\nt}(t)} \!\in\! \tilde {\mcX}\! &\trieq\! \left\{z\!\in\!\mbR^n\!: \!\abs{z_i}\!\leq\! \tilde \rho^i, \  i\in \Zn \right\}\!,\ \label{eq:xi-xni-bnd-from-trans-w-tilX-defn}\\
        %  \abs{u_{\at,j}(t)} \leq \rho_{u_{\at}}^j, \ &\forall t\geq 0,\ \forall j\in\Zm, \label{eq:ua-i-bnd-w-Ua-defn} \\
 \hspace{-2mm} {u_{\at}(t)} \!\in\! \mcU_{\at}  \!&\trieq\! \left\{z\!\in\!\mbR^m\!: \!\abs{z_j}\!\leq\! \rho_{u_{\at}}^j, \ \! j\!\in\! \Zm \right \}\!, \label{eq:ua-i-bnd-w-Ua-defn} \\ %\label{eq:Ua-defn}
            %  \abs{u_j(t) - u_{\nt,j}(t)}\leq \tilde\rho_u^j, \ & \forall t\geq 0, \  \forall j\in\Zm \label{eq:uj-unj-bnd-w-tilU-defn},
 \hspace{-2mm}        u(t) - u_{\nt}(t)\!\in\!  \tilde\mcU   \!&\trieq\! \left \{z \!\in\! \mbR^m\!: \!\abs{z_j}\!\leq\!  \tilde\rho_u^j,
%\!\infnorm{K_x[j,:]} \max_{i\in\Zn}\tilde\rho^i\!,
\  j\!\in\! \Zm \right \}\!, \label{eq:uj-unj-bnd-w-tilU-defn} % \label{eq:uj-unj-bnd-w-tilU-defn} 
\end{align}
\end{subequations}where 
\begin{subequations}\label{eq:rhoi-tilrhoi-y-defn}
\begin{align}
 \hspace{-3mm}   &  \tilde \rho^i\!\trieq\! \lonenorm{\mcG_{ \check xm}^i(s)}\!\!b_{f,\mcX_r}\!+\!\lonenorm{\mcH^i_{ \check xu}} \!\!b_w +\!\gamma_1,\!\label{eq:rhoi-tilrhoi-defn} 
      % \\ \tilde \rho_y^j  & \!\trieq\! \sum_{i=1}^n \abs{C[j,i]}\tilde\rho^i, \label{eq:tilrhoi-y-defn}%\infnorm{C[j,:]} \max_{i\in \mbI({C[j,:]})}\tilde\rho^i 
\\
\hspace{-3mm} & \rho_{u_{\at}}^j  \!\trieq\! \lonenorm{\mcC_j(s)} b_{f_j, \mcX_r}\! +\! \gamma_2,  \tilde\rho_u^j \!\trieq \!  \rho_{u_{\at}}^j \!+\!
\sum_{i=1}^n\abs{K_x[j,i]}\tilde \rho^i
\label{eq:tilrho-u-j-defn}
\end{align}
\end{subequations}
with $\mcX_r$ defined in \cref{eq:Xr-defn}, and $C[j,i]$ denoting the $(j,i)$ element of $C$.%, and $\mbI({C[j,:]})$ denoting set of indexes corresponding to nonzero elements in $C[j,:]$. 
\end{lemma}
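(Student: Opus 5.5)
The plan is to assemble \cref{them:xi-uai-bnd} from \cref{them:x-xref-bnd} and \cref{lem:refine_bnd_xi_w_Txi} via the triangle inequality, using the reference system \cref{eq:ref-system} as the intermediate object. For the state bound \cref{eq:xi-xni-bnd-from-trans-w-tilX-defn}, fix $i\in\Zn$ and write
\begin{equation*}
\abs{x_i(t)-x_{\nt,i}(t)}\leq \abs{x_i(t)-x_{\rt,i}(t)}+\abs{x_{\rt,i}(t)-x_{\nt,i}(t)}.
\end{equation*}
The first term is bounded by $\linfnorm{x_\rt-x}\leq\gamma_1$ from \cref{eq:xref-x-bnd}. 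This is valid because the hypotheses (\cref{eq:l1-stability-condition-Lf,eq:l1-stability-condition} and \cref{eq:T-constraint}) are exactly those of \cref{them:x-xref-bnd}. For the second term, apply \cref{eq:xri-xni-bnd-from-trans} of \cref{lem:refine_bnd_xi_w_Txi} with the assumed $\check\rho_r^i$ for which \cref{eq:l1-stability-condition-transformed} holds. Combining the two gives $\tilde\rho^i$ in \cref{eq:rhoi-tilrhoi-defn}. As a side product, \cref{eq:xr-i-bnd-from-trans} yields $x_\rt(t)\in\mcX_r$ with the redefined hyperrectangle \cref{eq:Xr-defn}, which we use next.

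For the adaptive input bound \cref{eq:ua-i-bnd-w-Ua-defn}, decompose $u_{\at,j}=u_{\rt,j}+(u_{\at,j}-u_{\rt,j})$. Because $\mcC(s)$ is diagonal, $u_{\rt,j}(s)=-\mcC_j(s)\laplace{f_j(t,x_\rt(t))}$. Since $x_\rt(t)\in\mcX_r$ for all $t$, \cref{assump:lipschitz-bnd-fi} gives $\abs{f_j(t,x_\rt(t))}\leq b_{f_j,\mcX_r}$, and hence $\abs{u_{\rt,j}(t)}\leq\lonenorm{\mcC_j(s)}b_{f_j,\mcX_r}$. The second piece is bounded componentwise by $\linfnorm{u_\rt-u_\at}\leq\gamma_2$ from \cref{eq:uref-u-bnd}, which gives $\rho_{u_\at}^j$.

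For the total input bound \cref{eq:uj-unj-bnd-w-tilU-defn}, subtract \cref{eq:nominal system} from \cref{eq:total-control-law}. The $u_{\text{opt}}$ terms cancel, leaving
\begin{equation*}
u(t)-u_\nt(t)=K_x(x(t)-x_\nt(t))+u_\at(t).
\end{equation*}
Then $\abs{u_j-u_{\nt,j}}\leq\abs{u_{\at,j}}+\sum_i\abs{K_x[j,i]}\abs{x_i-x_{\nt,i}}\leq\rho_{u_\at}^j+\sum_i\abs{K_x[j,i]}\tilde\rho^i$, which is $\tilde\rho_u^j$.

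The main obstacle is making sure the set $\mcX_r$ is used consistently. \cref{them:x-xref-bnd} and the constant $\gamma_2$ are stated with $\mcX_r=\Omega(\rho_r)$ and $\mcX_a=\Omega(\rho)$, whereas \cref{eq:rhoi-tilrhoi-y-defn} uses the refined hyperrectangle \cref{eq:Xr-defn}. I would handle this by noting that \cref{lem:refine_bnd_xi_w_Txi} gives $\check\rho_r^i\leq\rho_r$, so the refined $\mcX_r\subseteq\Omega(\rho_r)$. Consequently the constants from \cref{them:x-xref-bnd} (including $\gamma_2$, which depends on $\mcX_a$) remain valid, and the refined set only enters where $x_\rt$ is evaluated, namely in the bounds on $f(t,x_\rt)$. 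This is legitimate because \cref{eq:xr-i-bnd-from-trans} certifies that $x_\rt$ stays in the hyperrectangle for all $t\geq0$. A similar check applies to $\lonenorm{\mcG_{\check xm}^i}b_{f,\mcX_r}$ in \cref{eq:xri-xni-bnd-from-trans}: I would verify that the proof of \cref{lem:refine_bnd_xi_w_Txi} bounds $f(t,x_\rt)$ using this same refined set, so that no circularity arises.
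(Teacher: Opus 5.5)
Your proposal is correct and follows the paper's proof essentially step for step. It uses the triangle inequality through $x_{\rt}$ with \cref{them:x-xref-bnd} and \cref{lem:refine_bnd_xi_w_Txi}, then the diagonal-filter bound $\abs{u_{\rt,j}}\leq\lonenorm{\mcC_j(s)}b_{f_j,\mcX_r}$ plus $\gamma_2$, and finally the identity $u-u_{\nt}=K_x(x-\xn)+u_{\at}$. Your extra consistency check on $\mcX_r$ is something the paper leaves implicit, but the only thing it needs is the containment $x_{\rt}(t)\in\mcX_r$ from \cref{eq:xr-i-bnd-from-trans}, not the inclusion $\check\rho_r^i\leq\rho_r$ — and that inclusion is not guaranteed for the $\check\rho_r^i$ assumed in this lemma, since any value $\geq\rho_r$ also satisfies the transformed condition.
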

%}
\cref{lem:refine_bnd_xi_w_Txi} and \cref{them:xi-uai-bnd} are the extensions of \cite[Lemma 6 and Theorem 3]{zhao2023integrated} to account for the unmatched uncertainty $w(t)$. The proofs can be found in the appendix.
\vspace{1mm}

\begin{remark} [Tuning of $T^i_x$]
Suppose we achieve the $\mathcal L_\infty-$norm bounds from \cref{them:x-xref-bnd} and \cref{lem:ref-id-bnd}. Then, we are guaranteed to achieve an individual bound from applying the transformation in \eqref{eq:coordinate-trans} with any $T^i_x$ satisfying the requirement \cref{eq:Tx-i-cts}. The key idea is that, for the \textit{target channel} $i$ whose bound we wish to refine, we progressively reduce the magnitudes of all other channels. This ensures that the vector $\infty$-norm bound is ultimately dominated by the target channel alone. Consequently, the final performance bound from the \loneAC~is determined solely by the $i$th channel. Once the magnitudes of the other channels have been reduced below that of the $i$th channel, further decreasing $T_x^i[k]$ (for $k\neq i$) cannot yield a tighter bound for the $i$th channel.
\end{remark}
\begin{remark}[Tuning of Performance Bounds]
\label{rem:x-xid-discussion-Cs-T-refined} 
\cref{them:xi-uai-bnd} introduces a technique to individually bound \( x_i(t) - x_{\nt,i}(t) \) for each \( i \in \Zn \) and \( u_j(t) - u_{\nt,j}(t) \) for each \( j \in \Zm \) through coordinate transformations. Furthermore, in the absence of unmatched uncertainty, by reducing \( T \) and increasing the bandwidth of the filter \( \mcC(s) \), the value of \( \tilde{\rho}^i \) (\( i \in \Zn \)) can be made arbitrarily small. This implies that {\it the adaptive system states can be made arbitrarily close to the nominal system states}. Additionally, the bounds on \( u_{\at,j}(t) \) and \( u_j(t) - u_{\nt,j}(t) \) can be brought arbitrarily close to the bounds on the true matched uncertainty \( f_j(t, x) \) for \( x \in \mcX_a \) and each \( j \in \Zm \). We notice that the low-pass filter in \loneAC~decides the tradeoff between performance and robustness. While reducing $T$ can lead to high-frequencies in the estimation loop, an appropriately selected bandwidth and order for the low-pass filter should prevent these high-frequencies to enter the system. Additional details on the role and systematic design of the filter can be found in \cite{hovakimyan2010L1-book}.
\end{remark}
\vspace{-3mm}

\subsection{Scaling for Improved Bounds}\label{sec:scaling of w}
\vspace{-2mm}
The previous results leverage the $\mathcal{L}_\infty$ bound of $w(t)$, i.e., $b_w$, to calculate the individual performance bounds according to \eqref{eq:rhoi-tilrhoi-y-defn}. It is possible that the bound $b_{w}$ is too large for some $w_i$, and thus may lead to conservative performance bounds calculated within \loneAC. To address this issue, we propose a scaling technique for unmatched uncertainty. Consider the following transformed system below
\begin{equation}\label{eq:dynamics-uncertain-new}
\left\{ \begin{aligned}
  \dot x(t) &= Ax(t) + B(u(t) +  f(t,x(t)))+\bar{B}_u \bar{w}(t), \hfill \\
  y(t) &= Cx(t), \ x(0) = x_0,\\ 
\end{aligned}\right.    
\end{equation}
where $\bar{B}_u = B_u \Lambda$ and $\bar{w}$ = $\Lambda^{-1} w$ for some positive diagonal matrix $\Lambda$ s.t. $\linfnorm {\bar{w}_i(t)} = b_w \; \forall i$. It follows that $B_uw = \bar{B}_u\bar{w}$, and thus the system \eqref{eq:dynamics-uncertain-new} is identical to the original system \eqref{eq:dynamics-uncertain-original}, and the assumption in \eqref{eq:bnd-w} still applies\ to $\bar{w}(t)$. 
We can easily achieve the same results in \cref{sec:individual bounds} for the transformed system \eqref{eq:dynamics-uncertain-new} but with smaller performance bounds. This is because during the calculation of the performance bounds in \loneAC, we only leverage $\mathcal{L}_\infty$ bound $b_w$. The additional information of individual bounds $b_{w_j}$ can help us design $\Lambda$, which makes the elements in $\bar{B}_u$ smaller than $B_u$ as $b_{w_j} \le b_w$, and thus leading to smaller performance bounds for the transformed system \eqref{eq:dynamics-uncertain-new} compared with the original system \eqref{eq:dynamics-uncertain-original}.
In \cref{sec:scaling_sim}, we will show that with the transformation of $\bar{B}_u$ and $\bar{w}$ in \eqref{eq:dynamics-uncertain-new}, which we denote as scaling of $w(t)$, we achieve a smaller performance bound for every state and control input channel provided by \loneAC~compared with the original system \eqref{eq:dynamics-uncertain-original}.
\vspace{-3mm}
\section{UC-MPC: Robust Adaptive MPC via Uncertainty Compensation}
\vspace{-2mm}
\begin{algorithm}
\caption{\loneAC~Design}\label{al:1}
\begin{algorithmic}[1]
\Require{Uncertain system \cref{eq:dynamics-uncertain-original} subject to \cref{assump:lipschitz-bnd-fi} with constraint sets $\mcX$ and $\mcU$, set for the initial condition $\mcX_0$, $A_e$ for state predictor \cref{eq:state-predictor}, initial low-pass filter $\mcC(s)$ and sampling time $T$ to define an \loneAC, $\gamma_1$, tol, and feedback gain $K_x$}
\State Calculate $\linfnorm{u_\text{opt}}$ based on $u_\nt(t) =K_x \xn(t) + u_\text{opt}(t)$ with constraints $x_\nt\in \mcX$ and $u_\nt\in \mcU$ using Pontryagin set difference
\While{\cref{eq:l1-stability-condition} with $\mcX_r = \Omega(\rho_r) \cap \mcX$ or \cref{eq:l1-stability-condition-Lf} with $\mcX_a = \Omega(\rho_r+\gamma_1) \cap \mcX$
does not hold with any $\rho_r$ given the input $\gamma_1$ and initial low-pass filter $\mcC(s)$}
\label{line:l1-stability-conditions-in-l1rg}
\State Increase the bandwidth of $\mcC(s)$
\EndWhile \Comment{$\rho_r$, $\mcX_r$ and $b_{f,\mcX_r}$ will be computed, and the bandwidth of the low pass filter is updated} \label{line:Xr-b_f-under-csts}
\State Set $b_{f,\mcX_r}^{old} = b_{f,\mcX_r}$
\For{$i=1,\dots,n$}\label{line:Ti}
\State Select $\Tau_x^i$ satisfying\!~\cref{eq:Tx-i-cts} and apply the transformation\!~\cref{eq:coordinate-trans}\label{line:Tx_i-selection} 
% \State Compute $\check \rho_{\textup{in}}^i$ according to \cref{eq:check-rhoin-defn}
\State Evaluate \cref{eq:Hxm-Hxv-Gxm-check-defn} and 
compute $\check \rho_{\textup{in}}^i$ according to \cref{eq:check-rhoin-defn}
%compute $\check \rho_{\textup{in}}^i$ according to \cref{eq:check-xn-bnd-w-rhon-defn}
\State Compute $\check \rho_{r}^i$ that satisfies \cref{eq:l1-stability-condition-transformed}
\State Set $\rho^i =  \check \rho_r^i+\gamma_1$, $\tilde \rho^i= \lonenorm{\mcG_{ \check xm}^i(s)}\!\!b_{f,\mcX_r}\!+\!\lonenorm{\mcH^i_{ \check xu}} \!\!b_w +\!\gamma_1$
\EndFor

\State Set $\mcX_r\! = \!\left\{\!z\!\in\!\mbR^n\!:\! \abs{z_i}\!\leq\! \check\rho_r^i\right\}\cap \mcX$ and update $b_{f,\mcX_r}$ \label{line:Xr-update-in-l1rg}% \Comment{Update $\mcX_r$ defined in step~\ref{line:Omega_r-defn}}
\If{$b_{f,\mcX_r}^{old} - b_{f,\mcX_r}>\textup{tol}$}
\State Set $b_{f,\mcX_r}^{old} = b_{f,\mcX_r}$ and go to step \ref{line:Ti}
\EndIf
\State Set $\mcX_a = \{ z\in\mbR^n: \abs{z_i}\leq \rho^i,\ i\in\Zn\}\cap \mcX$ \label{line:Xa-update-in-l1rg}
\While{constraint \cref{eq:T-constraint} does not hold with $\mcC(s)$ from step~\ref{line:Xr-b_f-under-csts} and $\mcX_a$ from step~\ref{line:Xa-update-in-l1rg},}
\State Decrease $T$ 
\EndWhile \Comment{$T$ is updated}
\label{line:T}
\State With $\mcC(s)$, $\mcX_r$, and $T$ from step \ref{line:T}, compute $\gamma_0(T)$ from \cref{eq:gamma_0-T-defn} and $\gamma_2$ from \cref{eq:gamma2-defn}.
\For{$j=1,\dots,m$}
\State Compute $\rho_{u_{\at}}^j$ and $\tilde \rho_u^j$ according to \cref{eq:tilrho-u-j-defn}
 \EndFor 
\State Set $\tilde {\mcX}$ and $\tilde {\mcU}$ with $\{\tilde \rho^i\}_{i\in\Zn}$ and $\{\tilde \rho_{u}^j\}_{j\in\Zm}$ %from steps \ref{line:derive-sep-state-bnds-in-l1rg} and~\ref{line:derive-sep-input-bnds-in-l1rg} 
via \cref{eq:xui-xuni-bnd-from-trans-w-tilX-tilU-defn}

% \State \textbf{Offline:}
\State Set 
\begin{equation}\label{eq:cst-tightening}
    \mcX_{\nt} \trieq \mcX \ominus \tilde {\mcX},\  \mcU_{\nt} \trieq \mcU \ominus \tilde {\mcU}
\end{equation}
\Ensure{Tightened bounds for the nominal system \cref{eq:nominal system} and an \loneAC~to compensate for uncertainties}
\end{algorithmic}
\end{algorithm}
To apply the proposed UC-MPC framework, we first follow \cref{sec:kx_design} to design the feedback gain $K_x$ to reduce the impact from the unmatched uncertainty. Next, we need to design an \loneAC~to achieve the uniform performance bounds and the tightened constraints for the nominal MPC. 
According to \cref{sec:l1acbounds}, to design an \loneAC, the following procedure can be utilized:
\vspace{-2mm}
\begin{itemize}
    \item Design a low-pass filter in \eqref{eq:filter-defn} that satisfies \eqref{eq:l1-stability-condition-w-extra-Lf};
    \item Apply the transformation\!~\cref{eq:coordinate-trans} with $\Tau_x^i$ satisfying\!~\cref{eq:Tx-i-cts} to achieve $\check \rho_{r}^i$ that satisfies \cref{eq:l1-stability-condition-transformed};
    % \item With $\check \rho_{r}^i$, calculate $\mcX_r\! = \!\left\{\!z\!\in\!\mbR^n\!:\! \abs{z_i}\!\leq\! \check\rho_r^i\right\}\cap \mcX$ and $\mcX_a = \{ z\in\mbR^n: \abs{z_i}\leq \rho^i,\ i\in\Zn\}\cap \mcX$, where $\rho^i =  \check \rho_r^i+\gamma_1$.
        \item Design the sampling time $T$ that satisfies \eqref{eq:T-constraint} with the low-pass filter and $\check \rho_{r}^i$ above.
\end{itemize}\vspace{-2mm}
The details are summarized in \cref{al:1}. 
It is important to note that in step~\ref{line:l1-stability-conditions-in-l1rg} of \cref{al:1} we also impose the constraint that both $\xr(t)$ and $x(t)$ remain within $\mcX$ for all $t \geq 0$. These constraints can help reduce the size of the uncertainty that needs to be addressed and substantially decrease the conservatism of the proposed approach. Finally, we formulate an MPC problem for the nominal system \cref{eq:nominal system new} using tightened constraint sets.

\subsection{Nominal MPC Formulation}
\vspace{-3mm}
Building on the tightened constraints from \cref{al:1}, we now present the design of MPC for the nominal system and the method for determining \( \bar{u}_{\text{opt}} \) from the total control law \eqref{eq:total control law new}. Following our analysis in \cref{sec:tube mpc}, at any time instant \( \tau \), given the state \( x_{\nt}(\tau) \), \( \bar{u}_{\text{opt}}(\tau) \) is defined by the solution to the following optimization problem:
\begin{subequations}\label{eq:mpc}
  \begin{align}
 \min_{u_z(\cdot)}\; &l_f(x_{z}(T_f))+\int_{0}^{T_f} l(x_{z}(t),u_z(t)) \,dt  \label{eq:mpc-cost}\\
  \hspace{-7mm}  \text{s.t.}~     
  & \dot x_{z}(t)  = A x_{z}(t) +  Bu_z(t), t\in[0,T_f],\label{eq:mpc-dynamics-cst}\\
  & x_{z}(0)=x_{\nt}(\tau), \\
    &x_{z}(t)\in  \mcX_{\nt}, t\in[0,T_f], \label{eq:mpc-state-cst} \\
    &x_{z}(T_f) \in \mcX_f,\label{eq:termianl_state_cst}\\
    &u_z(t)\in  \mcU_{\nt}, t\in[0,T_f], \label{eq:mpc-input-cst}
\end{align}  
\end{subequations}
where $T_f$ is the time horizon, $l_f$ denotes the custom terminal cost function, $l$ denotes the running cost function, and $\mcX_f$ is the custom terminal constraint set.
Finally, we can apply the control $\bar{u}_{\text{opt}}$, defined as $\bar{u}_{\text{opt}}(\tau+t)=u^*(t)$, to the system, for $0\le t \le t_\delta$, where $u^*(\cdot)$ is the solution to the optimization problem \cref{eq:mpc} given $x_{\nt}(\tau)$, and $t_\delta$ is a sufficiently small sampling time. It is worth noting that $\bar{u}_{\text{opt}}$ is derived solely from the nominal closed-loop dynamics, without relying on the actual system state affected by uncertainties.

\vspace{-2mm}
\subsection{Stability and Recursive Feasibility of Nominal MPC}
\label{sec:stability_and_recursive}
\vspace{-2mm}
\begin{assumption}\label{assumption: stability and feasibility}
    The nominal MPC problem \cref{eq:mpc} is recursively feasible and the nominal closed-loop system \eqref{eq:nominal system new} is stable.
\end{assumption}
\vspace{-2mm}
Because \eqref{eq:mpc} is a standard nominal MPC problem, classical results in \cite{rawlings2020mpc-book,chen1998quasi,grune2017nonlinear} can be applied to ensure or verify recursive feasibility and closed-loop stability.
See \cite[Lemma 2 and Theorem 1]{chen1998quasi} for the recursive feasibility and stability analysis for continuous-time MPC using terminal constraints. The high level idea is to find a feedback control law $K_fx$, a terminal constraint set $\Omega_\alpha$, and a cost function design such that the terminal cost bounds the infinite horizon cost of the system starting from the terminal constraint set controlled by the linear state feedback.
\vspace{-2mm}
\subsection{Non-emptiness of the Tightened Constraint Sets}
\label{sec:nonemptiness}
\vspace{-2mm}
\begin{assumption}
    \label{assumption: nonemptiness}
    The tightened constraint sets $\mcX_{\nt}$ and $\mcU_{\nt}$ are nonempty.
\end{assumption}
\vspace{-2mm}
According to \cref{them:xi-uai-bnd} and as explained in \cref{rem:x-xid-discussion-Cs-T-refined}, in the absence of the unmatched uncertainty, by decreasing the sampling time  $T$ and increasing the bandwidth of the filter $\mcC(s)$, one can make $\mcX_{\nt}$ arbitrarily close to $\mcX$, and make the bound on $u_j(t)-u_{\nt,j}(t)$ (that defines $\tilde {\mcU}$) arbitrarily close to the bound on the true matched uncertainty $f_j(t,x)$ for each $j\in\Zm$. 
As a result, in the absence of unmatched uncertainties, with a sufficiently small $T$ and a sufficiently high bandwidth for $\mcC(s)$, nonempty $\mcX_{\nt}$ can always be achieved as long as $\mcX$ is nonempty, and nonempty $\mcU_{\nt}$ can be achieved if the set $\mcU$ is sufficiently large to allow for complete cancellation of the matched uncertainty. 

In the presence of unmatched uncertainties, we need $\mcX$ to be sufficiently large to additionally accommodate for the effect of unmatched uncertainties that cannot be directly canceled. To guarantee non-emptiness of $\mcU_{\nt}$, it is therefore necessary to assume that $\mcU$ is sufficiently large to cancel the matched component as well as to absorb the residual error from unmatched uncertainties amplified by $K_x$ according to \eqref{eq:tilrho-u-j-defn}.
This is analogous to constraint tightening in robust MPC formulations with pre-stabilizing feedback laws, such as tube MPC in \cite{mayne2005robust, langson2004robust-mpc-tube}, where the input constraint set is tightened as $\mcU_{\nt} = \mcU \ominus K_x Z$, with $Z$ being a disturbance-invariant set whose size depends on both system dynamics and the uncertainty bounds. In these formulations, feasibility critically depends on the joint design of $K_x$ and $Z$.

\begin{remark}
In the UC-MPC framework, $K_x$ is designed to mitigate the effect of unmatched uncertainties on critical output channels by minimizing the peak-to-peak gain. Unlike tube MPC formulation in \cite{limon2008design}, we do not employ an LMI-based joint design of $K_x$ and tightened constraint sets to ensure the non-emptiness of the tightened constraint sets for the following reason. In tube MPC, the feedback gain is optimized with respect to the total disturbance effect, while in UC-MPC, the adaptive component already compensates for the matched part of the uncertainty. Thus, $K_x$ is tuned solely to minimize the effect of the unmatched part of the uncertainty on particular output channels.
Incorporating the non-emptiness conditions of the tightened sets $\mcX_{\nt} \trieq \mcX \ominus \tilde {\mcX},$ and  $\mcU _{\nt} \trieq \mcU \ominus \tilde {\mcU}$ into an LMI-based design for UC-MPC would require explicit characterization of the residual error from partially canceled matched uncertainties. However, this residual bound depends on $K_x$ itself, making it intractable to include within the same convex synthesis.
\end{remark}

\subsection{Formal Guarantees of UC-MPC}
\vspace{-2mm}
\begin{theorem}\label{theorem:1}
Consider an uncertain system \eqref{eq:dynamics-uncertain-original} subject to \cref{assump:lipschitz-bnd-fi} with the state and input constraints in \eqref{eq:constraints}. Consider the control law \eqref{eq:total control law new}, where $u_a$ is the control input of an adaptive controller designed following \cref{al:1}, and $\bar u_{\text{opt}}$ is obtained from solving the nominal MPC problem \cref{eq:mpc}. If  \cref{assumption: stability and feasibility,assumption: nonemptiness} hold, we have:
\begin{align} \label{eq:final-theorem}
     x(t) \in \mathcal{X},\; u(t) \in \mathcal{U}, \; x(t) - \xn(t) \in \tilde {\mcX}, \quad \forall t\geq 0,
\end{align}
where $\xn(t)$ is the state of the nominal system \eqref{eq:nominal system new}, and $\tilde {\mcX}$ is defined in \eqref{eq:xi-xni-bnd-from-trans-w-tilX-defn} and \eqref{eq:rhoi-tilrhoi-defn}.
\end{theorem}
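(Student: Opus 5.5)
The argument combines the nominal MPC guarantees with the uniform bounds of \cref{them:xi-uai-bnd}, via the tightened sets in \cref{eq:cst-tightening}.

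\textbf{Step 1: the nominal trajectory respects the tightened constraints.} By \cref{assumption: nonemptiness}, the sets $\mcX_{\nt}$ and $\mcU_{\nt}$ are nonempty. By \cref{assumption: stability and feasibility}, problem \cref{eq:mpc} is feasible at every sampling instant. Since the first piece of each optimal input is applied to the nominal system \eqref{eq:nominal system new} with $x_z(0)=x_{\nt}(\tau)$, the nominal closed loop coincides with the MPC prediction on each interval $[\tau,\tau+t_\delta]$. Hence $x_{\nt}(t)\in\mcX_{\nt}$ and $u_{\nt}(t)=\bar u_{\text{opt}}(t)\in\mcU_{\nt}$ for all $t\ge0$.

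\textbf{Step 2: the signals entering the $\lone$AC analysis are bounded.} Because $\mcX_{\nt}\subset\mcX$ is compact, $x_{\nt}$ is bounded. Since $u_{\text{opt}}=\bar u_{\text{opt}}-K_x x_{\nt}$, the signal $u_{\text{opt}}$ is also bounded. Its bound is the one computed in step 1 of \cref{al:1} from $x_{\nt}\in\mcX$ and $u_{\nt}\in\mcU$. The value $\linfnorm{u_{\text{opt}}}$ used in \cref{eq:l1-stability-condition} and \cref{eq:l1-stability-condition-transformed} is therefore valid for the actual closed loop.

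\textbf{Step 3: invoke the uniform error bounds.} The control law \eqref{eq:total control law new} equals \eqref{eq:total-control-law}. The equivalent nominal systems \eqref{eq:nominal system} and \eqref{eq:nominal system new} generate the same $x_{\nt}$ and $u_{\nt}$. \cref{al:1} enforces conditions \cref{eq:l1-stability-condition-w-extra-Lf}, \cref{eq:l1-stability-condition-transformed} and \cref{eq:T-constraint}. So \cref{them:xi-uai-bnd} applies and gives $x(t)-x_{\nt}(t)\in\tilde\mcX$ and $u(t)-u_{\nt}(t)\in\tilde\mcU$ for all $t\ge0$. The first of these is the third claim of \eqref{eq:final-theorem}.

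\textbf{Step 4: conclude via the Pontryagin difference.} By definition, $\mcX_{\nt}\oplus\tilde\mcX\subseteq\mcX$. Writing $x(t)=x_{\nt}(t)+(x(t)-x_{\nt}(t))$ gives $x(t)\in\mcX$. The same argument with $\mcU_{\nt}\oplus\tilde\mcU\subseteq\mcU$ gives $u(t)\in\mcU$.

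The main obstacle is consistency in Step 2. The bounds of \cref{them:xi-uai-bnd} depend on sets ($\mcX_r$, $\mcX_a$, and the bound on $u_{\text{opt}}$) that were fixed \emph{before} the MPC is designed. In \cref{al:1}, $\mcX_r$ and $\mcX_a$ are also intersected with $\mcX$, which presupposes the conclusion. I would handle this by a continuity and contradiction argument on the first exit time. Let $t^*$ be the first time at which $x$ leaves $\mcX$ or one of the bounds of \cref{them:xi-uai-bnd} fails. On $[0,t^*]$, the truncated-norm versions of \cref{them:x-xref-bnd}, \cref{lem:ref-id-bnd} and \cref{lem:refine_bnd_xi_w_Txi} hold, and they hold with strict inequality. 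By continuity of $x$, $\xr$ and $u_a$, this contradicts the definition of $t^*$. The strict inequalities in \cref{eq:l1-stability-condition} and \cref{eq:T-constraint} are what make the contradiction work.
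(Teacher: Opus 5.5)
Your proposal is correct and follows the paper's own argument. Boundedness of the nominal MPC trajectory gives a valid bound on $\linfnorm{u_{\text{opt}}}$, so the $\lone$AC conditions enforced by \cref{al:1} hold; \cref{them:xi-uai-bnd} then yields $x-x_{\nt}\in\tilde\mcX$ and $u-u_{\nt}\in\tilde\mcU$, and the Pontryagin differences in \cref{eq:Xn-Un-defn} give $x\in\mcX$ and $u\in\mcU$.

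Your first-exit-time argument for the intersections with $\mcX$ in \cref{al:1} goes beyond the paper, whose proof simply invokes \cref{them:xi-uai-bnd}. To make it airtight, $t^*$ must also track $x_{\rt}(t)\in\mcX$, since both $b_{f,\mcX_r}$ and $L_{f,\mcX_a}$ require it; this membership is recovered with margin $\gamma_1$ from $\abs{x_{\rt,i}-x_{\nt,i}}\le\tilde\rho^i-\gamma_1$ and $x_{\nt}\in\mcX_{\nt}$.
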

\vspace{-6mm}
\begin{proof}
Based on \cref{assumption: stability and feasibility} and \cref{assumption: nonemptiness}, we have bounded nominal system states and control inputs, and thus $\linfnorm{\bar{u}_{\text{opt}}}$ and $\linfnorm{{u}_{\text{opt}}}$ are bounded, which ensures that we can always find positive constant $\rho_r$ and $\gamma_1$ s.t. \eqref{eq:l1-stability-condition-w-extra-Lf} holds. Thus, \cref{them:xi-uai-bnd} can be leveraged with the formulation of MPC in \eqref{eq:mpc}, control law \eqref{eq:total-control-law} (equivalent to control law \eqref{eq:total control law new}), and the definition of $\mcX_{\nt}$ and $\mcU_{\nt}$ in \eqref{eq:Xn-Un-defn} to achieve \eqref{eq:final-theorem}. 
\end{proof}

\begin{remark}
According to \cref{theorem:1}, for an LTI system subject to matched nonlinear time-varying uncertainties and unmatched bounded uncertainties characterized by \cref{eq:dynamics-uncertain-original}, by following the proposed UC-MPC framework, we can not only enforce the constraints for the actual system, but also achieve uniform performance bounds for this actual system in terms of its state deviation from a  nominal uncertainty-free system. Additionally, from \cref{rem:x-xid-discussion-Cs-T-refined} and \cref{sec:nonemptiness}, we can make the states of the actual closed-loop system  arbitrarily close to those of the nominal system, i.e., making $\tilde X$ arbitrarily small, in the absence of unmatched uncertainties. This is in contrast to existing robust MPC schemes, which typically focus on constraint enforcement and stability only (without guaranteeing the control performance), as well as adaptive MPC schemes, \cite{adetola2011robust,sasfi2023robust,lorenzen2017adaptive}, which typically consider only parametric uncertainties and yield limited performance when the parameter estimation is poor.
\end{remark}
\vspace{-2mm}
\subsection{Computational Complexity Analysis}
Compared with tube MPC formulations in \cite{mayne2005robust, langson2004robust-mpc-tube}, the proposed UC-MPC framework significantly reduces the online computational burden. In tube MPC, the MPC optimization must be solved online at every sampling instant using the actual system state, and the nominal initial state is typically introduced as a decision variable constrained to be consistent with the actual state through the tube set. This increases the size and complexity of the optimization problem.
In contrast, the MPC problem in \eqref{eq:mpc} is solved online using only the nominal dynamics and nominal system state, without introducing the nominal initial state as a decision variable tied to the actual state. This simplification is enabled by the integration of \loneAC, which guarantees that the actual closed-loop system remains close to the nominal dynamics in~\eqref{eq:nominal system new}. As a result, the optimization problem remains comparable in complexity to a vanilla MPC while yielding improved performance and robustness against model uncertainties.

The UC-MPC framework introduces additional computation only in two components: \textbf{constraint tightening} and \textbf{adaptive compensation}.  
\vspace{-2mm}
\begin{itemize}
    \item \textbf{Constraint tightening:} The tightened state and input sets are computed {\it offline} from uniform error bounds determined by the design of the \loneAC~and the feedback gain \(K_x\). This involves solving a small set of LMI to obtain \(K_x\), followed by evaluating closed-form expressions for the corresponding error bounds. This offline procedure is comparable in complexity to that of tube MPC.

    \item \textbf{Adaptive compensation:} During online execution, the only additional computation arises from evaluating the adaptive compensation term \(u_a\), which requires propagating the state predictor~\eqref{eq:state-predictor} and updating the uncertainty estimate through the adaptive law~\eqref{eq:adaptive_law}. These updates are computationally much cheaper compared with the effort required to solve the MPC optimization problem.
\end{itemize}

Overall, UC-MPC preserves the robustness guarantees of tube-based approaches while significantly reducing real-time computational effort.

\section{Simulation Case Studies}
\vspace{-2mm}
In this section, we evaluate the performance of the proposed UC-MPC through two case studies: controlling the longitudinal motion of an F-16 aircraft and ensuring soft landing of a spacecraft on an asteroid. In the F-16 aircraft case, we address both matched and unmatched artificial uncertainties to enhance tracking performance compared to vanilla MPC and tube MPC. 
Moreover, leveraging the methodology outlined in Section \ref{sec:kx_design}, we design the feedback gain $K_x$ to minimize constraint tightening for safety-critical state channels, outperforming alternative $K_x$ designs.
% For the asteroid landing case, we focus on the matched uncertainty arising from the asteroid's uncertain gravity model. 
All simulations are performed in MATLAB using dynamic propagation, and the optimization problem defined in \eqref{eq:mpc} is solved using MATLAB toolbox YALMIP with solver MOSEK. MATLAB code is available at https://github.com/RonaldTao/UC-MPC.
\vspace{-3mm}
\subsection{F-16 Longitudinal Motion Control}
\vspace{-3mm}
\subsubsection{System Dynamics}
\vspace{-2mm}
The F-16 aircraft model is adopted from \cite{sobel1985design-pitch}, and it has been slightly simplified by neglecting the actuator dynamics. The open-loop dynamics are given by 
% { \setlength{\arraycolsep}{2pt}
% \begin{align}
%   \hspace{-1mm}  &\dot x = \begin{bmatrix}
%     0 & 0.0067& 1.34 \\
%     0 & -0.869 & 43.2 \\
%     0 & 0.993 & -1.34
%     \end{bmatrix} x \nonumber\\
%     &\!+\!\begin{bmatrix}
%     0.169 & 0.252 \\
%     -17.3& -1.58 \\
%     -0.169 & -0.252 
%     \end{bmatrix}\! (u\!+\!f(t,x)) 
%     \!+\!\begin{bmatrix}
%         0.1061 \\0 \\ 0.1061
%     \end{bmatrix} \!w(t), \label{eq:ol-dynamics-f16}
% \end{align}
\begin{align}
  \hspace{-1mm}  &\dot x = A x 
    +B (u+f(t,x)) 
    + B_u w(t), \label{eq:ol-dynamics-f16}
\end{align}
where 
{\setlength{\arraycolsep}{1pt}
\[ A \!= \!\!
\begin{bmatrix}
    0 & 0.0067& 1.34 \\
    0 & -0.869 & 43.2 \\
    0 & 0.993 & -1.34
    \end{bmatrix}\!,  
    B\! \mid \! B_u \!=\!
\left[
\begin{array}{cc|c}
0.169  & 0.252  & 0.1061 \\
-17.3  & -1.58  & 0 \\
-0.169 & -0.252 & 0.1061
\end{array}\right]\!,
    % B \!=\!\! \begin{bmatrix}
    % 0.169 & 0.252 \\
    % -17.3& -1.58 \\
    % -0.169 & -0.252 
    % \end{bmatrix}\!, B_u\! = \!\!\begin{bmatrix}
    %     0.1061 \\0 \\ 0.1061
    % \end{bmatrix},
\]}
where the state $x(t) = [\gamma(t), q(t), \alpha(t)]^\top$ consists of the flight path angle, pitch rate, and angle of attack, the control input $u(t)=[\delta_e(t), \delta_f(t)]$ consists of the elevator deflection and flaperon deflection, and
\begin{align}
    \label{eq:f}
    f(t,x) &= [-1.44\sin(0.4\pi t)-0.18\alpha^2, 0.18-0.36\alpha]^\top, \nonumber\\
    w(t)&=3\sin(0.6\pi t)
\end{align}
are assumed uncertainties that depend on both time and $\alpha$. 
The output vector is defined as \( y(t) = [\theta(t), \gamma(t)]^\top \), where \( \theta(t) = \gamma(t) + \alpha(t) \) represents the pitch angle. The objective is to ensure that the output vector \( y(t) \) tracks the reference trajectory \( r(t) = [\theta_c(t), \gamma_c(t)]^\top \), where \( \theta_c(t) \) and \( \gamma_c(t) \) denote the desired pitch angle and flight path angle, respectively. The system is subject to the following state and control constraints:
\begin{equation}\label{eq:cts-F16}
    \abs{\alpha(t)} \leq 4 \!\textup{ deg},\ \ \abs{\delta_e(t)} \leq 25 \!\textup{ deg},\ \ \abs{\delta_f(t)} \leq 22 \!\textup{ deg}.
\end{equation}
Additionally, we assume:
\begin{equation}\label{eq:F16-init}
    \linfnorm{r} \leq 10, \quad x(0) \in \mcX_0 = \Omega(0.1).
\end{equation}
Our simulations indicate that under the given constraints and uncertainty formulation, \( |\gamma(t)| \leq 10 \) and \( |q(t)| \leq 100 \) for all $t$. Consequently, using the convention in \cref{eq:constraints}, the state constraints can be expressed as \( x(t) \in \mcX \triangleq [-10, 10] \times [-100, 100] \times [-4, 4] \).
\vspace{-3mm}
\subsubsection{UC-MPC Design}\label{sec:sub-l1rg-f16}
\vspace{-2mm}
Based on the uncertainty formulation in \cref{eq:f}, for any set \( \mcZ \in \mathbb{R}^3 \) (domain for system state $x(t)$), the values \( L_{f_1,\mcZ} = 0.36 \max_{\alpha \in \mcZ_3} \abs{\alpha} \), \( L_{f_2,\mcZ} = 0.36 \), \( b_{f_1,\mcZ} = 1.44 + 0.18 \max_{\alpha \in \mcZ_3} \alpha^2 \) and \( b_{f_2,\mcZ} = 0.18 + 0.36 \max_{\alpha \in \mcZ_3} \abs{\alpha} \) satisfy \cref{assump:lipschitz-bnd-fi}. Additionally, \( b_w = 3 \) satisfies \cref{assump:lipschitz-bnd-fi}. 
For the design of the \loneAC~controller in \cref{eq:l1-control-law,eq:adaptive_law,eq:state-predictor}, we chose \( A_e = -10I_3 \) and parameterized the filter as \( \mcC(s) = \frac{k_f}{s + k_f} I_2 \), setting the bandwidth for both input channels as \( k_f = 200 \). We also set \( \gamma_1 = 0.01 \) and the estimation sample time \( T \) to \( 10^{-7} \) seconds, which satisfies \cref{eq:T-constraint}. To achieve individual performance bounds, \( \Tau_x^i[k] = 0.01 \) was assigned for all \( i, k \in \mathbb{Z}_1^3 \) with \( k \neq i \), ensuring compliance with \cref{eq:Tx-i-cts}. 
The reference command \( r(t) \) was specified as \( [9, 6.5] \) deg for \( t \in (0, 7.5) \) seconds and \( [0, 0] \) deg for \( t \in (7.5, 15) \) seconds. For the MPC design for the nominal system, the time horizon was set to 0.2 seconds, and the cost function in \cref{eq:mpc} at time \( \tau \) was defined as:
% \begin{align*}
    $\mathbf{J}(x_{\nt}(\tau), u(\cdot)) = \int_{0}^{0.2} 100\|Cx(t) - r(\tau + t)\| 
    % \\
    % &\quad     
    + \|u(t)\| + 100\|\dot{u}(t)\| \, dt$.
% \end{align*}
The term \( \|\dot{u}(t)\| \) was included to penalize control inputs with large time derivatives, ensuring a smoother control input trajectory. For simplicity, we did not include terminal constraint in the MPC formulation.

\vspace{-3mm}
\subsubsection{Performance bounds with different $K_x$}
\vspace{-2mm}
For the longitudinal dynamics of the aircraft, the angle of attack is critical to safety, and thus we aim to design a feedback gain $K_x$ such that the effect of the unmatched uncertainty $w(t)$ on $\alpha(t)$ is as small as possible.
By solving \eqref{eq:ppg_op} with $C_z = [0,0,1;0,0,0;0,0,0]$ and $D_z = [0,0;0.1,0;0,0.1]$, the feedback gain of the feedback controller was obtained as $K_x = [4.1075,1.6612,16.2228;-3.0950,0.0595,-1.6320]$ with $A_m=A+BK_x$ being Hurwitz. Here, we specifically designed $D_z$ to include small weights on the control input to ensure that $K_x$ is not excessively large and thus it results in a reasonable control input.
%Based on the $f(t,x) \in \mathcal W=[-2.4,2.4]\times [-0.9,0.9]$ when $x\in\mcX$ holds. 
The calculated individual bounds following \cref{al:1} are listed as follows: $\tilde{\rho}=[0.40,1.28,0.11]$, $\tilde{\rho}_u=[7.40,2.26]$, and $\rho_{u_a}=[4.34;1.65]$. Thus, the constrained domains for the nominal system were designed as $x(t)\in\mcX_{\nt}\trieq [-9.60,9.60]\times [-98.72,98.72]\times[-3.89,3.89]$ and $u(t)\in\mcU_{\nt}\trieq [-15.07,15.07]\times [-18.86,18.86]$. To illustrate the effect of the design of $K_x$ on the constraint tightening parameter $\beta$, we selected other $K_x$ matrices that make $A_m$ Hurwitz and calculated the corresponding performance bounds. The results are presented in \cref{tab:Kx}. It shows that when $K_x$ leads to a smaller PPG bound $\beta$, $\tilde \rho^3$ will be smaller, indicating a smaller effect of $w(t)$ on $\alpha(t)$. However, we achieve the decrease in the state at the cost of a larger control bound, $\tilde \rho^1_u$.
\begin{table*}[t] % 't' aligns the table at the top of the page
\centering 
\caption{\loneAC~performance bounds and PPG bounds with different $K_x$} \label{tab:Kx}
\begin{tabular}{cllllll}
\toprule
$K_x$ & $\beta$ & $\tilde \rho^1$ & $\tilde \rho^2$ & $\tilde \rho^3$ & $\tilde \rho_u^1$ & $\tilde \rho_u^2$  \\\midrule
{[}3.25 0.89 7.12; -6.10 -0.90 -10.00{]} & 0.07  & 0.38                                   & 1.36                                                         & 0.27                                                         & 6.55                                                           & 4.64                                                           \\
{[}3.29 1.23 9.71; -2.75 -0.10 -2.67{]}  & 0.05  & 0.43                                   & 1.30                                                          & 0.17                                                         & 9.00                                                           & 3.41                                                           \\
{[}4.11 1.66 16.22; -3.10 0.06 -1.63{]}   & 0.03  & 0.40                                   & 1.28                                                         & 0.11                                                        & 9.93                                                           & 3.14 \\
\bottomrule
\end{tabular}
\end{table*}
\vspace{-3mm}
\subsubsection{Simulation Results}
\vspace{-2mm}
For comparison, we also implemented a vanilla MPC and a tube MPC (TMPC) \cite{mayne2005robust}. In the case of UC-MPC, the control law follows \cref{eq:total-control-law}: \( u(t) = K_xx(t) + u_{\text{opt}}(t) + u_a(t) \), where \( u_{\text{opt}}(t) \) is obtained by solving the optimization problem in \cref{eq:mpc} with the constrained domains \( \mcX_{\nt} \) and \( \mcU_{\nt} \) calculated earlier. For the implementation of the \loneAC, instead of using a sample time \( 10^{-7} \) used in \cref{sec:sub-l1rg-f16} for deriving the theoretical bounds, we used a larger sample time of \( 10^{-4} \) which does not satisfy \eqref{eq:T-constraint}. We demonstrate that the theoretical bounds derived under a sample time of \( 10^{-7} \) seconds still hold in simulations with a sample time of \( 10^{-4} \) seconds. These observations suggest that the calculated performance bounds are not limited to the sampling time used for theoretical derivations and it's not necessary to estimate the uncertainty at such a rate.
For the vanilla MPC method, the control input \( u(t) = u_{\text{opt}}(t) \) is determined using the nominal system at each time step. However, the optimization for the vanilla MPC becomes infeasible during the simulation because the nominal system is used to compute the control input without accounting for uncertainties. To address this issue and complete the simulation, state constraints were incorporated as soft constraints, which revert to the original constraints when feasible. 
For TMPC, we adopted the control law \( u(t) = u_{\text{opt}}(t) + K_x(x - x_{\nt}) \) as described in \cite{mayne2005robust}, where \( u_{\text{opt}}(t) \) is obtained by solving \cref{eq:mpc} with the constrained domains \( x \in \mathcal{X} \ominus Z \) and \( u \in \mathcal{U} \ominus K_xZ \). Here, \( Z \) is the disturbance invariance set for the controlled uncertain system \( \dot{x}(t) = (A + BK_x)x(t) + Bf(t, x(t)) + B_uw(t) \) with respect to the uncertainties \( f \) and \( w \). 
A discrete-time formulation was used for implementing all MPC methods.
\begin{figure}[t]
    \centering
    \includegraphics[width=1\columnwidth]{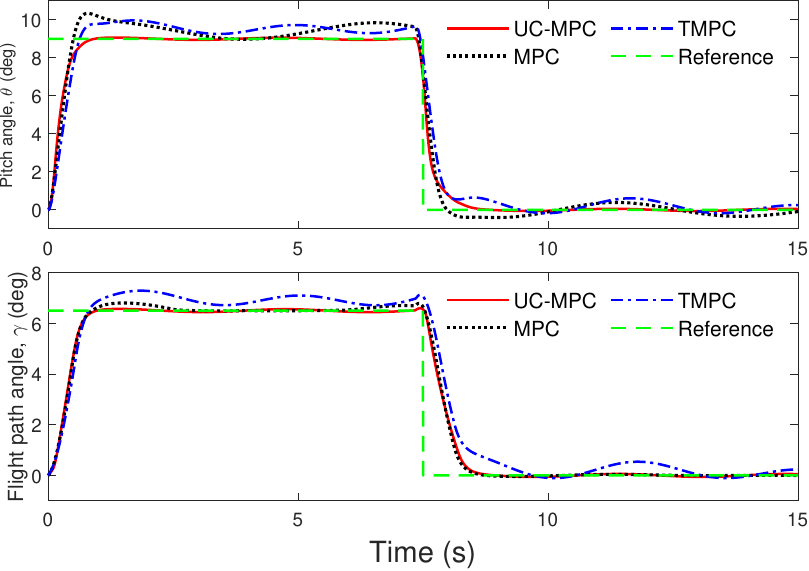} 
    \vspace{-6mm}
    \caption{Tracking performance under MPC, TMPC and UC-MPC (ours).}\label{fig:tracking}
        \vspace{-1mm}
\end{figure}
\begin{figure}[t]
    \centering
    \includegraphics[width=1\columnwidth]{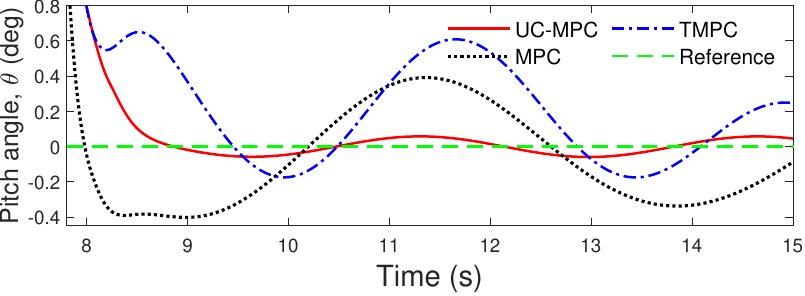} 
    \vspace{-6mm}
    \caption{Zoomed-in view of tracking performance on $\theta$ under MPC, TMPC and UC-MPC.}\label{fig:tracking_zoom}
        \vspace{-1mm}
\end{figure}
\begin{figure}[!t]
    \centering
    \includegraphics[width=1\columnwidth]{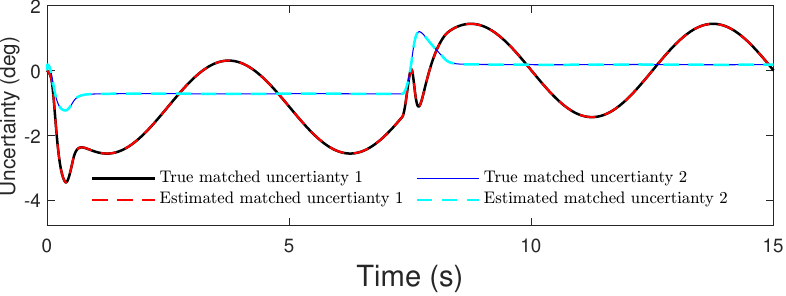} 
        \vspace{-6mm}
    \caption{Actual and estimated matched uncertainties $f(t,x(t))$ under UC-MPC. $f(t,x(t))$ has two elements, corresponding to 1 and 2 in figure.} 
    \label{fig:uncert}
        \vspace{-1mm}
\end{figure}
\begin{figure}[!t]
    \centering
    \includegraphics[width=1\columnwidth]{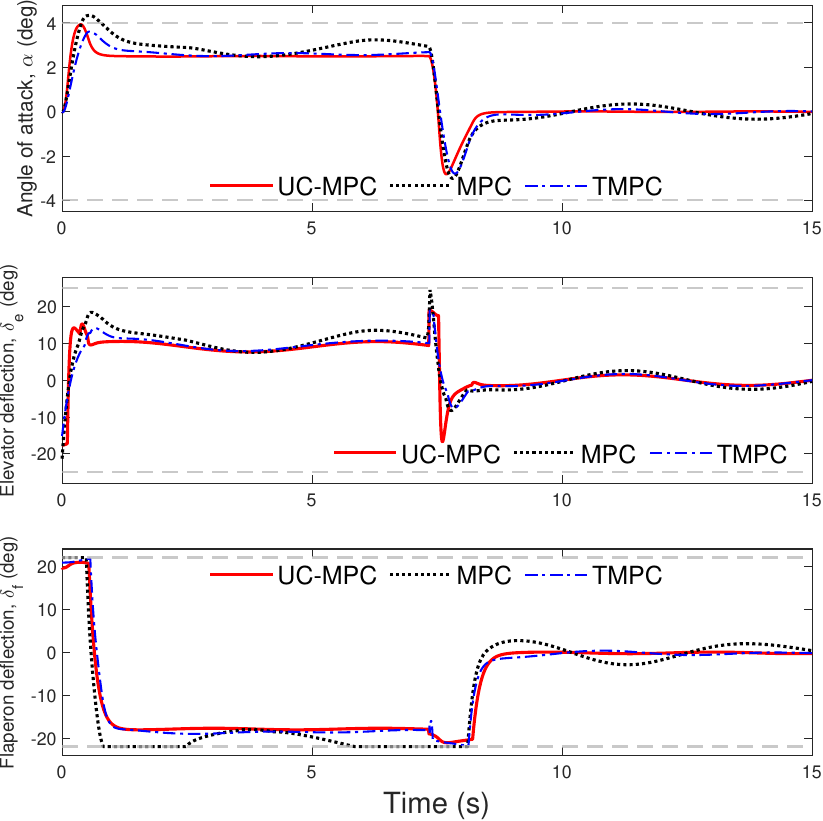}
    \vspace{-6mm}
    \caption{Trajectories of constrained states (top), and control inputs (middle and bottom) under MPC, TMPC, and UC-MPC. Gray dash-dotted lines illustrate the constraints specified in \cref{eq:cts-F16}.}    \label{fig:constraints}
        \vspace{-1mm}
\end{figure}
\begin{figure}[t]
    \centering
    \includegraphics[width=1\columnwidth]{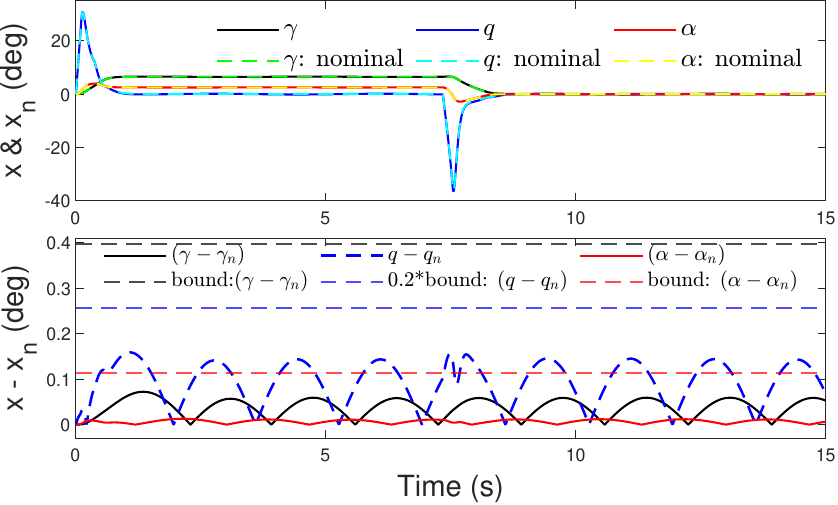} 
    \vspace{-6mm}
    \caption{Trajectories of states of the uncertain system ($x(t)$) under UC-MPC and of the nominal system ($x_{\nt}(t)$) and their differences. }    \label{fig:x-xn}
        \vspace{-1mm}
\end{figure}

\begin{figure}[t]
    \centering
    \includegraphics[width=1\columnwidth]{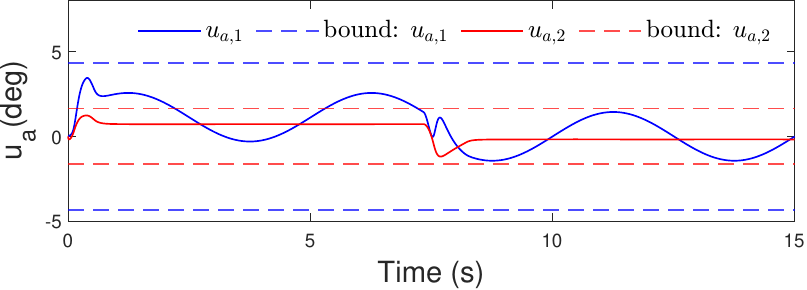} 
    \vspace{-6mm}
    \caption{Adaptive control inputs and theoretical bounds.}    \label{fig:ul1}
    \vspace{-1mm}
\end{figure}
\vspace{-3mm}
The simulation results are shown in \cref{fig:tracking,fig:tracking_zoom,fig:tracking_zoom_no_w,fig:constraints,fig:uncert,fig:x-xn,fig:ul1}.
First, \cref{fig:tracking,fig:tracking_zoom} show that UC-MPC resulted in an improved tracking performance compared with both MPC and TMPC, achieving a small error between the actual output $y(t)$ and the reference trajectory $r(t)$. Based on \cref{fig:uncert}, the matched uncertainty estimation from UC-MPC was accurate, and thus the proposed UC-MPC is able to approximately cancel the matched uncertainty $f(t,x(t))$ with the adaptive control input $u_a(t)$. According to the zoomed-in view in \cref{fig:tracking_zoom}, the output trajectories from UC-MPC did not precisely follow the reference trajectories in steady-state. The existence of the unmatched uncertainty $w(t)$ results in the oscillations of the closed-loop trajectory in steady-state, at the same frequency. In contrast, the output trajectories under MPC and TMPC were influenced by both the matched uncertainty and unmatched uncertainty, resulting in a larger tracking error in steady-state.

In terms of enforcing constraints, \cref{fig:constraints} demonstrates that both TMPC and UC-MPC managed to satisfy all constraints due to the implementation of constraint tightening. On the other hand, the MPC output breached the state constraint boundary as it did not account for uncertainties during optimization without the use of constraint tightening.
It is also worth mentioning that TMPC enforced the constraints at the cost of larger steady-state tracking error, as shown in \cref{fig:tracking_zoom}. 
Figure~\ref{fig:x-xn} displays the trajectories of the actual state $x(t)$ and the nominal state $x_{\nt}(t)$, as well as the error between them, and the derived individual bounds from \loneAC . 
Under UC-MPC, the actual states remained consistently close to the nominal states, with the error staying below the computed individual bound.
Similarly, as shown in \cref{fig:ul1}, the adaptive control inputs $u_{\at}(t)$ remained within the theoretical bounds calculated according to \cref{eq:ua-i-bnd-w-Ua-defn}. 

For tube MPC, solving the optimization problem with the additional decision variable associated with the initial condition results in an average computation time of $0.05$~seconds per control step. 
In contrast, the optimization problem in UC-MPC has essentially the same structure as a standard MPC formulated purely on the nominal system. As a result, the optimization is simpler and requires an average computation time of $0.03$~seconds per step. The only additional online computation arises from evaluating the adaptive control input $u_a$, which involves propagating the state predictor and updating the uncertainty estimate through the adaptive law. This procedure requires an average of only $3 \times 10^{-5}$~seconds per step.
These results demonstrate that UC-MPC significantly reduces the online computational burden compared with tube MPC, while the computational overhead introduced by the adaptive component is negligible.

In addition to the previous simulation, we also performed an experiment with only matched uncertainties, i.e., with $w(t)=0$. In this scenario, as illustrated in \cref{fig:tracking_zoom_no_w}, the proposed UC-MPC effectively compensated for the uncertainty, achieving nearly perfect steady-state tracking. Such high-accuracy tracking is unattainable with existing robust or tube MPC methods, and the performance of adaptive MPC approaches would also be constrained when faced with inaccurate parameter estimation.
\begin{figure}[!t]
    \centering
    \includegraphics[width=1\columnwidth]{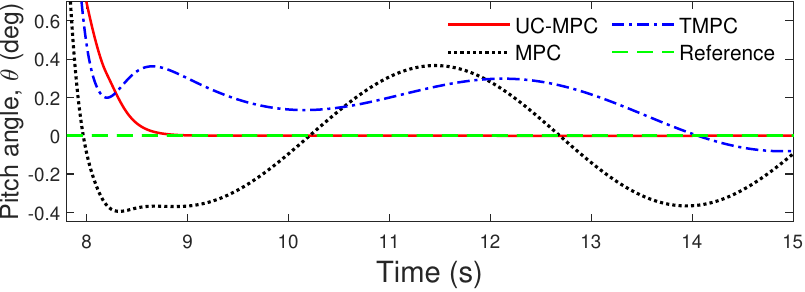} 
    \vspace{-6mm}
    \caption{Zoomed-in view of tracking performance on $\theta$ under MPC, TMPC and UC-MPC when $w(t)=0$.}\label{fig:tracking_zoom_no_w}
        \vspace{-1mm}
\end{figure}

\vspace{-3mm}
\subsubsection{Scaling of $w(t)$}
\vspace{-2mm}
\label{sec:scaling_sim}
To demonstrate the benefit of scaling of $w(t)$, as presented in \cref{sec:scaling of w}, we consider a simplified model for the F16 aircraft represented by \cref{eq:ol-dynamics-f16} with
{\setlength{\arraycolsep}{1pt}
\[ A \!= \!\!
\begin{bmatrix}
    0 & 0.0067& 1.34 \\
    0 & -0.869 & 43.2 \\
    0 & 0.993 & -1.34
    \end{bmatrix} \!,  
    % B \!=\!\! \begin{bmatrix}
    % 0.169 \\
    % -17.3 \\
    % -0.169 
    % \end{bmatrix}\!, B_u\! = \!\!\begin{bmatrix}
    %     0.1500 & 0.0007\\0.0015& -0.0007 \\ -0.0015 &0.0750 
    % \end{bmatrix}, 
    % \\
    B \!\mid\! B_u \!=\!\!
\left[
\begin{array}{c|cc}
0.169  & 1.5\text{e-}1 & 7\text{e-}4 \\
-17.3  & 1.5\text{e-}3  & -7\text{e-}4 \\
-0.169 & -1.5\text{e-}3 & 7.5\text{e-}2
\end{array}
\right]\!,
\]}where the control input $u(t)$ only includes $\delta_e(t)$, and the unmatched uncertainty includes two channels. Now, we consider the following artificial uncertainties:
\begin{align}
    f(t,x) &= -1.44\sin(0.4\pi t)-0.18\alpha^2, \nonumber\\
    w(t)&=[5\sin(0.6\pi t), \cos(0.4\pi t)]^\top,
\end{align}
where obviously different channels in $w(t)$ have different upper bounds, and a scaling of $w(t)$ is necessary to achieve the smallest performance bounds for constraint tightening.
Given the feedback gain matrix $K_x = [1.6238,0.7151,4.8245]$ achieved from the same method as before, the performance bounds from \loneAC~with or without scaling of $w(t)$ are included in \cref{tab:scaling_w}. It is found that when we utilize the scaling of $w(t)$, the calculated performance bounds from \loneAC~ for all control input and state channels are much smaller compared with those without applying scaling of $w(t)$; this validates our discussion in \cref{sec:scaling of w}.

\begin{table}[] 
\centering
\caption{Performance bounds from \loneAC~with and without scaling of $w(t)$.}
\label{tab:scaling_w}
\begin{tabular}{lllllll}
\toprule
                & $\tilde \rho^1$   & $\tilde \rho^2$   & $\tilde \rho^3$   & $\tilde \rho_u^1$   &  &  \\ \midrule
With Scaling    & 1.06 & 1.54  & 0.53 & 9.69 &  &  \\
Without Scaling & 1.35 & 2.15 & 0.74 & 11.63 &  &  \\
                \bottomrule
\end{tabular}
\end{table}

\vspace{-2mm}
\subsection{Soft landing of a spacecraft on an asteroid}
\vspace{-2mm}
Landing a spacecraft on an asteroid presents distinct challenges. First, the asteroid's shape, density, and resulting gravity field are inherently difficult to characterize accurately. Consequently, the nominal system used for controller design often differs significantly from the actual, unknown system. This discrepancy means that controllers designed to meet specific performance and safety requirements for the nominal system may fail to ensure these properties for the real system. Second, the vast distance between the spacecraft and Earth introduces communication delays, making real-time ground-based control infeasible. As a result, the spacecraft must be made autonomous and made to rely entirely on onboard control frameworks. Given these challenges, the UC-MPC framework emerges as a promising solution, offering robust performance and safety guarantees for achieving a soft landing on an asteroid despite significant uncertainties.
\vspace{-3mm}
\subsubsection{System Dynamics}
\vspace{-2mm}
We adopt the system model from \cite{dunham2016constrained} for the soft landing of a spacecraft on an asteroid, where we assume that the spacecraft is equipped with six ideal thrusters, one for each direction, and the asteroid itself is rotating at a constant angular rate about the principal axis with the largest moment of inertia. The equations of motion for the relative position and velocity of the spacecraft in an asteroid fixed frame are given by:
\begin{subequations}\label{eq:original_dynamics}
    \begin{equation}
        \ddot{x} = F_{g,x} + 2n\dot{y}+n^2x+U_x,
    \end{equation}
    \begin{equation}
        \ddot{y} = F_{g,y} - 2n\dot{x}+n^2y+U_y   ,     
    \end{equation}
    \begin{equation}
        \ddot{z} = F_{g,z} +U_z,      
    \end{equation}
\end{subequations}
where $n$ is the rotation rate of the asteroid, $F_{g,x},F_{g,y},F_{g,z}$ are the asteroid gravity forces per unit mass acting on the center of mass of the spacecraft, and $U_x,U_y,U_z$ are the control inputs, i.e., thrust induced accelerations.  
Since the asteroid is not perfectly spherical, the gravity forces can be calculated based on the gravity potential, which is given by:
\begin{align}\label{eq:gravity potential}
    G = -\frac{\mu}{R_0}\sum^\infty_{l=0}\sum^l_{m=0}(C_{l,m}V_{l,m}+S_{l,m}W_{l,m}),
\end{align}
where $\mu$ is the universal gravity constant times the mass of the asteroid, $R_0$ is a reference sphere radius, $C_{l,m}$ and $S_{l,m}$ are the Stokes coefficients, and $V_{l,m}$ and $W_{l,m}$ are functions of position calculated using the recursion scheme in \cite{montenbruck2002satellite}. Based on the gravity potential, we can then calculate the gravity force by: $[F_{g,x}\; F_{g,y}\; F_{g,z}]^\top = - \nabla G$.
To obtain accurate values of the Stokes coefficients, $C_{l,m}$ and $S_{l,m}$, for a small nonuniform body, a direct measurement of spacecraft motion near an asteroid over a long period of time is required. A common alternative is to consider the approximation of the asteroid as a constant density tri-axial ellipsoid with semi-major axes $c\le b\le a$ and shape defined by
$(\frac{x}{a})^2 + (\frac{y}{b})^2+(\frac{z}{c})^2 = 1$.
According to \cite{dunham2016constrained,scheeres2016orbital}, under the constant density ellipsoid assumption, the Stokes coefficients of the asteroid can be calculated out to the fourth order as
\begin{align} \label{eq:stoke coe}
         C_{20} &= \frac{1}{5R_0^2}\frac{c^2-(a^2+b^2)}{2}, C_{22} = \frac{a^2-b^2}{20R^2_0} \\
        C_{40} \!&= \!\frac{15}{7}(C^2_{20}+2C^2_{22}),C_{42} \!= \!\frac{5}{7}C_{20}C_{22}, C_{44} \!=\! \frac{5}{28}C^2_{22} \nonumber
\end{align}
with all $S_{l,m}$ in \eqref{eq:gravity potential} being zero and all other $C_{l,m}$ being zero. 
Using the coefficients calculated in \eqref{eq:stoke coe}, the original nonlinear dynamics \eqref{eq:original_dynamics} can be linearized about a target position equilibrium, e.g., $X_0 = [x_0\; y_0 \;z_0\; 0\; 0\; 0]^\top$, as:
\begin{align}
    \label{eq:linearized dynamics}
    \delta\dot{X}(t) = A\delta X(t) + B\delta U(t),
\end{align}
where $X = [x\; y\; z\; \dot{x}\; \dot{y}\; \dot{z}]^\top$,  $\delta X = X-X_0$, $\delta U = U-U_0$, and $U_0$ is the control necessary for $X_0$ to be an equilibrium.
\vspace{-3mm}
\subsubsection{Controller Design}
\vspace{-2mm}
For the soft landing problem, we model the nominal system \eqref{eq:nominal system new} using the linearized dynamics \eqref{eq:linearized dynamics} derived from the estimated asteroid mass and Stokes coefficients using the constant density ellipsoid approach from \eqref{eq:stoke coe}. The real system is represented by the original nonlinear dynamics \eqref{eq:original_dynamics} with true asteroid mass and Stokes coefficients. Uncertainties arising from linearization, mass estimation, and Stokes coefficient approximation are treated as discrepancies between the nominal and real systems, and \loneAC~ is employed to compensate for these uncertainties. Empirical bounds on these uncertainties, obtained through simulation, are used to tighten constraints in the nominal system, ensuring robust constraint enforcement.

The known gravity model of Eros from \cite{yeomans2000radio} is used to simulate the spacecraft motion as the real system, with $n = 3.3118e\text{-}4$ rads/sec, $\mu = 4.46e\text{-}4\; \text{m}^3/\text{sec}^2$. Stokes coefficients are sourced from NEAR-Shoemaker mission data \cite{miller2002determination}. For the controller design, i,e., nominal system, the constant density ellipsoid is designed with a dimension of $a = 20$ km, $b = 5$ km, and $c = 5$ km and the estimated asteroid mass is designed as $\tilde{\mu} = 4.1e\text{-}4 \text{m}^3/\text{sec}^2$, which is about $10\%$ smaller than the true value.

Following \cite{dunham2016constrained}, the spacecraft landing mission is divided into two distinct phases. During Phase I, the circumnavigation phase, the spacecraft approaches a point near the landing site while avoiding collisions. An ellipsoidal safety boundary around the asteroid ($a = 20$ km, $b = 12$ km, and $c = 7.5$ km) is defined. Collision avoidance is ensured by finding the closest point on the ellipsoid from the spacecraft position, defining a normal vector pointing out of the ellipsoid at the closest point, and constraining the spacecraft motion such that the product between the normal vector and the vector from the closest point to the spacecraft position is positive. The maximum control input is set as $u_{\text{max}} = 1e\text{-}4$ kN/kg and the maximum speed in each direction is required to be smaller than $0.01$ km/s. 
In Phase II, the landing phase, as the spacecraft nears the touchdown point, it follows a narrowing, pyramid-shaped boundary for precise landing. The pyramid-shaped constraints are constructed following the formulation (37) of \cite{dunham2016constrained}. The maximum control input is increased to $u_{\text{max}} = 2e\text{-}3$ kN/kg to ensure a safe landing. For both phases, the MPC cost function is designed as $\delta X^{\top}Q\delta X + \delta U^{\top}R\delta U$ with $Q = diag([1,1,1,100,100,100])$ and $R = 5e8$ during Phase I, and $R = 5e4$ during Phase II. 
\vspace{-3mm}
\subsubsection{Simulation Results}
\vspace{-2mm}
The spacecraft starts at $X_{\text{init}} = [20,-15,0,0,0,0]^\top$ and aims to land at $X_{\text{target}} = [5.86,5.21,1.54,0,0,0]^\top$ on the asteroid. Phase I guides the spacecraft to $X_{\text{target}_1} = [5,12,1,0,0,0]^\top$, after which Phase II manages the descent to $X_{\text{target}}$.
The simulation results of Phase I are presented in \cref{fig:p1_traj,fig:p1_constraint,fig:p1_uncertainty}. \cref{fig:p1_traj} shows three position trajectories: the nominal trajectory generated using vanilla MPC on the nominal linear dynamics \eqref{eq:linearized dynamics}, the UC-MPC trajectory generated with the proposed control approach on the real system dynamics, and the MPC trajectory generated using vanilla MPC on the real system dynamics. As expected, the trajectory of the real system under vanilla MPC deviates significantly from the nominal trajectory due to the mismatch in system dynamics. However, with the proposed UC-MPC framework, the real system trajectory closely tracks the nominal trajectory thanks to the accurate uncertainty estimation and compensation in \cref{fig:p1_uncertainty}. 
In addition, \cref{fig:p1_constraint} shows that the proposed UC-MPC ensures constraints enforcement due to the constraints tightening, while vanilla MPC violates the speed constraints. 
The final steady-state positions further highlight the performance improvement with UC-MPC. The target position for Phase I is $[5,12,1]^\top$. Using UC-MPC, the spacecraft reaches $[5.08,12.73, 1.15]^\top$ with a distance of 0.75 km from the target. In contrast, vanilla MPC results in the spacecraft reaching $[4.32,12.81,1.15]$, with a larger deviation of 1.07 km from the target. This demonstrates the superior accuracy of the UC-MPC framework.
\begin{figure}[!t]
    \centering
    \includegraphics[width=1\columnwidth]{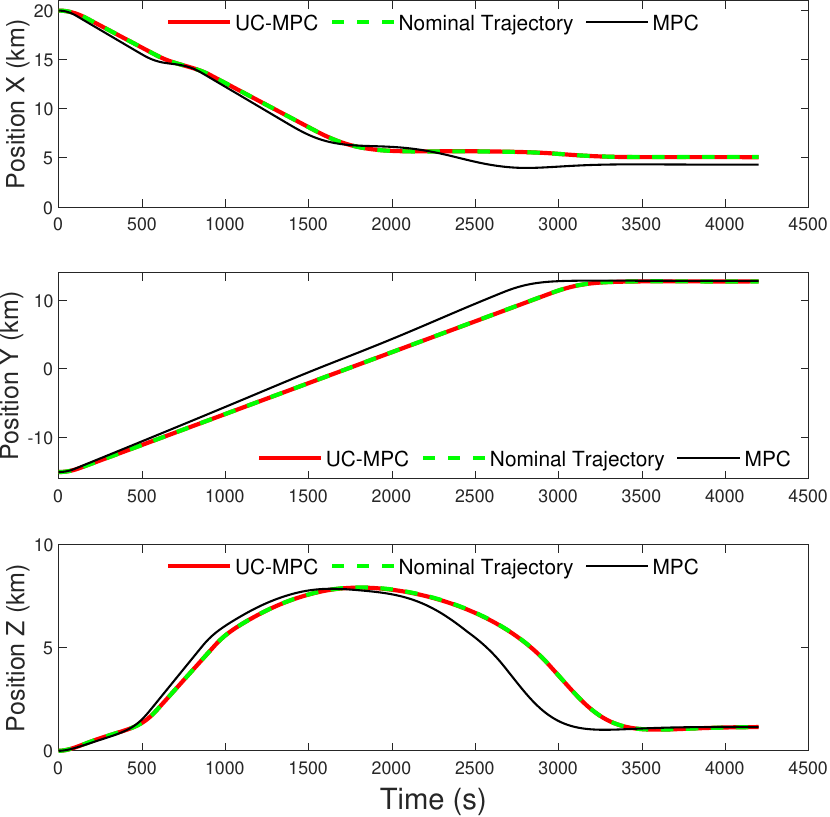} 
    \vspace{-6mm}
    \caption{Phase I position trajectories of the spacecraft. UC-MPC and MPC trajectories are simulated using the real system dynamics. The nominal trajectory is simulated using nominal linear dynamics \eqref{eq:linearized dynamics} with vanilla MPC.}\label{fig:p1_traj}
        \vspace{-1mm}
\end{figure}
\begin{figure}[!t]
    \centering
    \includegraphics[width=1\columnwidth]{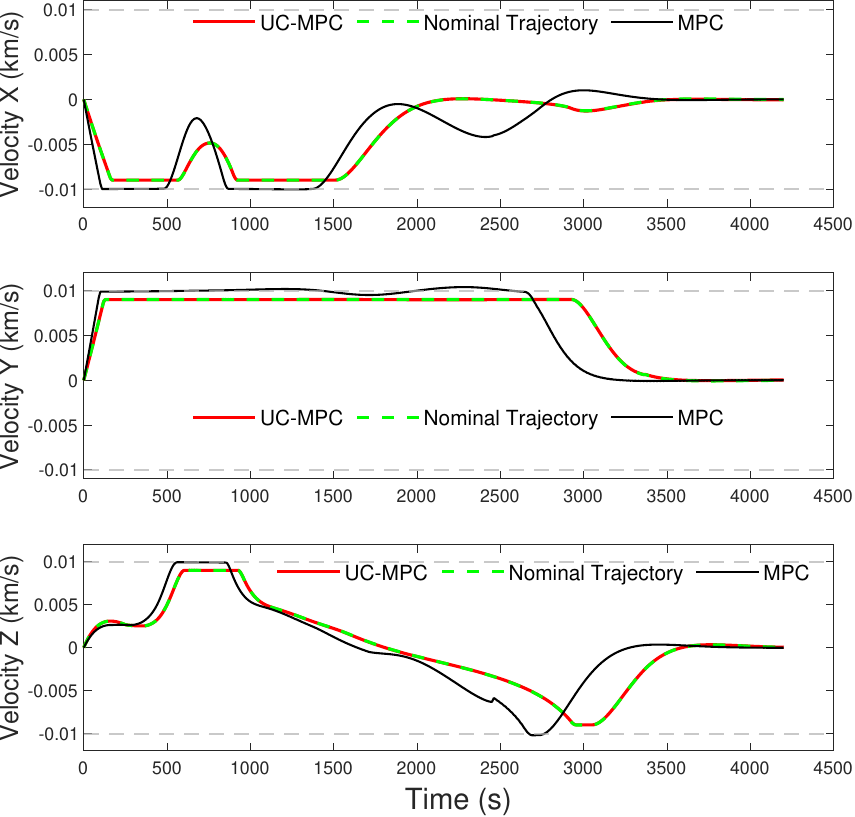} 
    \vspace{-6mm}
    \caption{Phase I trajectories of the constrained velocities under UC-MPC, nominal trajectory, and vanilla MPC. Gray dash-dotted lines illustrate the constraints.} \label{fig:p1_constraint}
        \vspace{-1mm}
\end{figure}

\begin{figure}[!t]
    \centering
    \includegraphics[width=1\columnwidth]{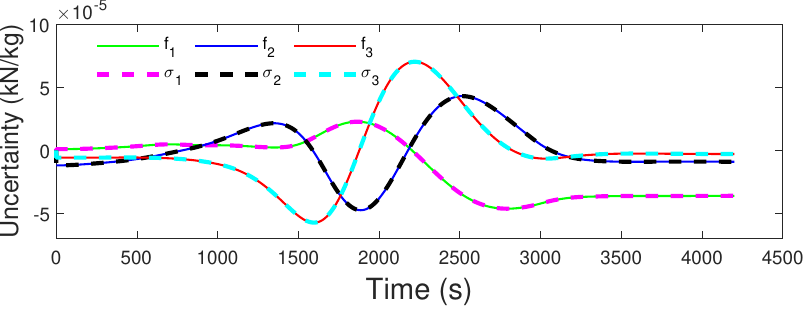} 
    \vspace{-6mm}
    \caption{Phase I actual and estimated uncertainties under UC-MPC. Here $f_i$ represents the true uncertainty, and $\sigma_i$ represents the estimated uncertainty from \loneAC.}\label{fig:p1_uncertainty}
        \vspace{-1mm}
\end{figure}

The simulation results of Phase II are illustrated in \cref{fig:p2_3Dtraj,fig:p2_traj}. Similar to Phase I, the trajectory under vanilla MPC deviates from the nominal trajectory, whereas the UC-MPC trajectory nearly perfectly tracks the nominal trajectory, benefiting from accurate uncertainty estimation. As depicted in  \cref{fig:p2_3Dtraj}, collision avoidance is successfully achieved with the pyramid-shaped constraints during this phase. In terms of final steady-state position, UC-MPC guides the spacecraft to $[5.88,5.21,1.55]^\top$ km just 0.02 km from the target. In comparison, vanilla MPC results in a position of $[5.94,5.21,1.55]^\top$, with a larger error of 0.15 km. These results further emphasize the superior performance of UC-MPC in achieving precise control.

To validate the robustness of the proposed UC-MPC framework, we conducted additional tests in Phase I using varying initial positions while keeping the target position constant. The results, shown in \cref{fig:p1_diff}, demonstrate that the proposed control framework performs reliably with small state error between the real system and the nominal system across different initial conditions, highlighting its robustness to variations in starting positions.

\begin{figure}[!t]
    \centering
    \includegraphics[width=1\columnwidth]{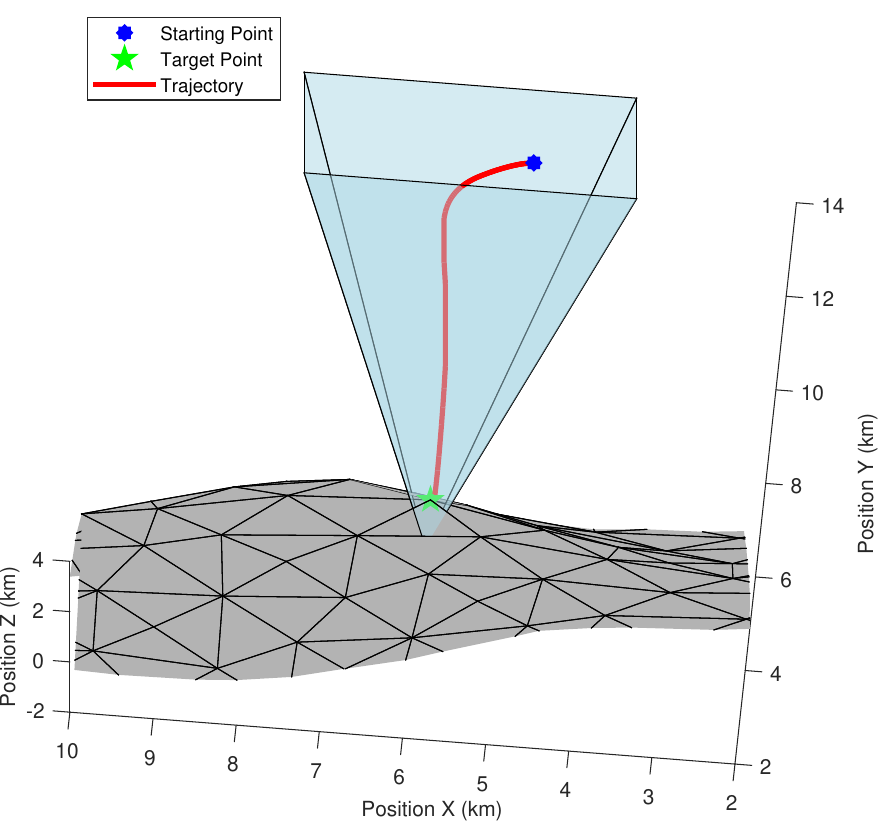} 
    \vspace{-6mm}
    \caption{Phase II 3D position trajectory of UC-MPC with pyramid-shaped constraints.}\label{fig:p2_3Dtraj}
        \vspace{-1mm}
\end{figure}

\begin{figure}[!t]
    \centering
    \includegraphics[width=1\columnwidth]{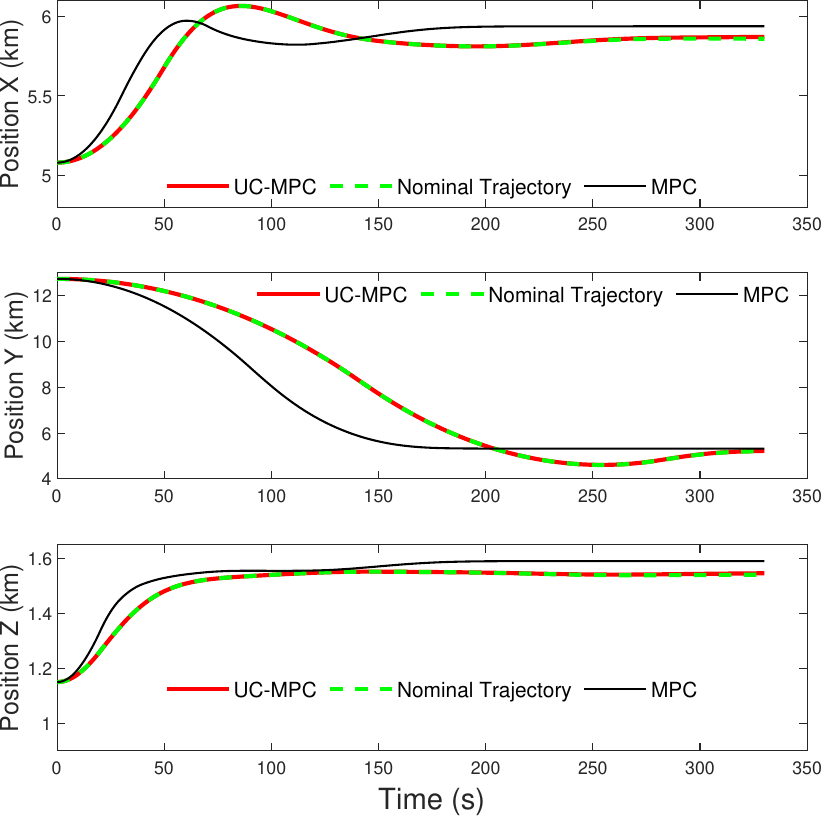} 
    \vspace{-6mm}
    \caption{Phase II position trajectories of the spacecraft. UC-MPC and MPC trajectories are simulated using the real system dynamics. The nominal trajectory is simulated using nominal system dynamics with vanilla MPC.}\label{fig:p2_traj}
        \vspace{-1mm}
\end{figure}

\begin{figure}[!t]
    \centering
    \includegraphics[width=1\columnwidth]{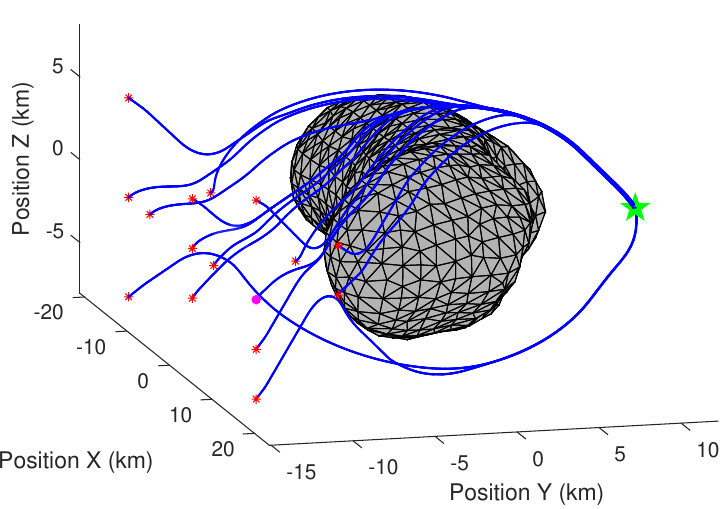} 
     \includegraphics[width=1\columnwidth]{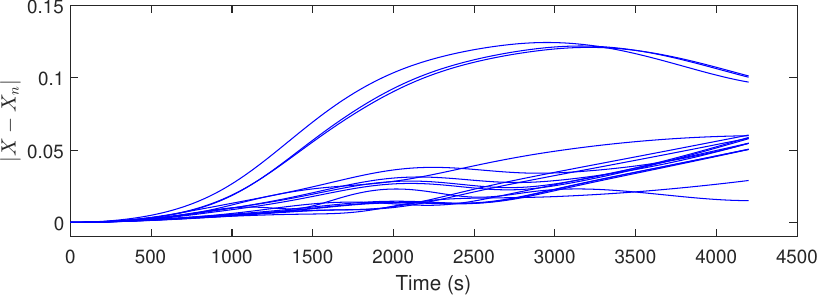} 
    \vspace{-6mm}
    \caption{Trajectories of the spacecraft position under our UC-MPC (top) and   
norm of error between the states of the actual closed-loop system under UC-MPC and the states of the nominal system under nominal MPC (bottom),  i.e., $\|X(t)-X_{\text{n}}(t)\|$, for different initial positions.  The green star indicates the target point, and the red stars are different initial positions of the spacecraft.
    }\label{fig:p1_diff}
        \vspace{-1mm}
\end{figure}

\section{Conclusion}
\vspace{-2mm}
This paper presented UC-MPC, a robust adaptive MPC framework designed to handle linear systems with both matched and unmatched uncertainties while enforcing state and input constraints. The framework incorporates a robust feedback controller designed using LMI methods to reduce the effect of unmatched uncertainties on target output channels, and an adaptive controller to dynamically estimate and compensate for matched uncertainties, achieving uniform performance bounds on the error between the actual system and an ideal uncertainty-free system. These performance bounds are then used to tighten the constraints for the actual system, enabling the design of an MPC problem for the nominal system with adjusted constraints. Simulation results from a flight control problem and a spacecraft soft landing case study demonstrate that the proposed UC-MPC framework outperforms existing methods in terms of performance while maintaining constraint enforcement. 

A limitation of the UC-MPC framework is the assumption that there is sufficient control authority to fully cancel the matched uncertainty through its estimation. Future work will focus on addressing this issue by enabling partial uncertainty cancellation in scenarios where control authority is limited.
%%%%%%%%%%%%%%%%%%%%%%%%%%%%%%%%%%%%%%%%%%%%%%%%%%%%%%%%%%%%%%%%%%%%%%%%%%%%%%%%
\bibliographystyle{plain}        % Include this if you use bibtex 
\bibliography{autosam}           % and a bib file to produce the 
                                 % bibliography (preferred). The
                                 % correct style is generated by
                                 % Elsevier at the time of printing.
\appendix
\section{Proofs}
\vspace{-3mm}
Before presenting the proofs of the lemmas presented in this paper, we first introduce the following lemmas.
\begin{lemma}\label{lem:L1-Linf-relation}
\cite[Lemma 2]{zhao2023integrated}
For a stable proper MIMO system $\mathcal H(s)$ with states $x(t)\in \mbR^n $, inputs $u(t)\in \mbR^m$ and outputs $y(t)\in \mbR^p$, under zero initial states, i.e., $x(0) =0$, we have 
$\linfnormtruc{y}{\tau}\leq \lonenorm{\mathcal H(s)}\linfnormtruc{u}{\tau}$, for any $\tau\geq 0$. Furthermore, for any matrix $\Tau\in \mbR^{q\times p}$,  we have $\linfnorm{\Tau \mathcal H(s)}\leq \infnorm{\Tau}\linfnorm{\mathcal H(s)}$. 
\end{lemma}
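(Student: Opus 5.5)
The plan is to prove both claims directly from the convolution representation of $\mathcal H(s)$. I will use the standard $\lone$-norm convention of $\lone$ adaptive control: for the impulse-response matrix $h=[h_{ij}]$, define $\lonenorm{\mathcal H(s)}=\max_{i}\sum_{j}\lonenorm{h_{ij}}$. This is the row-sum induced $\infty$-norm of the matrix of entrywise integrals, and it is the quantity that makes the $\mathcal L_\infty\to\mathcal L_\infty$ bound tight. I read the second claim as a statement about $\lone$ norms, i.e.\ $\lonenorm{\Tau\mathcal H(s)}\le\infnorm{\Tau}\lonenorm{\mathcal H(s)}$, since that is how it is used later in the appendix.

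\textbf{Step 1 (representation).} Since $\mathcal H(s)=C_h(sI-A_h)^{-1}B_h+D_h$ is stable and proper, its impulse response is $h(t)=D_h\delta(t)+g(t)$ with $g(t)=C_he^{A_ht}B_h$. Stability of $A_h$ gives exponential decay of $g$, so each entry $g_{ij}$ is integrable. With $x(0)=0$,
\begin{equation*}
y(t)=D_hu(t)+\int_0^t g(t-s)u(s)\,ds .
\end{equation*}
The $\lone$ norm of the entry $h_{ij}$ is then $|D_{h,ij}|+\int_0^\infty|g_{ij}(\sigma)|\,d\sigma$.

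\textbf{Step 2 (truncated bound).} Fix $\tau\ge0$ and $t\in[0,\tau]$. By causality, $y(t)$ depends only on $u$ restricted to $[0,t]\subseteq[0,\tau]$, which is what makes the truncated norm appear. For each output component $i$,
\begin{equation*}
|y_i(t)|\le\sum_j|D_{h,ij}||u_j(t)|+\sum_j\int_0^t|g_{ij}(t-s)||u_j(s)|\,ds\le\linfnormtruc{u}{\tau}\sum_j\lonenorm{h_{ij}} .
\end{equation*}
The last inequality extends the integral in $\sigma=t-s$ from $[0,t]$ to $[0,\infty)$. Taking the maximum over $i$ and the supremum over $t\in[0,\tau]$ gives $\linfnormtruc{y}{\tau}\le\lonenorm{\mathcal H(s)}\linfnormtruc{u}{\tau}$.

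\textbf{Step 3 (matrix premultiplication).} The impulse response of $\Tau\mathcal H(s)$ is $\Tau h$, whose $(k,j)$ entry is $\sum_i\Tau_{ki}h_{ij}$. By the triangle inequality for the $\lone$ norm of scalar signals,
\begin{equation*}
\sum_j\Bigl\lVert\sum_i\Tau_{ki}h_{ij}\Bigr\rVert_{\lone}\le\sum_i|\Tau_{ki}|\sum_j\lonenorm{h_{ij}}\le\Bigl(\sum_i|\Tau_{ki}|\Bigr)\lonenorm{\mathcal H(s)} .
\end{equation*}
Maximizing over $k$ gives $\infnorm{\Tau}\lonenorm{\mathcal H(s)}$.

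The only delicate point is the feedthrough term. Because $\mathcal H$ is merely proper rather than strictly proper, the impulse response contains a Dirac component. I handle it separately as a pointwise term $D_hu(t)$, which contributes $|D_{h,ij}|$ to each entry norm. The paper's displayed definition uses $\int\|g(t)\|\,dt$, which depends on the chosen vector norm; this should be aligned with the row-sum convention above, since the row-sum convention is what yields the stated inequality with constant one in the $\infty$-norm.
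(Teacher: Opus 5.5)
Your proof is correct. The paper gives no proof of its own: it imports the lemma from \cite{zhao2023integrated}. Both claims are standard $\mathcal{L}_1$-norm facts (cf.\ \cite{hovakimyan2010L1-book}). They are usually obtained by citing the truncated $\mathcal{L}_\infty$-induced bound and the submultiplicativity of the $\mathcal{L}_1$ norm under cascading, with $\Tau$ viewed as a static gain whose $\mathcal{L}_1$ norm is $\infnorm{\Tau}$.

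You instead derive both claims directly from the convolution formula. This costs a few lines but makes two things explicit that the citation hides.

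First, the constant-one inequalities hold under the row-sum convention $\lonenorm{\mathcal{H}(s)}=\max_i\sum_j\lonenorm{h_{ij}}$. They do not hold under the definition displayed in the paper's notation, where $\norm{\cdot}$ is the $2$-norm, so your caveat is substantive. For $\mathcal{H}(s)=\frac{1}{s+1}[1\ \ 1]$ one has $\int_0^\infty\norm{g(t)}\,dt=\sqrt{2}$. Yet $u(t)\equiv[1,\ 1]^\top$, with $\linfnorm{u}=1$, gives $y(t)=2(1-e^{-t})$, which exceeds $\sqrt{2}$ for $t>\ln(2+\sqrt{2})$.

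Second, your separate treatment of the Dirac component is genuinely needed. The paper applies the lemma to biproper maps such as $s(sI_n-A_m)^{-1}$ in the definition of $\rho_{\textup{in}}$ and $\mathcal{C}(s)B^\dagger(sI_n-A_e)$ in the bound defining $\gamma_2$.

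Your reading of the second claim as $\lonenorm{\Tau\mathcal{H}(s)}\le\infnorm{\Tau}\lonenorm{\mathcal{H}(s)}$ is also the right one. In the proof of Lemma~\ref{lem:refine_bnd_xi_w_Txi}, the displayed $\linfnorm{\cdot}$ bounds are used as $\mathcal{L}_1$ bounds in the transformed stability condition.
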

\begin{lemma} \label{lem:ref-xr-ur-bnd}
For the closed-loop reference system in \cref{eq:ref-system} subject to \cref{assump:lipschitz-bnd-fi} and the stability condition in  \cref{eq:l1-stability-condition}, we have
\vspace{-4mm}\\
\begin{align}
 \linfnorm{x_{\textup{r}}} &<\rho_r \label{eq:xref-bnd}, \\ \linfnorm{u_{\textup{r}}} &<\rho_{ur}, \label{eq:uref-bnd}
\end{align}where $\rho_r$ is introduced in \cref{eq:l1-stability-condition}, and $\rho_{ur}$ is defined in \cref{eq:rho_ur-defn}.
\end{lemma}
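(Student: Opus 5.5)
We need to prove Lemma \ref{lem:ref-xr-ur-bnd}: $\linfnorm{x_\rt}<\rho_r$ and $\linfnorm{u_\rt}<\rho_{ur}$. The argument is the standard contradiction argument for $\lone$ reference systems. The core step is a decomposition of the reference state.

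Writing \cref{eq:ref-system} in the Laplace domain gives
\[
x_\rt(s)=\mcG_{xm}(s)\laplace{f(t,x_\rt(t))}+\mcH_{xm}(s)u_{\text{opt}}(s)+\mcH_{xu}(s)w(s)+\xin(s),
\]
where $\xin(s)=(sI_n-A_m)^{-1}x_0$. This uses $B(u_\rt+f)=B(I_m-\mcC(s))f$ together with the definitions \cref{eq:Hxm_defn,eq:Hxu-defn,eq:Gm-defn}. Since $x_\rt(0)=x_0$, the initial-condition term has $\infnorm{\xin(t)}\le\rho_{\textup{in}}$ for all $t$. This follows from \cref{eq:rho_in}, viewing $\xin(s)=s(sI_n-A_m)^{-1}\,(x_0/s)$ as the response of a stable proper system to the step input $x_0/s$, whose $\mathcal L_\infty$ norm is $\infnorm{x_0}$. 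We then apply \cref{lem:L1-Linf-relation}.

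We argue by contradiction. At $t=0$ we have $\infnorm{x_\rt(0)}=\infnorm{x_0}\le\rho_{\textup{in}}<\rho_r$, because \cref{eq:l1-stability-condition} forces $\rho_r>\rho_{\textup{in}}$. The $\mcX_0$ bound gives this; if needed we note that $\lonenorm{s(sI-A_m)^{-1}}\ge 1$, since the impulse response of $s(sI-A_m)^{-1}$ contains a $\delta$ with identity weight. Suppose the bound \cref{eq:xref-bnd} fails. Continuity of $x_\rt$ then yields a first time $\tau>0$ with $\infnorm{x_\rt(\tau)}=\rho_r$ and $\linfnormtruc{x_\rt}{\tau}\le\rho_r$. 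On $[0,\tau]$ we have $x_\rt(t)\in\mcX_r=\Omega(\rho_r)$, so \cref{eq:bnd-f} gives $\linfnormtruc{f(\cdot,x_\rt)}{\tau}\le b_{f,\mcX_r}$. Causality together with \cref{lem:L1-Linf-relation} then give
\[
\linfnormtruc{x_\rt}{\tau}\le \lonenorm{\mcG_{xm}}b_{f,\mcX_r}+\lonenorm{\mcH_{xm}}\linfnorm{u_{\text{opt}}}+\lonenorm{\mcH_{xu}}b_w+\rho_{\textup{in}}<\rho_r
\]
by \cref{eq:l1-stability-condition}. This contradicts $\infnorm{x_\rt(\tau)}=\rho_r$, so \cref{eq:xref-bnd} holds.

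For the input bound, $u_\rt(s)=-\mcC(s)\laplace{f(t,x_\rt(t))}$. Since $x_\rt(t)\in\mcX_r$ for all $t$, \cref{lem:L1-Linf-relation} gives $\linfnorm{u_\rt}\le\lonenorm{\mcC(s)}b_{f,\mcX_r}=\rho_{ur}$.

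The strict inequality in \cref{eq:uref-bnd} is the delicate point. For a strictly proper first-order filter it can be argued as follows: the convolution of a bounded function with an integrable kernel $k_f e^{-k_f t}$, starting from zero initial condition, never attains the full sup, since $\int_0^t k_fe^{-k_f(t-s)}ds<1$. Otherwise one settles for the nonstrict bound, which is all that later results use.

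The main obstacle I expect is the careful handling of the initial condition and of the first-exit-time argument, namely ensuring that $\tau$ exists and that the truncated-norm estimates apply on $[0,\tau]$. The remaining steps are routine applications of \cref{lem:L1-Linf-relation}.
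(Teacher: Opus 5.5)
Your proposal is correct and follows essentially the same route as the paper's proof: the Laplace-domain decomposition of $x_{\textup{r}}$, the $\rho_{\textup{in}}$ bound on the free response, a first-hitting-time contradiction against the strict condition \cref{eq:l1-stability-condition}, and then the filter bound for $u_{\textup{r}}$ (you additionally justify $\infnorm{x_0}<\rho_r$, which the paper takes for granted). Your caution about strictness for $u_{\textup{r}}$ is warranted, but your convolution argument only gives $\infnorm{u_{\textup{r}}(t)}<\rho_{ur}$ at each $t$, not a strict bound on the supremum, and unlike for $x_{\textup{r}}$ it cannot be upgraded (for $x_{\textup{r}}$, once $x_{\textup{r}}(t)\in\Omega(\rho_r)$ for all $t$, rerunning the estimate on $[0,\infty)$ with the strict inequality in \cref{eq:l1-stability-condition} does give a strict bound on the supremum): with $m=1$, $f\equiv 1$ and $b_{f,\mcX_r}=1$ one gets $u_{\textup{r}}(t)=-(1-e^{-k_f^1 t})$ and $\linfnorm{u_{\textup{r}}}=\rho_{ur}$, so the nonstrict bound---which is all the paper's ``immediately follows'' delivers and all that later results use---is the right conclusion.
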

\vspace{-3mm}
\begin{proof}
For notation brevity, we define:
\begin{equation}\label{eq:eta-eta_r-defn}
    \eta(t) \trieq f(t,x(t)),\quad \eta_\rt(t) \trieq f(t,x_{\rt}(t)).
\end{equation}
$\xin(t)$ is defined as the state of the system 
% \begin{equation}\label{eq:xin-dynamics}
    $\dot{x}_{\textup{in}}(t) = A_m \xin(t), \  \xin(0) = x_0,$
% \end{equation}
%signal with Laplace transform 
and we have $\xin(s) \trieq (sI_n-A_m)^{-1} x_0$. 
Let's first rewrite the dynamics of the reference system in \cref{eq:ref-system} in the Laplace domain:
\begin{align}\label{eq:xr-expression-w-G-f}
   \xr(s) = \mcG_{xm}(s)\laplace{f(t,x_{\textup{r}}(t))} \nonumber + \mcH_{xm}(s)u_{\text{opt}}(s)\\+ \mcH_{xu}(s)w(s) + \xin(s).
\end{align}
Considering \cref{eq:rho_in}, $A_m$ is Hurwitz and $\mcX_0$ is compact, we have $\linfnorm{\xin}\leq \rhoin$ according to \cref{lem:L1-Linf-relation}.
As a result, according to \cref{lem:L1-Linf-relation} and because $\linfnorm{w}\leq b_w$, for any $\xi>0$, we have 
\begin{align}\label{eq:x_r_linfnorm_truc_bound}
    \linfnormtruc{x_{\textup{r}}}{\xi}\! &\leq \lonenorm{\mcG_{xm}(s)} \!\linfnormtruc{\eta_\rt}{\xi}\! +\! \lonenorm{\mcH_{xm}(s)}\linfnorm{u_{\text{opt}}}\nonumber\\&+ \! \lonenorm{\mcH_{xu}(s)}\!b_w\!+\!\linfnorm{\xin},
\end{align}
where $\eta_\rt(t)$ is defined in \cref{eq:eta-eta_r-defn}. Assume by contradiction that \cref{eq:xref-bnd} is not true. Since $x_{\textup{r}}(t)$ is continuous and $\infnorm{x_{\textup{r}}(0)}<\rho_r$, there exists a $\tau\!>\!0$ such that 
\begin{equation}
    \infnorm{x_{\textup{r}}(t)}<\rho_r, \ \forall t\in[0,\tau),\ \textup{and}\ \infnorm{x_{\textup{r}}(\tau)}=\rho_r,
\end{equation} 
which implies 
$x_{\rt}(t)\in \Omega(\rho_r)$ for any $t$ in $[0,\tau]$. With \cref{eq:bnd-f} from \cref{assump:lipschitz-bnd-fi}, it follows that
\begin{equation}\label{eq:f-xr-rhor-tau}
    \linfnormtruc{\eta_\rt}{\tau}\leq b_{f,\Omega(\rho_r)}.
\end{equation}
By plugging the inequality above into  \cref{eq:x_r_linfnorm_truc_bound}, we have 
\begin{align}
    \rho_r &\leq \lonenorm{\mcG_{xm}(s)} b_{f,\Omega(\rho_r)}+ \lonenorm{\mcH_{xm}(s)}\linfnorm{u_{\text{opt}}} \nonumber\\&+ \lonenorm{\mcH_{xu}(s)}b_w +  \rho_{\textup{in}},
\end{align}
which contradicts the condition \cref{eq:l1-stability-condition}. Therefore, \cref{eq:xref-bnd} is true. Equation \cref{eq:uref-bnd} immediately follows from \cref{eq:xref-bnd} and \cref{eq:ref-system}. 
\end{proof}

\begin{lemma}\label{lem:xtilde-bnd}
Given the uncertain system \cref{eq:dynamics-uncertain} subject to \cref{assump:lipschitz-bnd-fi}, the state predictor \cref{eq:state-predictor} and the adaptive law \cref{eq:adaptive_law}, if 
\begin{equation}\label{eq:x-u-tau-bnd-assump-in-lemma}
    \linfnormtruc{x}{\tau}\leq \rho, \quad \linfnormtruc{u_{\at}}{\tau}\leq \rho_{u_{\at}},
\end{equation}
with $\rho$ and $\rho_{u_{\at}}$ defined in \cref{eq:rho-u-defn,eq:rho-defn}, respectively, then 
\begin{align}
  \linfnormtruc{\tilx}{\tau} \leq \gamma_0(T). \label{eq:tilx_tau-leq-gamma0}
\end{align}
\end{lemma}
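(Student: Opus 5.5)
We derive the prediction-error dynamics and bound them over each sampling interval $[iT,(i+1)T)$ using the piecewise-constant estimation law \cref{eq:adaptive_law}. Subtracting \cref{eq:dynamics-uncertain} from \cref{eq:state-predictor}, the terms $A_m x$, $Bu_{\text{opt}}$ and $Bu_{\at}$ cancel, leaving
\[
\dot{\tilx}(t) = A_e\tilx(t) + \hsigma(t) - \eta(t) - B_u w(t), \qquad \tilx(0)=0,
\]
where $\eta(t)\trieq Bf(t,x(t))$ is the matched part of the lumped uncertainty. Under the hypothesis \cref{eq:x-u-tau-bnd-assump-in-lemma} we have $x(t)\in\Omega(\rho)=\mcX_a$ for $t\in[0,\tau]$. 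Hence \cref{eq:bnd-f} gives $\infnorm{f(t,x(t))}\le b_{f,\mcX_a}$, and \cref{eq:bnd-w} gives $\infnorm{w(t)}\le b_w$ on $[0,\tau]$. The bound on $u_{\at}$ in the hypothesis is not really needed, beyond guaranteeing that the solutions exist on $[0,\tau]$.

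On the first interval $[0,T)$ we have $\hsigma(t)=\hsigma(0)=0$, since $\tilx(0)=0$. Variation of constants then gives
\[
\tilx(t) = -\int_0^t e^{A_e(t-s)}\bigl(Bf(s,x(s))+B_u w(s)\bigr)\,ds.
\]
By the definitions \cref{eq:alpha_0-defn,eq:alpha_1-defn}, this yields $\infnorm{\tilx(t)}\le b_{f,\mcX_a}\bar\alpha_0(T)+b_w\bar\alpha_1(T)$, which we abbreviate as $\bar\gamma$. By continuity the same bound holds at $t=T$.

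For a general interval $[iT,(i+1)T)$ with $i\ge1$, we write
\begin{align*}
\tilx(iT+t) ={}& e^{A_e t}\tilx(iT) + \int_0^t e^{A_e(t-s)}\hsigma(iT)\,ds\\
&- \int_0^t e^{A_e(t-s)}\bigl(Bf+B_uw\bigr)(iT+s)\,ds.
\end{align*}
The key step is to use the adaptive law at the next sample instant. Substituting $\hsigma(iT)=-\Phi^{-1}(T)e^{A_eT}\tilx(iT)$ and noting $\int_0^T e^{A_e(T-s)}ds=\Phi(T)$, the first two terms cancel exactly at $t=T$. Hence $\tilx((i+1)T)$ equals only the uncertainty integral over $[iT,(i+1)T]$, so $\infnorm{\tilx(jT)}\le\bar\gamma$ for every sample instant $jT\le\tau$.

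Then, for any intermediate $t\in[0,T)$, we bound the three terms separately. The first is at most $\bar\alpha_2(T)\bar\gamma$, by \cref{eq:alpha_2-defn}. The second equals $-\int_0^t e^{A_e(t-s)}\Phi^{-1}(T)e^{A_eT}\tilx(iT)\,ds$, which is at most $\bar\alpha_3(T)\bar\gamma$, by \cref{eq:alpha_3-defn}. The third is at most $\bar\gamma$. Summing gives $\infnorm{\tilx(t)}\le\bar\gamma(\bar\alpha_2(T)+\bar\alpha_3(T)+1)=\gamma_0(T)$, as in \cref{eq:gamma_0-T-defn}. This holds for all $t\in[0,\tau]$, including the truncated final interval, since the maxima in the $\bar\alpha$'s are taken over $[0,T]$.

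The main obstacle is the bookkeeping in the cancellation step. We must check that $\Phi(T)$ is invertible, which holds because $A_e$ is Hurwitz and its eigenvalues $\lambda$ give nonzero factors $(e^{\lambda T}-1)/\lambda$. We must also check that the norms of $e^{A_e(\cdot)}B$ and $e^{A_e(\cdot)}B_u$ enter only through $\bar\alpha_0$ and $\bar\alpha_1$, and that the sample-instant bound is applied only for $iT\le\tau$, so that the hypothesis $x\in\mcX_a$ remains valid throughout.
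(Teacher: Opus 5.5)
Your proposal is correct and follows essentially the same argument as the paper. Both use variation of constants on each interval $[iT,(i+1)T)$, with the adaptive law cancelling the first two terms exactly at $t=T$, so that $\infnorm{\tilx(iT)}\le \bar\alpha_0(T)b_{f,\mcX_a}+\bar\alpha_1(T)b_w$, and then the three-term bound giving $\gamma_0(T)$ between samples. Your extra checks are valid bookkeeping that the paper leaves implicit: the invertibility of $\Phi(T)$, the separate treatment of the first interval, and the observation that the bound on $u_{\at}$ is never used.
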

\vspace{-3mm}
\begin{proof}
Based on \cref{eq:x-u-tau-bnd-assump-in-lemma}, we have $x(t)\in \Omega(\rho)$ for any $t$ in $[0,\tau]$. Due to \cref{eq:bnd-f} from \cref{assump:lipschitz-bnd-fi}, it follows that
\begin{equation}\label{eq:f-bnd-in-0-tau}
    \infnorm{f(t,x(t))} =\infnorm{\eta(t)} \leq b_{f,\Omega(\rho)},\quad \forall t\in [0,\tau].
\end{equation} %\infnorm{f(t,x(t))}
From \cref{eq:dynamics-uncertain,eq:state-predictor}, the prediction error dynamics are given by
{\begin{equation}\label{eq:prediction-error}
\begin{aligned}
      \dot{\tilx}(t) &= A_e \tilx(t) + \hsigma(t) - B f(t,x(t))-\bar{B}_u w(t). %\dot{\tilx}(t) &= A_e \tilx(t) + B\left(\hsigma_1(t) - f(t,x(t))\right) + B^\perp \hsigma_2(t). 
      % B^\perp \hsigma_2(t)
\end{aligned}
\end{equation}}
For any $0\leq t <T$ and $i\in\mbZ_0$, due to \cref{eq:prediction-error}, we have
{
\begin{align}
    \tilx(iT+t) &= ~e^{A_et}\tilx(iT)+\int_{iT}^{iT+t} e^{A_e(iT+t-\xi)}%[B \ B^\perp] 
   % \begin{bmatrix} 
   % \hsigma_1(iT) \\
   % \hsigma_2(iT)
   % \end{bmatrix}
\hsigma(iT)   
   d\xi   \nonumber\\&- \int_{iT}^{iT+t} e^{A_e(iT+t-\xi)}\left(B \eta(\xi) + \bar{B}_u w(\xi)  \right) d\xi \nonumber \\ 
    & =  ~e^{A_et}\tilx(iT)+\int_{0}^{t} e^{A_e(t-\xi)}%[B \ B^\perp]
   % \begin{bmatrix}
   % \hsigma_1(iT) \\
   % \hsigma_2(iT)
   % \end{bmatrix}
   \hsigma(iT) 
   d\xi   \nonumber \\&- \int_{0}^{t} e^{A_e(t-\xi)}\left(B \eta(iT+\xi) + \bar{B}_u w(iT+\xi) \right ) d\xi. 
    \label{eq:tilx-iTplust}
\end{align}}
According to the adaptive law \cref{eq:adaptive_law}, the preceding equality implies
\begin{align}
 &\tilx((i+1)T) \!=\nonumber \\ \!&- \!\int_{0}^{T}\! e^{A_e(T-\xi)}\left(B \eta(iT+\xi) + \bar{B}_u w(iT+\xi) \right ) d\xi.
\end{align}
Therefore, considering \cref{eq:f-bnd-in-0-tau}, for any $i\in \mbZ_0$ with $(i+1)T\leq \tau$, we have
\begin{align}
    \infnorm{\tilx((i+1)T)} &\leq \bar \alpha_0(T) b_{f,\Omega(\rho)} +\bar \alpha_1(T) b_w,
\end{align}
where $\bar \alpha_0(T)$ and $\bar \alpha_1(T)$ are defined in \cref{eq:alpha_0-defn} and \cref{eq:alpha_1-defn}, respectively. 
Since $\tilx(0)=0$, 
\begin{align}\label{eq:tilx-iT-bnd}
    \infnorm{\tilx(iT)}\leq \bar \alpha_0(T) b_{f,\Omega(\rho)} +\bar \alpha_1(T) b_w \leq \gamma_0(T), \nonumber \\\; \forall  iT\leq \tau, i\in \mbZ_0.
\end{align}
Now we consider any $t\in(0,T]$ such that $iT+t\leq \tau$ with $i\in \mbZ_0$. From \cref{eq:tilx-iTplust} and the adaptive law \cref{eq:adaptive_law}, we have
\begin{align}
    &\infnorm{\tilx(iT+t)} \leq  \infnorm{e^{A_et}}\infnorm{\tilx(iT)} \hspace{-2cm}\nonumber\\ & +  \int_{0}^{t} \infnorm{e^{A_e(t-\xi)}\Phi^{-1}(T)e^{A_eT}} \infnorm{\tilx(iT)}  d\xi   \nonumber \\
  & +\int_0^t \infnorm{e^{A_e(t-\xi)}B} \infnorm{\eta(iT+\xi)} d\xi \nonumber\\&+ \int_0^t \infnorm{e^{A_e(t-\xi)}\bar{B}_u} \infnorm{w(iT+\xi)} d\xi \nonumber \\
 &\leq   \left(\bar \alpha_2(T)+\bar \alpha_3(T)+1 \right) (\bar\alpha_0(T) b_{f,\Omega(\rho)} +\bar \alpha_1(T)b_w) \nonumber\\ &= \gamma_0(T),  \label{eq:tilx-iT+t-bnd} %+ \bar \alpha_3(T)
\end{align}
where $\bar\alpha_i(T)$ ($i=0,1,2,3$) are defined in \cref{eq:alpha_0-defn,eq:alpha_1-defn,eq:alpha_2-defn}, and the last inequality is partially due to the fact that $\int_0^t \infnorm{e^{A_e(t-\xi)}B}d\xi\leq\int_0^T \infnorm{e^{A_e(T-\xi)}B}\!d\xi\!=\!\bar\alpha_0(T) $ and $\int_0^t \infnorm{e^{A_e(t-\xi)}\bar{B}_u}d\xi\leq\int_0^T \infnorm{e^{A_e(T-\xi)}\bar{B}_u}\!d\xi\!=\!\bar\alpha_1(T) $. Equations \cref{eq:tilx-iT-bnd,eq:tilx-iT+t-bnd} imply \cref{eq:tilx_tau-leq-gamma0}. 
\end{proof}
\vspace{-3mm}
\subsection{Proof of \cref{them:x-xref-bnd}}
\vspace{-3mm}
\begin{proof}
\cref{eq:xref-x-bnd,eq:uref-u-bnd} are proved by contradiction. Assume that \cref{eq:xref-x-bnd} or \cref{eq:uref-u-bnd} do not hold. Since $\infnorm{x_{\textup{r}}(0)-x(0)}=0<\gamma_1$, $\infnorm{u_{\textup{r}}(0)-u_{\at}(0)}=0<\gamma_2$, and $x(t)$, $u_{\at}(t)$, $x_{\textup{r}}(t)$ and $u_{\textup{r}}(t)$ are all continuous, there must exist a time instant $\tau$ such that
\begin{equation}
 \hspace{-2mm}   \infnorm{x_{\textup{r}}(\tau)-x(\tau)} = \gamma_1 \textup{ or }  \infnorm{u_{\textup{r}}(\tau)-u(\tau)} = \gamma_2,
\end{equation}
and
\begin{equation} %\label{eq:xr-x-uar-u-in-0-tau}
   \hspace{-3mm} \infnorm{x_{\textup{r}}(t)\!-\!x(t)}\! <\! \gamma_1, \  \infnorm{u_{\textup{r}}(t)\!-\!u(t)} \!<\! \gamma_2, \ \forall t\in [0,\tau).
\end{equation}
It follows that at least one of the following equalities must hold:
\begin{equation}\label{eq:xr-x-ur-u-linfnorm-tau}
    \linfnormtruc{x_{\textup{r}}-x}{\tau} = \gamma_1, \quad \linfnormtruc{u_{\textup{r}}-u_{\at}}{\tau} = \gamma_2.
\end{equation}
%and furthermore, at least one of the two equations in \cref{eq:xr-x-ur-u-linfnorm-tau} takes the equal sign. 
According to \cref{lem:ref-xr-ur-bnd}, we have $\linfnorm{x_{\textup{r}}} \leq \rho_r<\rho $ and according to \cref{eq:xr-x-ur-u-linfnorm-tau}, we have $\linfnorm{x}\leq \rho_r+\gamma_1 = \rho$. Further considering \cref{eq:lipschitz-cond-f} that results from \cref{assump:lipschitz-bnd-fi}, we achieve
\begin{equation}\label{eq:f-xr-x-tau-bnd}
 \hspace{-1mm}   \infnorm{f(t,x_{\textup{r}}(t))\!-\!f(t,x(t))} \!\leq\! L_{f,\Omega(\rho)}\! \linfnormtruc{x_{\textup{r}}\!-\!x}{\tau}\!, \ \forall t \!\in \![0,\tau].
\end{equation}
The control laws in \cref{eq:l1-control-law} and  \cref{eq:ref-system} indicate that
{
\begin{align}
&u_{\textup{r}}(s) - u_{\at}(s)  = -\mcC(s)\laplace{f(t,x_{\textup{r}})-B^\dagger \hsigma(t)} = \nonumber\\& \mcC(s)\laplace{f(t,x)\!-\!f(t,x_{\textup{r}})} + \mcC(s)(B^\dagger \hsigma(s) \!-\! \laplace{f(t,x)}).\label{eq:omega-ur-u}
\end{align}
From equation \cref{eq:prediction-error}, we have
\begin{equation}\label{eq:hsigma-sigma-s}
    B^\dagger \hsigma(s) - \laplace{f(t,x)}) = B^\dagger(sI_n-A_e)\tilx(s).
\end{equation}}
Considering \cref{eq:dynamics-uncertain}, \cref{eq:l1-control-law} and \cref{eq:hsigma-sigma-s}, we have 
\begin{align}
x(s) \!&= \mcG_{xm}(s) \laplace{f(t,x)} \!+ \!\mcH_{xm}(s)u_{\text{opt}}(s) \!+\! \mcH_{xu}(s)w(s) \nonumber \\ &+\xin(s) 
     - \mcH_{xm}(s)\mcC(s)B^\dagger(sI_n-A_e)\tilx(s),
\end{align}
  which, together with \cref{eq:xr-expression-w-G-f}, implies
  \begin{align}
x_{\textup{r}}(s) - x(s) = \mcG_{xm}(s) \laplace{f(t,x_{\textup{r}})-f(t,x)} \nonumber \\
     +\mcH_{xm}(s)\mcC(s)B^\dagger(sI_n-A_e)\tilx(s).
\end{align}
Therefore, due to \cref{eq:f-xr-x-tau-bnd} and \cref{lem:xtilde-bnd}, we have 
\begin{align}
    \linfnormtruc{x_{\textup{r}}-x}{\tau} &\leq  \lonenorm{\mcG_{xm}} L_{f,\Omega(\rho)} \linfnormtruc{x_{\textup{r}}-x}{\tau}  \\&+\! \lonenorm{\mcH_{xm}(s)\mcC(s)B^\dagger(sI_n-A_e)}\!\gamma_0(T).\nonumber
\end{align}
The preceding equation, together with \cref{eq:l1-stability-condition-Lf}, leads to 
\begin{align}
   \hspace{-2mm} \linfnormtruc{x_{\textup{r}}-x}{\tau} &\!\leq\! \frac{\lonenorm{\mcH_{xm}(s)\mcC(s)B^\dagger(sI_n\!-\!A_e)}}{1-    \lonenorm{\mcG_{xm}}  L_{f,\Omega(\rho)}} \gamma_0(T),
\end{align}
which, together with the sample time constraint \cref{eq:T-constraint}, indicates that 
\begin{equation}\label{eq:xr-x<gamma1}
    \linfnormtruc{x_{\textup{r}}-x}{\tau}  < \gamma_1. 
\end{equation}

On the other hand, it follows from \cref{eq:f-xr-x-tau-bnd,eq:omega-ur-u,eq:hsigma-sigma-s,eq:xr-x<gamma1} that
\begin{align*}
  &\linfnormtruc{ u_{\textup{r}}- u_{\at}}{\tau}   \leq  \lonenorm{\mcC(s)} L_{f,\Omega(\rho)}  \linfnormtruc{x_{\textup{r}}-x}{\tau} \\&+ \lonenorm{\mcC(s)B^\dagger(sI_n-A_e)}\linfnormtruc{\tilx}{\tau} \nonumber \\
    & < \lonenorm{\mcC(s)} L_{f,\Omega(\rho)} \gamma_1 + \lonenorm{\mcC(s)B^\dagger(sI_n-A_e)}\gamma_0(T).
\end{align*}
Further considering the definition in \cref{eq:gamma2-defn}, we have
\begin{equation}\label{eq:ur-u<gamma2}
    \linfnormtruc{ u_{\textup{r}}- u_{\at}}{\tau}<\gamma_2.
\end{equation}
Now, \cref{eq:xr-x<gamma1} and \cref{eq:ur-u<gamma2} contradict the \cref{eq:xr-x-ur-u-linfnorm-tau}, which shows that \cref{eq:xref-x-bnd,eq:uref-u-bnd} holds. The bounds in \cref{eq:x-bnd,eq:ua-bnd} follow directly from  \cref{eq:xref-x-bnd,eq:uref-u-bnd,eq:xref-bnd,eq:uref-bnd} and the definitions of $\rho$ and $\rho_{u_{\at}}$ in \cref{eq:rho-defn,eq:rho-u-defn}.
%, while  \cref{eq:yref-y-bnd} follows from \cref{eq:xref-x-bnd} and  $y_{\textup{r}}(t)-y(t) = C(x_{\textup{r}}(t)-x(t))$. 
The proof is complete. 
\end{proof}
\vspace{-3mm}
\subsection{Proof of \cref{lem:ref-id-bnd}}\vspace{-3mm}
\begin{proof}
Considering \cref{eq:nominal system,eq:ref-system}, we have 
\begin{align}\label{eq:xr-xn-expression}
\hspace{-2mm}   \xr(s)-\xn(s) &= G_{xm}(s) \laplace{f(t,x_{\textup{r}})} + \mcH_{xu}(s)w(s) \nonumber\\&= G_{xm}(s) \laplace{\eta_{\textup{r}}(t)}+ \mcH_{xu}(s)w(s)
\end{align}
From \cref{lem:ref-xr-ur-bnd}, we have $x_{\rt}(t) \in \Omega(\rho_r)$ for any $t\geq0$. Further because
 \cref{eq:bnd-f} that results from \cref{assump:lipschitz-bnd-fi}, it follows that $\linfnorm{\eta_\rt}\leq b_{f,\Omega(\rho_r)} $, which, together with \cref{eq:xr-xn-expression}, leads to \cref{eq:xref-xid-bnd}. 
 %The equation \cref{eq:yref-yid-bnd} follows from the fact that $y_{\textup{r}}(t)-y_{\textup{in}}(t) = C(x_{\textup{r}}(t)-\xn(t))$.
\end{proof}
\vspace{-3mm}
\subsection{Proof of \cref{lem:refine_bnd_xi_w_Txi}}\vspace{-3mm}
\begin{proof}
Given any $\Tau_x^i$ satisfying \cref{eq:Tx-i-cts} with an arbitrary $i\in\Zn$, it follows that $\infnorm{\Tau_x^i}= 1$. As a result, using the transformation \cref{eq:coordinate-trans} and considering \cref{eq:Hxm-Hxv-Gxm-check-defn,eq:check-rhoin-defn} and \cref{lem:L1-Linf-relation}, the following inequalities hold
\begin{subequations}\label{eq:Hxm-Hxv-Gxm-check-original-relation}
\begin{align}
      \hspace{-4mm}   \linfnorm{\mcH_{\check xm}^i(s)}\! & \! \leq \! \infnorm{\Tau_x^i}\! \linfnorm{\mcH_{xm}(s)}\! \!=\! \linfnorm{\mcH_{xm}(s)}\!,  \label{eq:Hxm-check-original-relation} \\
   \hspace{-4mm}   \linfnorm{\mcH_{\check xu}^i(s)}\! & \! \leq \! \infnorm{\Tau_x^i} \! \linfnorm{\mcH_{xu}(s)} \!=\!  \linfnorm{\mcH_{xu}(s)}\!, \label{eq:Hxv-check-original-relation}\\
\hspace{-2mm}  \linfnorm{\mcG_{\check xm}^i(s)}\! & \! \leq \!  \infnorm{\Tau_x^i}  \!\linfnorm{\mcG_{xm}(s)} \!=\!  \linfnorm{\mcG_{xm}(s)}\!, \label{eq:Gxm-check-original-relation} \\
\hspace{-2mm}   \check \rho_{\textup{in}}^i & \! \leq \! \infnorm{\Tau_x^i}  \rhoin \!=\! \rhoin. \label{eq:rhoin-check-original-relation}
%   \\  \check \rho_{\textup{in}}^i & \leq \lonenorm{\mcH_{x v}(s)}\linfnorm{v} + \rho_{\textup{in}}. \label{eq:rhon-check-original-relation}
\end{align}
\end{subequations}
%With the parameter $\mcC(s)$) fixed, let us now analyze the relation between the conditions \cref{eq:l1-stability-condition} for the original system and those for the transformed system. 
The property of $\xr(t)\in \mcX_r$ for any $t\geq0$ from \cref{lem:ref-xr-ur-bnd} and \cref{eq:coordinate-trans} together imply $\check x_{\rt}(t)\in \check \mcX_r$ for any $t\geq0$, where $\check \mcX_r$ is defined via \cref{eq:check-Z-defn}. 
% From \cref{eq:f-checkf-relation,eq:check-Z-defn}, we have 
% \begin{equation}\label{eq:b-checkf-f-equal}
%     b_{\check f, \check \Omega(\rho_r)} = b_{f, \Omega(\rho_r)}.
% \end{equation}
Considering \cref{eq:f-checkf-relation,eq:check-Z-defn}, for any compact set $\mcX_r$, we have
\begin{equation}\label{eq:b-checkf-f-equal}
    b_{\check f, \check \mcX_r} = b_{f, \mcX_r}.
\end{equation}
Suppose that constants $\rho_r$ and $\linfnorm{u_{\text{opt}}}$ satisfy \cref{eq:l1-stability-condition}.  
According to \cref{eq:b-checkf-f-equal,eq:Hxm-Hxv-Gxm-check-original-relation}, with $\check \rho_r^i = \rho_r$ and the same $\linfnorm{u_{\text{opt}}}$, \cref{eq:l1-stability-condition-transformed} is satisfied. %with $\mcG_{\check xm}^i(s)$, $\mcH_{\check xv}^i(s)$ and $\check \rho_{\textup{in}}^i$ that correspond to . 

In addition, if \cref{eq:l1-stability-condition-transformed} holds, through the application of \cref{lem:ref-xr-ur-bnd} to the transformed reference system \cref{eq:ref-system-transformed}, we obtain $\linfnorm{\check x_{\rt}}\leq \check \rho_r^i$, which further implies that $\abs{\check x_{\rt,i}(t)}\leq \check \rho_r^i$ for any $t\geq 0$. Since $\check x_{\rt,i}(t) =x_{\rt,i}(t) $ due to the constraint \cref{eq:Tx-i-cts} on $\Tau_x^i$, we have \cref{eq:xr-i-bnd-from-trans}.
%Similarly, considering $\check x_{\intxt,i}(t) =  x_{\intxt,i}(t)$ and $\check x_{\nt,i}(t) =  x_{\nt,i}(t)$ as well as \cref{eq:check-xn-bnd-w-rhon-defn,eq:check-xin-bnd-check-rhoin}, we have \cref{eq:xin-i-bnd-from-trans,eq:xn-i-bnd-from-trans}. 
Equation \cref{eq:xr-i-bnd-from-trans} is equivalent to $\xr(t)\in\mcX_r$ for any $t\geq 0$, with the re-definition of $\mcX_r$ in \cref{eq:Xr-defn}. 
\begin{comment}
As a result,  the bound on $f(t,\xr(t))$ for any $\xr(t)\in\Omega(\rho_r)$, i.e., $b_{f,\Omega(\rho_r)}$, and the bound on $f(t,\check x_{\rt}(t))$ for any $\check x_{\rt}(t)\in\check \Omega(\rho_r)$, i.e., $b_{\check f,\Omega(\rho_r)}$, 
can now be tightened to be $b_{f,\mcX_r}$ and $b_{\check f,\check \mcX_r}$, respectively.
\end{comment}
Following the proof of \cref{lem:ref-id-bnd}, we are able to obtain $   \left\| {\check x_{\textup{r}} - \check x_{\nt}} \right\|_{{\mathcal L}{_\infty }}  \leq  \lonenorm{\mcG_{ \check xm}} b_{\check f, \check \mcX_r} + \lonenorm{\mcH_{ \check xu}} b_w= \lonenorm{\mcG_{ \check xm} (s)}b_{f, \mcX_r} + \lonenorm{\mcH_{ \check xu}} b_w$, where the equality comes from \cref{eq:b-checkf-f-equal}. Further considering $\check x_{\rt,i}(t) =x_{\rt,i}(t) $ and $\check x_{\nt,i}(t) = x_{\nt,i}(t)$ due to the constraint \cref{eq:Tx-i-cts} on $\Tau_x^i$, we finally have \cref{eq:xri-xni-bnd-from-trans}.
\end{proof}
\vspace{-3mm}
\subsection{Proof of \cref{them:xi-uai-bnd}}\vspace{-3mm}
\begin{proof}
For each $i\in \Zn$, \cref{lem:refine_bnd_xi_w_Txi} implies that %\cref{eq:xn-i-bnd-from-trans-2}, and that 
$\abs{x_{\rt,i}(t)} \leq \check \rho_r^i$ and $\abs{x_{\textup{r,i}}(t)-x_{\nt,i}(t)} \leq \lonenorm{\mcG_{ \check xm}^i(s)}b_{f,\mcX_r} + \lonenorm{\mcH^i_{ \check xu}} b_w$ for all  $t\geq0$. From \cref{them:x-xref-bnd}, it follows that $\abs{x_{\rt,i}(t) - x_{i}(t) } \leq \gamma_1$ for any $t\geq 0$ and any $i\in \Zn$. Thus, \cref{eq:xi-xni-bnd-from-trans-w-tilX-defn} is true. %Further considering the fact that $y_j(t)-y_{\nt,j}(t) = \sum_{i=1}^nC[j,i]\left(x_i(t)-x_{\nt,i}(t)\right)$,  we obtain \cref{eq:yi-yni-bnd-from-trans}.
On the other hand, \cref{them:x-xref-bnd} indicates that $\abs{u_{\rt,j}(t) - u_{\at,j}(t) } \leq \gamma_2$ for any $t\geq 0$. Property \cref{eq:ref-system} and the structure of $\mcC(s)$ in \cref{eq:filter-defn} lead to
\begin{equation}
    u_{\rt,j}(s)=-\mcC_j(s) \laplace{f_j(t,\xr(t))}, \quad \forall j\in\Zm.
\end{equation}
Therefore, given a set $\mcX_r$ such that $\xr(t)\!\in\! \mcX_r$ for any $t\!\geq\! 0$, from \cref{assump:lipschitz-bnd-fi,lem:L1-Linf-relation}, the following property holds 
\begin{equation}\label{eq:uar-i-bnd}
 \abs{u_{\rt,j}(t)} \leq \lonenorm{\mcC_j(s)} b_{f_j, \mcX_r}, \quad \forall t\geq 0, \ \forall j\in\Zm.
\end{equation}
Thus, for any $j\in\Zm$, we have \cref{eq:ua-i-bnd-w-Ua-defn}. Finally, since $u(t) - u_{\nt}(t) = K_x(x(t)-\xn(t)) + u_{\at}(t)$,
we achieve \cref{eq:uj-unj-bnd-w-tilU-defn}. The proof is complete. 
\end{proof}
\end{document}